\documentclass[11pt,letterpaper]{article}
\usepackage[utf8]{inputenc}
\usepackage[margin=1.in]{geometry}
\usepackage{amsthm}
\usepackage{xcolor}
\usepackage{mathtools,amsmath,amssymb}
\usepackage{physics, graphicx}
\usepackage{bm}
\usepackage[ruled,vlined]{algorithm2e}
\usepackage{booktabs}
\usepackage{caption}
\usepackage{rotating}
\usepackage{enumitem}
\usepackage{tikz}
\usetikzlibrary{calc,positioning,arrows.meta}
\usepackage{authblk}
\usepackage[colorlinks=true, allcolors=blue]{hyperref}
\usepackage[capitalise,nameinlink]{cleveref}
\usepackage[style=alphabetic, backend=biber, minalphanames=3, maxalphanames=5, maxbibnames=99, sorting=nyt]{biblatex}

\setlist[itemize]{topsep=3pt}
\setlist[enumerate]{topsep=3pt}

\usepackage{thmtools}

\theoremstyle{plain}
\newtheorem{theorem}{Theorem}
\newtheorem{proposition}[theorem]{Proposition}
\newtheorem{lemma}[theorem]{Lemma}

\usepackage[compact]{titlesec}
\titlespacing{\subsection}{0pt}{1.5ex}{0ex}
\titlespacing{\subsubsection}{0pt}{1ex}{0ex}
\titlespacing{\paragraph}{0pt}{1.5ex}{1ex}
\title{Provable Classical and Quantum Local Algorithms for Max-$k$-Cut \\ and Quantum Advantage at Moderate Girth}
\author[1]{Anuj Apte}
\author[1]{Abid Khan}
\author[2]{Edward Farhi}
\author[2]{Kunal Marwaha}
\author[1]{Sami Boulebnane}
\author[1]{Ruslan~Shaydulin}
\affil[1]{Global Technology Applied Research, JPMorganChase, New York, NY 10017 USA}
\affil[2]{Google Quantum AI, Venice, CA 90291 USA}
\date{\vspace{-5.5em}}
\begin{document}
\maketitle
\begin{abstract}
Broadening the study of quantum optimization algorithms from binary to $k$-element alphabets has been shown to open new avenues for potential quantum advantage.
A quantum advantage claim for approximate optimization requires showing that, under the same assumptions, an efficient quantum algorithm provably achieves a better performance than can be proven for the best known efficient classical algorithms.
We study local classical and quantum algorithms for Max-$k$-Cut on $d$-regular graphs of girth $g$.
We advance classical algorithms for this problem by developing a local vector algorithm based on the explicit vector construction of Thompson, Parekh, and Marwaha (TPM). 
Our algorithm gives the best provable cut fraction guarantee among known efficient classical algorithms on regular graphs for $k\geq3$. To evaluate the performance of QAOA under identical assumptions of girth and regularity, we develop tensor network techniques for general $d$ and $k \geq 2$. In addition, using an equivalence to a coupled qudit--boson system, we compute the QAOA performance in the infinite-degree limit.
Together, these techniques give provable guarantees on QAOA performance on large graphs. 
Despite the improvements we introduce to the classical algorithm, QAOA achieves a better cut fraction guarantee for depths $p\geq 9$, corresponding to girth $g\geq 20$, for both finite- and infinite-degree regimes. Thus, we obtain an apparent quantum advantage from applying QAOA to the Max-$k$-Cut problem.

\end{abstract}
\vspace{1em}

\section{Introduction}
\label{sec:introduction}

  Thirty years ago we saw that quantum computers can have computational advantage. The foremost examples of this are Shor's algorithm for factoring \cite{Shor1994, Shor1997} and Grover's algorithm for unstructured search \cite{Grover1996} in an oracle setting. Apart from finding exact solutions we can ask if quantum algorithms can provide good solutions to combinatorial search problems with a reasonable runtime (i.e.\ allowed to scale as a small polynomial in the problem size). For the Max-Cut problem on 3-regular graphs, Farhi et al.~\cite{lower_bounding_max_cut_qaoa} show that the Quantum Approximate Optimization Algorithm (QAOA) at polynomial depth achieves better cut fractions than the best known classical algorithm with a provable performance guarantee, provided that the girth of the graph $g$ equals or exceeds $16$. 

  When making a comparison between quantum and classical algorithms, it is important to understand what is assumed about the structure of the problem under study. In the case of QAOA for graph problems, the analysis of its performance is dramatically simplified if one restricts to regular graphs with a girth $g \geq 2p+2$ where $p$ is the QAOA depth, i.e.\ the number of alternating operator layers applied. Each layer consists of one application of the problem (cost) Hamiltonian and one of the mixer Hamiltonian, parameterized by $2p$ continuous angles. The essential reason for this simplification is that the expectation value of an observable defined on an edge can be computed solely by restricting to a tree subgraph of the original graph. Note that many classical local algorithms and QAOA are affected by common structural obstructions, including the Overlap Gap Property~\cite{qaoa_needs_whole_graph_typical_case,limitations_local_quantum_algorithms}. The key question then is whether one can come up with better classical algorithms using exactly the same structural assumption about the graphs, and whether QAOA can outperform these algorithms. 

   Max-Cut is the computational problem which asks one to partition the vertices of a graph into two subsets to maximize the size of the ``cut'', i.e.\ the number of edges that cross between the subsets. Determining the maximum cut of an arbitrary graph is NP-hard \cite{Karp1975OnTC}, and it remains NP-hard when restricted to regular graphs \cite{Yannakakis1978}. For this problem, classical algorithms were developed exploiting the locally tree-like structure \cite{hirvonen2014largecutslocalalgorithms, thompson2022explicit, hastings2021classicalalgorithmbeatsfrac12frac2pifrac1sqrtd}, and in tandem numerical and analytical techniques were developed to analyze QAOA \cite{qaoa_maxcut_high_depth, lower_bounding_max_cut_qaoa}. The outcome of this process was a concrete identification of a regime in which QAOA exceeds the performance of the best performing classical algorithm as mentioned above. In this paper we extend this program to Max-$k$-Cut, which asks for a partition of the vertices into $k$ subsets that maximizes the number of edges whose endpoints lie in different subsets; Max-Cut on regular graphs is the special case $k=2$. We do so by simultaneously developing a state-of-the-art classical algorithm for regular graphs and devising new methods to analyze the performance of QAOA. 

    We measure performance of the algorithms by the \emph{cut fraction}, namely the number of cut edges divided by total number of edges $|E|$. A uniformly random assignment cuts each edge with probability $(k-1)/k$, and so achieves expected cut fraction $(k-1)/k$. This should be distinguished from an \emph{approximation ratio}, which compares the value returned by an algorithm to the optimum value $\mathsf{OPT}_k(G)$. The distinction between cut fraction and approximation ratio is important. Algorithms such as the Frieze--Jerrum semidefinite program \cite{Frieze1997}, which generalizes the Goemans--Williamson algorithm \cite{goemans1995}, give strong approximation-ratio guarantees that hold for all graphs. However, when the optimum cut fraction itself is close to $(k-1)/k$, the cut fraction guaranteed by such a multiplicative guarantee may be below the random-assignment baseline. For random $d$-regular graphs, this occurs above the $k$-colorability threshold \cite{kemkes2010chromatic}. We prove in \cref{prop:fj_below_baseline} that there exists $d_k^{\star}$ such that for every degree $d > d_k^{\star}$ we get $\alpha^{\mathrm{FJ}}_k \mathsf{OPT}_k(G)/|E| < (k-1)/k$, where $\alpha^{\mathrm{FJ}}_k$ is the Frieze--Jerrum approximation ratio.

In contrast to the semidefinite programming algorithms which work for all graphs, a classical algorithm that exploits the structure of regular graphs with a given girth is the explicit vector algorithm of Thompson, Parekh, and Marwaha (TPM) for Max-2-Cut \cite{thompson2022explicit}. Rather than solving an SDP, the algorithm assigns unit vectors to vertices according to their distance from a root and rounds them with random directions, with the optimal vectors determined by the minimum eigenvector of a small tridiagonal matrix whose size is controlled by the girth. On any $d$-regular graph of girth $g \geq 4$ it produces an assignment whose expected cut fraction provably exceeds the random baseline. Asymptotically, as $d \to \infty$ the cut fraction takes the form $1/2 + \Theta(1/\sqrt{d})$.

   Our focus in this work is on approximation algorithms with provable guarantees that hold for every instance of the chosen ensemble, including worst-case inputs. This differs from the heuristic or average-case setting, where methods such as simulated annealing are evaluated empirically on random instances. Strong empirical performance on typical instances does not imply any worst-case guarantee. Thus, we do not benchmark against all practical heuristics, but analyze only those classical and quantum local algorithms whose cut-fraction guarantees can be proven rigorously. 

\subsection{Results}

There has been much activity in developing provable local algorithms for graph optimization and in analyzing QAOA on regular graphs~\cite{hirvonen2014largecutslocalalgorithms, thompson2022explicit, hastings2021classicalalgorithmbeatsfrac12frac2pifrac1sqrtd, qaoa_maxcut_high_depth, lower_bounding_max_cut_qaoa, apte2026quantumapproximateoptimizationinteger}. We contribute to this research thread by developing a state-of-the-art classical algorithm for Max-$k$-Cut on regular graphs and new tools for analyzing QAOA on the same class of graphs, allowing a concrete head-to-head comparison of the two:

\begin{enumerate}
\item
\emph{A classical Local Vector algorithm that provably beats the random baseline} (\cref{sec:vector_algorithm}).

We generalize the explicit vector algorithm of Thompson, Parekh, and Marwaha (TPM) from Max-2-Cut to Max-$k$-Cut. Our algorithm attains a cut fraction exceeding the random baseline of $(k-1)/k$ for all regular graphs with girth $g \geq 4$. Furthermore, the improvement upon the random baseline is asymptotically $\Theta(1/\sqrt{d})$ as $d \to \infty$. We further improve the algorithm with a one-step rounding rule, in which each vertex discounts its own label preference by non-positive messages from its neighbors, and find numerically that this strictly increases the leading $1/\sqrt{d}$ coefficient for every $k$ and every girth $g \geq 6$. Because this improvement is additive and holds exactly where the Frieze--Jerrum guarantee is vacuous, it provides the best provable cut-fraction guarantee among efficient classical algorithms on regular graphs for $k \geq 3$. 

\item
\emph{A qudit--boson reformulation of QAOA in the infinite-degree limit} (\cref{subsec:qaoa_asymptotics,subsec:qudit_boson}).

Prior work analyzed qudit QAOA for Max-$k$-Cut only at finite degree $d$, through exact tree tensor-network contraction \cite{apte2026quantumapproximateoptimizationinteger}, which does not by itself reveal the large-$d$ behavior. In this work, we take the $d \to \infty$ limit and express the resulting cut fraction in the same $(k-1)/k + \Theta(1/\sqrt d)$ normalization as the classical algorithm, putting the two on a common footing.
In this infinite-degree limit, we further express QAOA expectation values in terms of a qudit coupled to $(k-1)\times(p+1)$ bosonic modes, extending the qubit results of~\cite{2w94-rymn}.

\item
\emph{A degree-independent tree tensor-network algorithm for qudit QAOA} (\cref{subsec:qaoa_tensor_algorithm}).

We generalize the exact tree tensor-network contraction of~\cite{lower_bounding_max_cut_qaoa} from qubits to $k$-level qudits, computing the depth-$p$ Max-$k$-Cut QAOA cut fraction on $d$-regular graphs of girth $g \geq 2p+2$. Collapsing the identical branches of the light-cone tree by entrywise $(d-1)$th powers makes the contraction cost $\mathcal{O}(p\,k^{2(p+1)})$ in time and $\mathcal{O}(k^{2p})$ in memory, both independent of the degree $d$, which lets us reach the depths and degrees needed for a meaningful head-to-head comparison. Putting the classical and quantum analyses side by side across the parameters $(k, d, p)$, the Local Vector algorithm is overtaken by QAOA once $p \geq 9$, i.e.\ $g \geq 20$, identifying a regime for quantum advantage on graph optimization at moderate girth. The head-to-head comparison across $(k, d, p)$ is presented in \cref{sec:comparison}. In this work we focus on $k=2,3,4$ due to the exponential growth in complexity of evaluating QAOA expectation values with $k$. 

\end{enumerate}

\subsection{Organization}

The rest of the paper is organized as follows. \Cref{sec:preliminaries} introduces the Max-$k$-Cut problem and the Frieze--Jerrum SDP, and explains why its worst-case guarantee becomes vacuous on graphs where the optimal cut fraction is near the random baseline. \Cref{sec:vector_algorithm} presents the classical Local Vector algorithm, including the neighbor-message rounding rule and pseudocode, and gives the full cut-fraction analysis in both the finite-degree and infinite-degree regimes. \Cref{sec:qaoa} then develops QAOA for Max-$k$-Cut: after setting up the algorithm, we analyze the finite-degree cut fraction via tree tensor-network contraction and take the infinite-degree limit by generalizing techniques from~\cite{qaoa_maxcut_high_depth}; we then introduce the qudit--boson machinery enabling approximate evaluation of QAOA energies at larger depth. \Cref{sec:comparison} places the classical and quantum algorithms side by side across $(k, d, p)$. We conclude in \cref{sec:conclusion} with a discussion of the Potts--Parisi upper bound and other problems for future work. 

\section{Preliminaries}
\label{sec:preliminaries}

\subsection{The Max-$k$-Cut Problem}
\label{subsec:maxkcut}

Given a graph $G = (V, E)$ with $n = |V|$ vertices and $|E|$ edges, the Max-$k$-Cut problem asks for a partition of $V$ into $k$ disjoint subsets $S_1, \ldots, S_k$ that maximizes the number of edges with endpoints in different subsets (see \cref{fig:maxkcut_example}).
Equivalently, we seek an assignment $\chi: V \to \mathbb{Z}_k = \{0, 1, \ldots, k-1\}$ that maximizes the cost function
\begin{equation}
    C_G(\chi) = \sum_{\{u,v\} \in E} \mathbf{1}[\chi(u) \neq \chi(v)],
\end{equation}
where $\mathbf{1}$ is the indicator function which only fires if its argument is true.

\begin{figure}[!htbp]
    \centering
    \begin{tikzpicture}[scale=0.85,
        vertex/.style={circle, draw, thick, minimum size=18pt, font=\small},
        cut edge/.style={dashed, thick, gray},
        uncut edge/.style={thick, black},
        color1/.style={fill=red!40},
        color2/.style={fill=blue!40},
        color3/.style={fill=green!40},
    ]

\node[vertex, color1] (v1) at (0, 2) {$0$};
\node[vertex, color2] (v2) at (2, 2) {$1$};
\node[vertex, color3] (v3) at (4, 2) {$2$};
\node[vertex, color2] (v4) at (5, 0.5) {$1$};
\node[vertex, color1] (v5) at (3, -0.5) {$0$};
\node[vertex, color3] (v6) at (1, 0) {$2$};
\node[vertex, color1] (v7) at (-1, 0.5) {$0$};

\draw[cut edge] (v1) -- (v2);
\draw[cut edge] (v2) -- (v3);
\draw[cut edge] (v1) -- (v6);
\draw[cut edge] (v2) -- (v6);
\draw[cut edge] (v3) -- (v4);
\draw[cut edge] (v4) -- (v5);
\draw[cut edge] (v5) -- (v6);
\draw[cut edge] (v6) -- (v7);
\draw[cut edge] (v2) -- (v5);

\draw[uncut edge] (v1) -- (v7);
\draw[uncut edge] (v1) -- (v5);

\draw[cut edge] (0, -1.7) -- (1, -1.7);
\node[right] at (1, -1.7) {\small Cut edge};

\draw[uncut edge] (3, -1.7) -- (4, -1.7);
\node[right] at (4, -1.7) {\small Uncut edge};

\end{tikzpicture}
    \caption{Example of Max-3-Cut on a graph with 7 vertices and 11 edges. Vertices are assigned to subsets with labels $\{0, 1, 2\}$. Dashed edges connect vertices belonging to different sets (cut), while solid edges connect vertices of the same set (uncut).}
    \label{fig:maxkcut_example}
\end{figure}

The \emph{cut fraction} of an assignment $\chi$ is
\begin{equation}
    \text{cut fraction}(G, \chi) = \frac{C_G(\chi)}{|E|},
\end{equation}
and the \emph{approximation ratio} is the ratio to the optimal cut value $\mathsf{OPT}_k(G) = \max_\chi C_G(\chi)$.
Since $\mathsf{OPT}_k(G) \leq |E|$, the cut fraction provides a lower bound on the approximation ratio.

For a uniformly random labeling, each edge is cut with probability $(k-1)/k$, giving an expected cut fraction of
\begin{equation}
    \mathbb{E}_\chi[\text{cut fraction}] = \frac{k-1}{k}.
\end{equation}
This value serves as a natural baseline; any useful approximation algorithm should exceed it.

The \emph{girth} of a graph is the length of its shortest cycle; this structure enables simplified analysis of both the vector algorithm and QAOA. Throughout this work, we parametrize girth as $g = 2m$ 
for the vector algorithm and $g = 2p + 2$ for QAOA at depth $p$. In this parametrization, the $(m - 1)$-neighborhood (respectively, the $p$-neighborhood) of any edge is  tree-like.

\subsection{Semidefinite Programming Relaxation}
\label{subsec:sdp}

The Max-$k$-Cut problem admits a semidefinite programming (SDP) relaxation that provides the best known worst-case approximation guarantees. We briefly review the Frieze--Jerrum algorithm \cite{Frieze1997}, which generalizes the Goemans--Williamson approach for Max-2-Cut. Beyond its role as a benchmark, the Frieze--Jerrum algorithm is the conceptual template for our own construction: we borrow its two central ideas: \emph{(i)} assign a vector to each vertex, and \emph{(ii)} round the vectors with random directions to obtain a label, while discarding the SDP itself in favor of an explicit, girth-aware vector assignment.

For $k = 2$, the Goemans--Williamson algorithm \cite{goemans1995} solves

\begin{align}
    \text{Maximize} \quad & \frac{1}{2} \sum_{\{u,v\} \in E} (1 - \mathbf{v}_u \cdot \mathbf{v}_v) \nonumber \\
    \text{Subject to} \quad & \mathbf{v}_i \in S^{n-1}, \; \forall i \in V,
\end{align}
where $S^{n-1}$ is the unit sphere in $\mathbb{R}^n$, and rounds the solution using a random hyperplane to obtain a cut with approximation ratio at least $\alpha_{\text{GW}} \approx 0.878$.

For general $k$, the key idea is to embed the $k$ labels as vertices of a regular simplex in $\mathbb{R}^{k-1}$. To construct this simplex, begin with the $k$ standard basis vectors $e_1, \ldots, e_k \in \mathbb{R}^k$, subtract the centroid $c = (1/k, \ldots, 1/k)$ from each, and normalize to obtain unit vectors $q_1, \ldots, q_k$ satisfying
\begin{equation}
    q_a \cdot q_b = \begin{cases}
        1 & \text{if } a = b, \\
        -1/(k-1) & \text{if } a \neq b.
    \end{cases}
\end{equation}
If we assign $q_{\chi(u)}$ to vertex $u$ for a labeling $\chi$, then the indicator for a cut edge becomes
\begin{equation}
    \mathbf{1}[\chi(u) \neq \chi(v)] = \frac{k-1}{k}(1 - q_{\chi(u)} \cdot q_{\chi(v)}),
\end{equation}
and the Max-$k$-Cut objective can be written as
\begin{equation}
    \frac{k-1}{k}\sum_{\{u,v\} \in E} (1 - q_{\chi(u)} \cdot q_{\chi(v)}).
\end{equation}

The SDP relaxation replaces the discrete assignment $q_{\chi(u)}$ with arbitrary unit vectors $\mathbf{y}_u \in \mathbb{R}^n$, and introduces the Gram matrix $X_{uv} = \mathbf{y}_u \cdot \mathbf{y}_v$. The off-diagonal constraint $X_{uv} \geq -1/(k-1)$ for $u \neq v$ ensures that the relaxed solution does not exploit correlations impossible in any valid $k$-cut. This yields the SDP:
\begin{align}
    \text{Maximize} \quad & \frac{k-1}{k}\sum_{\{u,v\} \in E} (1 - X_{uv}) \nonumber \\
    \text{Subject to} \quad & X \succeq 0, \quad X_{ii} = 1, \quad X_{uv} \geq -\frac{1}{k-1} \; (u \neq v).
\end{align}
After solving, one factors the optimal Gram matrix $X^* = UU^T$ to obtain vertex vectors $\mathbf{y}_v$ as rows of $U$. This is the first idea we borrow: \emph{every vertex is assigned a vector}.

The randomized rounding procedure samples $k$ independent random unit vectors $\mathbf{g}_1, \ldots, \mathbf{g}_k$ in the embedding space and assigns label $a$ to each vertex $u$ by maximizing the inner product $\mathbf{y}_u \cdot \mathbf{g}_a$. This generalizes hyperplane rounding from $k=2$ to arbitrary $k$, and is the second idea we borrow: \emph{a label is obtained by rounding the vector against random directions}. The worst-case approximation ratio $\alpha^{\mathrm{FJ}}_k$ satisfies $\alpha^{\mathrm{FJ}}_k > (k-1)/k$ for all $k$, with $\alpha^{\mathrm{FJ}}_k \approx 1 - 1/k + 2\ln(k)/k^2$ for large $k$ \cite{Frieze1997, de2004approximate}.

The distinction between approximation ratio and cut fraction is essential for understanding when explicit vector algorithms can provably outperform SDP-based methods.
The Frieze--Jerrum algorithm guarantees

\begin{equation}
    \mathsf{CUT} \geq \alpha^{\mathrm{FJ}}_k \cdot \mathsf{OPT}_k(G),
\end{equation}
where $\alpha^{\mathrm{FJ}}_k$ is a constant depending only on $k$. This is a \emph{worst-case} guarantee; and holds for every graph.

However, this does not imply a bound on the cut fraction unless we know $\mathsf{OPT}_k(G)/|E|$.

For Max-2-Cut, Dembo, Montanari, and Sen \cite{Dembo_2017} proved that on random $d$-regular graphs,

\begin{equation}
    \frac{\mathsf{OPT}_2(G)}{|E|} \to \frac{1}{2} + \frac{P_*}{\sqrt{d}} \quad \text{as } n ~, ~ d \to \infty,
\end{equation}
where $P_* \approx 0.7632$ is the Parisi constant \cite{Parisi1979, Talagrand2006}. Thus the Goemans--Williamson guarantee gives
\begin{equation}
    \text{cut fraction} \geq \alpha_{\text{GW}} \cdot \left(\frac{1}{2} + \frac{P_*}{\sqrt{d}}\right) \approx 0.439 + O(1/\sqrt{d}),
\end{equation}
which is \emph{less than} the random baseline of $1/2$ for large $d$. In contrast, the explicit vector algorithm of Thompson, Parekh, and Marwaha \cite{thompson2022explicit} achieves cut fraction $1/2 + c/\sqrt{d}$ with $c > 0$, providing a \emph{provable} improvement over random.

For Max-$k$-Cut, the situation depends on the chromatic number. Kemkes, Pérez-Giménez, and Wormald \cite{kemkes2010chromatic} showed that random $d$-regular graphs are $k$-colorable with high probability (implying $\mathsf{OPT}_k = |E|$ with high probability) when $d \lesssim 2(k-1)\ln(k-1)$. Above this threshold, the optimal cut fraction is less than 1 and approaches $(k-1)/k$ as $d$ grows. We now make quantitative the sense in which the Frieze--Jerrum \emph{guarantee} degrades in this regime. The following proposition uses only the near-Ramanujan spectral bound on random regular graphs~\cite{friedman2004proofalonssecondeigenvalue} and requires no analog of the Dembo--Montanari--Sen constant for $k > 2$.

\begin{proposition}[The Frieze--Jerrum guarantee falls below the random baseline]
\label{prop:fj_below_baseline}
Let $\alpha^{\mathrm{FJ}}_k \in (0, 1)$ be the Frieze--Jerrum approximation ratio, and define the explicit degree threshold
\begin{equation}
    d_k^{\star} \;:=\; \frac{4\,(\alpha^{\mathrm{FJ}}_k)^2}{(1 - \alpha^{\mathrm{FJ}}_k)^2}.
    \label{eq:dk_star}
\end{equation}
Then for every $d$-regular graph $G$ whose adjacency matrix satisfies the near-Ramanujan bound $\lambda_{\min}(A_G) \geq -2\sqrt{d-1}$ (which holds asymptotically almost surely for random $d$-regular graphs by Friedman's theorem~\cite{friedman2004proofalonssecondeigenvalue}), and for every degree $d > d_k^{\star}$, the cut fraction certified by the Frieze--Jerrum guarantee is strictly less than the random baseline:
\begin{equation}
    \alpha^{\mathrm{FJ}}_k \cdot \frac{\mathsf{OPT}_k(G)}{|E|} \;<\; \frac{k-1}{k}.
\end{equation}
\end{proposition}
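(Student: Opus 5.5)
The plan is to turn the spectral hypothesis into an upper bound on $\mathsf{OPT}_k(G)/|E|$ of the form $\frac{k-1}{k}\bigl(1 + \tfrac{2}{\sqrt d}\bigr)$. After that, the statement reduces to a scalar inequality in $d$ whose threshold is exactly $d_k^\star$.

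\emph{Step 1 (simplex rewriting).} Fix an arbitrary labeling $\chi$ and use the simplex embedding from \cref{subsec:sdp}. The cut value is
\[
C_G(\chi)=\frac{k-1}{k}\sum_{\{u,v\}\in E}\bigl(1-q_{\chi(u)}\cdot q_{\chi(v)}\bigr).
\]
Let $Y\in\mathbb{R}^{n\times k}$ be the matrix whose $u$-th row is $q_{\chi(u)}$. Since $A_G$ is symmetric and each edge appears twice in $A_G$, we have $\sum_{\{u,v\}\in E} q_{\chi(u)}\cdot q_{\chi(v)} = \tfrac12 \operatorname{tr}(Y^T A_G Y)$.

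\emph{Step 2 (Rayleigh quotient bound).} Decompose $\operatorname{tr}(Y^T A_G Y)=\sum_{j} y_j^T A_G y_j$ over the columns $y_j$ of $Y$. Each term satisfies $y_j^T A_G y_j \ge \lambda_{\min}(A_G)\,\|y_j\|^2$. Summing and using $\|Y\|_F^2=\sum_u \|q_{\chi(u)}\|^2=n$ gives
\[
\sum_{\{u,v\}\in E} q_{\chi(u)}\cdot q_{\chi(v)} \;\ge\; \tfrac12\lambda_{\min}(A_G)\, n \;\ge\; -\sqrt{d-1}\,n .
\]
Hence $C_G(\chi)\le \frac{k-1}{k}\bigl(|E|+\sqrt{d-1}\,n\bigr)$. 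Using $|E|=nd/2$, this becomes
\[
\frac{C_G(\chi)}{|E|}\le \frac{k-1}{k}\Bigl(1+\frac{2\sqrt{d-1}}{d}\Bigr) < \frac{k-1}{k}\Bigl(1+\frac{2}{\sqrt d}\Bigr).
\]
The last inequality is strict because $\sqrt{d-1}<\sqrt d$. Maximizing over $\chi$ gives the same bound for $\mathsf{OPT}_k(G)/|E|$.

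\emph{Step 3 (threshold).} Multiply the bound from Step 2 by $\alpha^{\mathrm{FJ}}_k$. It then suffices to show $\alpha^{\mathrm{FJ}}_k\bigl(1+2/\sqrt d\bigr)\le 1$. Because $0<\alpha^{\mathrm{FJ}}_k<1$, this is equivalent to $\sqrt d \ge 2\alpha^{\mathrm{FJ}}_k/(1-\alpha^{\mathrm{FJ}}_k)$, that is, $d\ge d_k^\star$. So it holds for every $d>d_k^\star$, and the strict inequality in Step 2 makes the final conclusion strict.

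There is no real obstacle here. The only point requiring care is Step 2: the bound uses the full Frobenius norm of $Y$, so the unit-norm simplex vectors contribute exactly $n$ and no constant is lost. This is precisely what produces the factor $2/\sqrt d$ that matches the definition of $d_k^\star$. The asymptotic almost-sure statement for random regular graphs then follows directly by invoking Friedman's theorem for the spectral hypothesis.
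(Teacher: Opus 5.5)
Your proposal is correct and follows the paper's proof essentially step for step: the simplex rewriting of the cut, the bound $\sum_{\{u,v\}\in E} q_{\chi(u)}\cdot q_{\chi(v)} \ge \tfrac12\lambda_{\min}(A_G)\,n$ (which you justify explicitly by applying the Rayleigh quotient to each column of $Y$), the estimate $2\sqrt{d-1}/d \le 2/\sqrt d$, and rearranging $\alpha^{\mathrm{FJ}}_k(1+2/\sqrt d)\le 1$ into $d \ge d_k^\star$. The only cosmetic difference is that you take strictness from $\sqrt{d-1}<\sqrt d$ rather than from $d>d_k^\star$, which incidentally also covers the boundary case $d=d_k^\star$.
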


\begin{proof}
Assigning the simplex vectors $q_{\chi(u)}$ of \cref{subsec:sdp} to an optimal assignment and using $q_a \cdot q_b \geq -1/(k-1)$ together with the eigenvalue bound $\sum_{\{u,v\} \in E} q_{\chi(u)} \cdot q_{\chi(v)} \geq \tfrac{1}{2}\lambda_{\min}(A_G)\sum_u \|q_{\chi(u)}\|^2 = \tfrac{1}{2}\lambda_{\min}(A_G)\, n$, we bound the optimal cut. Since $|E| = dn/2$,
\begin{equation}
    \frac{\mathsf{OPT}_k(G)}{|E|}
    = \frac{k-1}{k}\cdot \frac{1}{|E|}\sum_{\{u,v\} \in E}\bigl(1 - q_{\chi(u)} \cdot q_{\chi(v)}\bigr)
    \leq \frac{k-1}{k}\left(1 - \frac{\lambda_{\min}(A_G)}{d}\right).
\end{equation}
Applying $\lambda_{\min}(A_G) \geq -2\sqrt{d-1}$ and then $2\sqrt{d-1}/d \leq 2/\sqrt d$ gives
\begin{equation}
    \frac{\mathsf{OPT}_k(G)}{|E|} \leq \frac{k-1}{k}\left(1 + \frac{2\sqrt{d-1}}{d}\right) \leq \frac{k-1}{k}\left(1 + \frac{2}{\sqrt d}\right).
    \label{eq:opt_upper}
\end{equation}
Hence the Frieze--Jerrum certified cut fraction satisfies
\begin{equation}
    \alpha^{\mathrm{FJ}}_k \cdot \frac{\mathsf{OPT}_k(G)}{|E|} \leq \alpha^{\mathrm{FJ}}_k\,\frac{k-1}{k}\left(1 + \frac{2}{\sqrt d}\right).
\end{equation}
This upper bound is strictly below $(k-1)/k$ precisely when $\alpha^{\mathrm{FJ}}_k\bigl(1 + 2/\sqrt d\bigr) < 1$, i.e.\ when $2/\sqrt d < (1 - \alpha^{\mathrm{FJ}}_k)/\alpha^{\mathrm{FJ}}_k$, which rearranges to $d > 4(\alpha^{\mathrm{FJ}}_k)^2/(1-\alpha^{\mathrm{FJ}}_k)^2 = d_k^{\star}$.
\end{proof}

It is important to stress that \cref{prop:fj_below_baseline} concerns the \emph{guarantee}, not the realized performance of the algorithm: on any particular instance one may solve the SDP, round, and obtain a cut fraction that does in fact exceed the random baseline. The proposition says only that the worst-case certificate stops \emph{proving} anything beyond random once $d > d_k^{\star}$. For this reason we do not rely on the Frieze--Jerrum guarantee in what follows; our aim is instead an explicit algorithm that \emph{provably} beats the random baseline in exactly this regime.

Our explicit vector algorithm achieves cut fraction $(k-1)/k + \Theta(1/\sqrt{d})$, providing a provable guarantee that exceeds the random baseline on $d$-regular graphs with girth $g \geq 2m$. While the SDP provides strong theoretical guarantees, solving it requires $O(n^{3.5})$ time using interior-point methods, which becomes impractical for large graphs. Our algorithm requires only the minimum eigenvector of an $m \times m$ matrix, which is independent of $n$.

\begin{figure}[!htbp]
    \centering
    \begin{tikzpicture}[scale=0.95,
        vertex/.style={circle, draw, thick, minimum size=10pt, font=\tiny},
        central/.style={fill=blue!30},
        level1/.style={fill=orange!30},
        level2/.style={fill=yellow!20},
        level3/.style={fill=green!20},
        edge/.style={thick},
        centraledge/.style={very thick, blue!60},
    ]
    \node[vertex, central] (u) at (-0.8, 0) {$u$};
    \node[vertex, central] (v) at (0.8, 0) {$v$};
    \draw[centraledge] (u) -- (v);
    
    \node[vertex, level1] (u1) at (-2.2, 1.5) {};
    \node[vertex, level1] (u2) at (-2.2, -1.5) {};
    \draw[edge] (u) -- (u1);
    \draw[edge] (u) -- (u2);
    \node[vertex, level2] (u11) at (-3.6, 2.5) {};
    \node[vertex, level2] (u12) at (-3.6, 0.9) {};
    \node[vertex, level2] (u21) at (-3.6, -0.9) {};
    \node[vertex, level2] (u22) at (-3.6, -2.5) {};
    \draw[edge] (u1) -- (u11);
    \draw[edge] (u1) -- (u12);
    \draw[edge] (u2) -- (u21);
    \draw[edge] (u2) -- (u22);
    \node[vertex, level3] (u111) at (-5.0, 3.2) {};
    \node[vertex, level3] (u112) at (-5.0, 2.2) {};
    \node[vertex, level3] (u121) at (-5.0, 1.4) {};
    \node[vertex, level3] (u122) at (-5.0, 0.4) {};
    \node[vertex, level3] (u211) at (-5.0, -0.4) {};
    \node[vertex, level3] (u212) at (-5.0, -1.4) {};
    \node[vertex, level3] (u221) at (-5.0, -2.2) {};
    \node[vertex, level3] (u222) at (-5.0, -3.2) {};
    \draw[edge] (u11) -- (u111); \draw[edge] (u11) -- (u112);
    \draw[edge] (u12) -- (u121); \draw[edge] (u12) -- (u122);
    \draw[edge] (u21) -- (u211); \draw[edge] (u21) -- (u212);
    \draw[edge] (u22) -- (u221); \draw[edge] (u22) -- (u222);
    
    \node[vertex, level1] (v1) at (2.2, 1.5) {};
    \node[vertex, level1] (v2) at (2.2, -1.5) {};
    \draw[edge] (v) -- (v1);
    \draw[edge] (v) -- (v2);
    \node[vertex, level2] (v11) at (3.6, 2.5) {};
    \node[vertex, level2] (v12) at (3.6, 0.9) {};
    \node[vertex, level2] (v21) at (3.6, -0.9) {};
    \node[vertex, level2] (v22) at (3.6, -2.5) {};
    \draw[edge] (v1) -- (v11);
    \draw[edge] (v1) -- (v12);
    \draw[edge] (v2) -- (v21);
    \draw[edge] (v2) -- (v22);
    \node[vertex, level3] (v111) at (5.0, 3.2) {};
    \node[vertex, level3] (v112) at (5.0, 2.2) {};
    \node[vertex, level3] (v121) at (5.0, 1.4) {};
    \node[vertex, level3] (v122) at (5.0, 0.4) {};
    \node[vertex, level3] (v211) at (5.0, -0.4) {};
    \node[vertex, level3] (v212) at (5.0, -1.4) {};
    \node[vertex, level3] (v221) at (5.0, -2.2) {};
    \node[vertex, level3] (v222) at (5.0, -3.2) {};
    \draw[edge] (v11) -- (v111); \draw[edge] (v11) -- (v112);
    \draw[edge] (v12) -- (v121); \draw[edge] (v12) -- (v122);
    \draw[edge] (v21) -- (v211); \draw[edge] (v21) -- (v212);
    \draw[edge] (v22) -- (v221); \draw[edge] (v22) -- (v222);
    
    \node[below, font=\small] at (0.0, -0.5) {central edge};
    \draw[<->, thick] (-5.3, -3.8) -- (-2.2, -3.8) node[midway, below, font=\small] {depth $p=3$};
    \draw[<->, thick] (2.2, -3.8) -- (5.3, -3.8) node[midway, below, font=\small] {depth $p=3$};
    
    \node[above, font=\small] at (0, 0.3) {$\{u,v\} \in E$};
    \node[left, font=\small] at (-5.3, 0) {$(d{-}1)^p$};
    \node[right, font=\small] at (5.3, 0) {$(d{-}1)^p$};
    \end{tikzpicture}
    \caption{Edge-centered tree structure for a 3-regular graph ($d=3$) at depth $p=3$. The edge $\{u,v\}$ is at the center. Each endpoint branches into $d-1=2$ neighbors (excluding the central edge), and this branching continues for $p$ levels. On a graph of girth $\geq 2p+2 = 8$, this neighborhood contains no cycles, enabling exact computation of edge expectations via tree tensor network contraction. The same structure governs both QAOA (where correlations are confined to this tree) and the vector algorithm (where vector coefficients depend only on distance from each vertex).}
    \label{fig:tree_structure}
\end{figure}

\section{Local Vector Algorithm for Max-$k$-Cut}
\label{sec:vector_algorithm}

Our algorithm keeps the two ideas we borrowed from Frieze--Jerrum in \cref{subsec:sdp} (\emph{(i)} assign a vector to every vertex, and \emph{(ii)} round the vectors against $k$ random directions to obtain a label) but replaces the global SDP by an explicit, purely local construction. The starting point is the observation that on a $d$-regular graph of girth $g \geq 2m$, the $(m - 1)$-neighborhood of every edge is the \emph{same} tree (\cref{fig:tree_structure}). 
By the symmetry of this local picture, it is natural to assign each vertex a vector whose entries depend only on shortest-path distance, with entries corresponding to equally distant vertices taking the same value. The only freedom left is the choice of these values. Because every edge sees the identical local tree, the inner product between the vectors at the two endpoints of an edge is a single number $\sigma$ common to all edges. Choosing the coefficients to make adjacent vectors as anti-correlated as possible (i.e.\ minimizing $\sigma$) reduces to a small eigenvalue problem whose size is set by the girth, entirely independent of $n$. We then round the resulting vectors, first with the plain generalized Frieze--Jerrum rule and then with a strictly better neighbor-message rule.

\subsection{Vector Assignment}
\label{subsec:vector_assignment}

We construct an explicit vector assignment on $d$-regular graphs of girth $g \geq 2m$.
For each vertex $i \in V$, we assign a unit vector $\mathbf{v}_i \in \mathbb{R}^n$ with components
\begin{equation}
    v_{ij} = \begin{cases}
        \alpha_0 & \text{if } i = j, \\
        \alpha_\ell & \text{if } \text{dist}(i,j) = \ell < m, \\
        0 & \text{otherwise},
    \end{cases}
\end{equation}
where $\text{dist}(i,j)$ is the shortest-path distance between vertices $i$ and $j$, and $\{\alpha_\ell\}_{\ell=0}^{m-1}$ are parameters to be optimized. The key insight is that when $g \geq 2m$, the entries of $\mathbf{v}_i$ are determined solely by distance from $i$, and the structure is tree-like within radius $m-1$.

To illustrate, consider the Petersen graph (\cref{fig:vector_assignment}), which is 3-regular with girth 5. We use $m = 2$, so the vector assignment uses only $\alpha_0$ (self) and $\alpha_1$ (neighbors). Vertex 0 has neighbors $\{1, 4, 5\}$, so
\begin{equation}
    \mathbf{v}_0 = (\alpha_0, \alpha_1, 0, 0, \alpha_1, \alpha_1, 0, 0, 0, 0).
\end{equation}
Similarly, vertex 1 has neighbors $\{0, 2, 6\}$, giving $\mathbf{v}_1 = (\alpha_1, \alpha_0, \alpha_1, 0, 0, 0, \alpha_1, 0, 0, 0)$. For the edge $\{0,1\}$, the inner product is
\begin{equation}
    \mathbf{v}_0 \cdot \mathbf{v}_1 = \alpha_0 \cdot \alpha_1 + \alpha_1 \cdot \alpha_0 = 2\alpha_0 \alpha_1,
\end{equation}
where the two terms arise from vertices 0 and 1 themselves; there are no common neighbors since the graph is triangle-free. The unit-norm constraint requires $\alpha_0^2 + 3\alpha_1^2 = 1$. To maximize the cut probability, we want to minimize $\sigma = 2\alpha_0\alpha_1$, which leads to the eigenvalue problem described below.

\begin{figure}[!htbp]
    \centering
    \begin{tikzpicture}[scale=0.8]
    
    \tikzstyle{vtx}=[circle, draw, thick, minimum size=20pt, font=\small, fill=white]
    \tikzstyle{v0}=[vtx, fill=blue!30, line width=1.5pt]
    \tikzstyle{v1}=[vtx, fill=red!30, line width=1.5pt]
    \tikzstyle{other}=[vtx, fill=gray!15]
    
    \foreach \i in {0,...,4} {
        \coordinate (O\i) at ({90+72*\i}:2.5);
    }
    \foreach \i in {0,...,4} {
        \coordinate (I\i) at ({90+72*\i}:1.1);
    }
    
    \draw[thick, gray!60] (O0) -- (O1) -- (O2) -- (O3) -- (O4) -- (O0);
    \draw[thick, gray!60] (O0) -- (I0);
    \draw[thick, gray!60] (O1) -- (I1);
    \draw[thick, gray!60] (O2) -- (I2);
    \draw[thick, gray!60] (O3) -- (I3);
    \draw[thick, gray!60] (O4) -- (I4);
    \draw[thick, gray!60] (I0) -- (I2) -- (I4) -- (I1) -- (I3) -- (I0);
    
    \draw[very thick, purple!70] (O0) -- (O1);
    
    \node[v0] (n0) at (O0) {$0$};
    \node[v1] (n1) at (O1) {$1$};
    \node[other] (n2) at (O2) {$2$};
    \node[other] (n3) at (O3) {$3$};
    \node[other] (n4) at (O4) {$4$};
    \node[other] (n5) at (I0) {$5$};
    \node[other] (n6) at (I1) {$6$};
    \node[other] (n7) at (I2) {$7$};
    \node[other] (n8) at (I3) {$8$};
    \node[other] (n9) at (I4) {$9$};
    
    \node[anchor=south, font=\footnotesize, blue!70!black, overlay] at ($(O0)+(0, 0.5)$) {
        $\mathbf{v}_0 = (\alpha_0, \alpha_1, 0, 0, \alpha_1, \alpha_1, 0, 0, 0, 0)$
    };
    
    \node[anchor=south east, font=\footnotesize, red!70!black, overlay] at ($(O1)+(162:0.6)$) {
        $\mathbf{v}_1 = (\alpha_1, \alpha_0, \alpha_1, 0, 0, 0, \alpha_1, 0, 0, 0)$
    };
    
    \node[below, align=center] at (0, -3.2) {
        \small Petersen graph ($d=3$, girth $5$)
    };
    
    \end{tikzpicture}
    
    \caption{Vector assignment on the Petersen graph, a 3-regular graph with girth 5. Each vertex $i$ gets a vector $\mathbf{v}_i \in \mathbb{R}^{10}$ with entry $\alpha_0$ on coordinate $i$ (self), entry $\alpha_1$ on coordinates corresponding to neighbors, and $0$ elsewhere. Vertex 0 (blue) has neighbors $\{1, 4, 5\}$; vertex 1 (red) has neighbors $\{0, 2, 6\}$. These vertices share edge $\{0,1\}$ but no common neighbors (since girth $\geq 4$), so $\mathbf{v}_0 \cdot \mathbf{v}_1 = 2\alpha_0\alpha_1$.}
    \label{fig:vector_assignment}
\end{figure}

Because all edges in a regular graph connect vertices at distance 1 from each other, the inner product $\mathbf{v}_u \cdot \mathbf{v}_v$ takes the same value for every edge $\{u,v\} \in E$. We denote this common value by $\sigma$, which captures the correlation between the vectors assigned to adjacent vertices.

The unit-norm constraint $\|\mathbf{v}_i\|^2 = 1$ and the objective of minimizing $\sigma = \mathbf{v}_u \cdot \mathbf{v}_v$ (to maximize cut probability) lead to an eigenvalue problem.
After rescaling $\beta_\ell = \alpha_\ell \sqrt{d(d-1)^{\ell-1}}$ for $\ell \geq 1$ and $\beta_0 = \alpha_0$, we obtain
\begin{equation}
    \sigma = \min_{\|\boldsymbol{\beta}\|_2 = 1} \boldsymbol{\beta}^T A_m \boldsymbol{\beta} = \lambda_{\min}(A_m),
\end{equation}
where $A_m$ is the $m \times m$ tridiagonal matrix
\begin{equation}
    A_m = \begin{bmatrix}
        0 & a & 0 & \cdots & 0 \\
        a & 0 & b & \cdots & 0 \\
        0 & b & 0 & \ddots & \vdots \\
        \vdots & \ddots & \ddots & \ddots & b \\
        0 & \cdots & 0 & b & 0
    \end{bmatrix}, \quad
    a = \frac{1}{\sqrt{d}}, \quad b = \frac{\sqrt{d-1}}{d}.
    \label{eq:matrix_A}
\end{equation}

The optimal coefficients $\{\alpha_\ell\}$ are recovered from the minimum eigenvector of $A_m$.
At large $d$, the matrix $A_m$ approaches a symmetric tridiagonal Toeplitz matrix with off-diagonal entries $b \approx 1/\sqrt{d}$, whose eigenvalues are known in closed form \cite{Nochese2013}:
\begin{equation}
    \lambda_{\min}(A_m) \approx -\frac{2\cos(\pi/(m+1))}{\sqrt{d}} = -\frac{c_m}{\sqrt{d}},
\end{equation}
where we define $c_m = 2\cos(\pi/(m+1))$. For the depths $p=1,2,3$ (corresponding to girth requirements $4, 6, 8$), we have $c_2 = 1$, $c_3 = \sqrt{2} \approx 1.414$, and $c_4 = \varphi \approx 1.618$ (the golden ratio). This large-degree behavior is analyzed in detail in \cref{subsec:vector_asymptotics}. By construction of the above assignment, for all vertices $i, j$, $\mathbf{v}_i \cdot \mathbf{v}_j$ only depends on $\text{dist}(i, j)$. We define $\rho_L(d, m)$ as the value of this dot product for vertices at a distance $L$. Specifically,
\begin{align}
    \rho_L(d, m) & := \sum_{0 \leq \ell \leq m - 1}2(d - 1)^{\ell}\alpha_{\ell}\alpha_{L + \ell}\label{eq:rho_l_m_definition}\\
    & = 2\sqrt{\frac{d - 1}{d}}(d - 1)^{-L/2}\beta_0\beta_L + 2\frac{d - 1}{d}(d - 1)^{-L/2}\sum_{1 \leq \ell \leq m - 1}\beta_{\ell}\beta_{\ell + L}\\
    & = \frac{r_{L, m}}{d^{L/2}} + \mathcal{O}\left(\frac{1}{d^{(L + 1)/2}}\right),
\end{align}
for some constants $r_{L, m}$ depending only on $L, m$ but not on $d$.

\subsection{Rounding}
\label{subsec:rounding}

With the distance-shell vectors in hand, we sample $k$ independent random Gaussian vectors $\mathbf r_1, \ldots, \mathbf r_k \sim \mathcal N(0, I_n)$ and read off the Gaussian scores $X_v(a) = \mathbf v_v \cdot \mathbf r_a$ for every vertex $v$ and label $a \in [k]$. The simplest rule is the root-only argmax,
\begin{align*}
    \chi_0(v) = \arg\max_{a \in [k]} X_v(a),
\end{align*}
which already gives the TPM construction generalized to $k$ labels and has cut probability that depends only on the edge correlation $\sigma = \mathbf v_u \cdot \mathbf v_v$. Indeed each pair $(X_u(a), X_v(a))$ is bivariate Gaussian with correlation $\sigma$ and the pairs are independent across $a$, so by label symmetry the same-label probability equals $k$ times the probability that label $1$ wins at both endpoints. Conditioning on the winning scores $(X_u(1), X_v(1)) = (x, y)$ and integrating out the other $k-1$ labels, we get the closed form
\begin{equation}
    P_{\text{cut}}(\sigma, k) = 1 - k \int_{\mathbb R^2} \Phi_\sigma(x, y)^{k-1}\, \phi_\sigma(x, y)\, dx\, dy,
    \label{eq:cut_probability}
\end{equation}
where $\phi_\sigma, \Phi_\sigma$ are the bivariate Gaussian density and CDF at correlation $\sigma$. At $\sigma = 0$ this gives the baseline $(k-1)/k$; at $k = 2$ the well-known $\arccos(\sigma)/\pi$. \Cref{fig:cut_probability} verifies \eqref{eq:cut_probability} against Monte Carlo computation of the integral for $k \in \{2, 3, 4\}$.

\begin{figure}[t]
    \centering
    \includegraphics[width=0.5\textwidth]{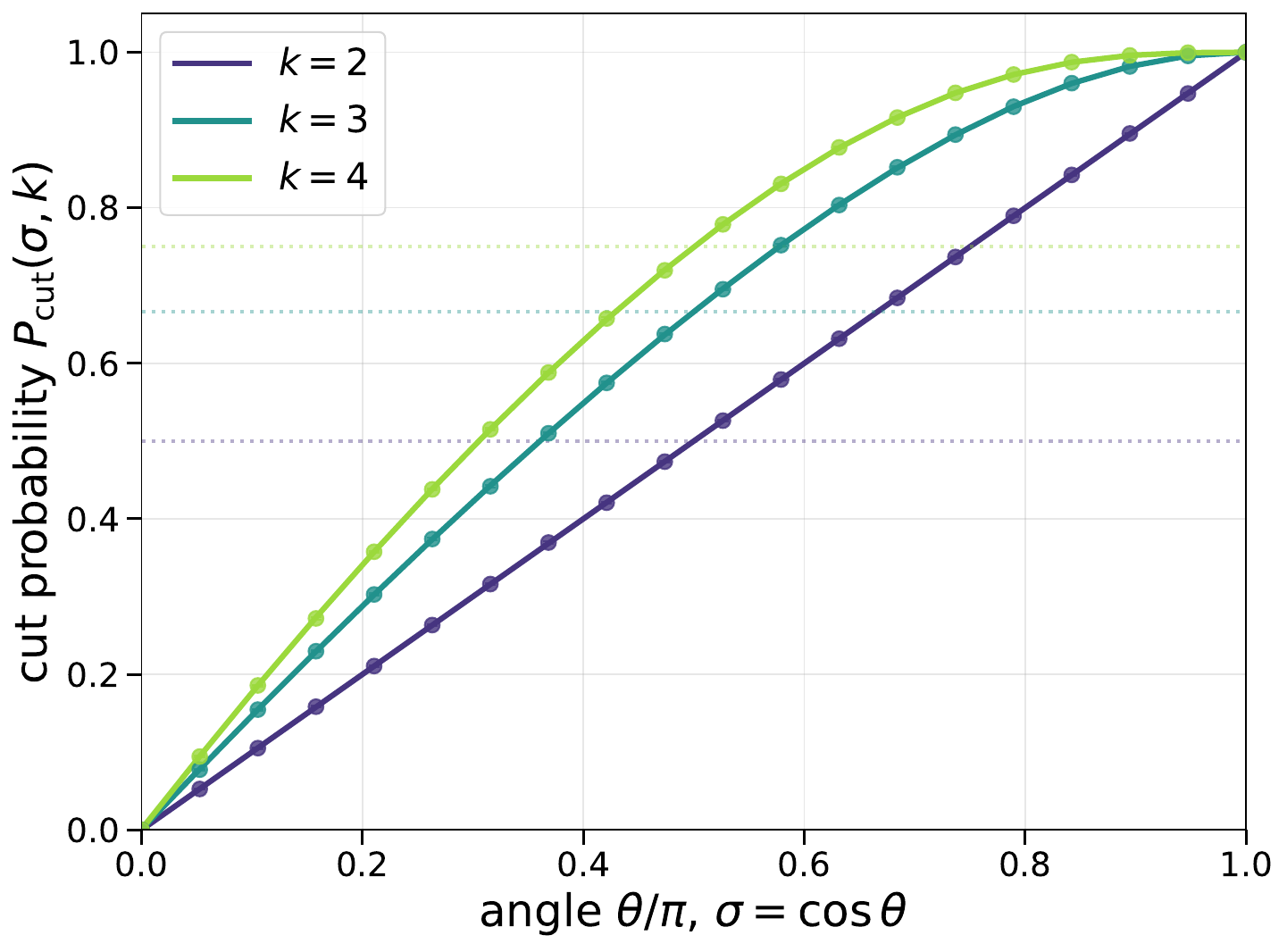}
    \caption{Cut probability $P_{\text{cut}}(\sigma, k)$ as a function of angle $\theta$ where $\sigma = \cos(\theta)$, for $k \in \{2, 3, 4\}$. Solid lines show the analytical formula \eqref{eq:cut_probability}; circles show Monte Carlo estimates ($100{,}000$ samples).}
    \label{fig:cut_probability}
\end{figure}

Root-only rounding throws away information that is already present in the Gaussian scores. Each neighbor $u$ of $v$ has, in addition to its favorite label $\arg\max_a X_u(a)$, a measure of how confident that preference is: the gap between its best and second-best scores. This gap is of the same order as the vector anti-correlation $\mathbf v_u \cdot \mathbf v_v \sim -c_m / \sqrt d$, so a small correction to the score at $v$ that listens to all $d$ neighbors can change the leading $1/\sqrt d$ behavior of the cut fraction. We make this precise by defining the gap message
\begin{equation}
    M_u(a) = \max_{b \neq a} X_u(b) - \max_b X_u(b),
    \label{eq:gap_message}
\end{equation}
which is non-positive and equal to zero unless $a$ is the favorite label of $u$, in which case it is the negative of the best-versus-second-best gap. The \emph{Local Vector} rule is
\begin{equation}
    \chi_\tau(v) = \arg\max_{a \in [k]} \biggl[\, X_v(a) + \frac{\tau}{\sqrt d}\sum_{u \in N(v)} M_u(a) \,\biggr],
    \label{eq:local_rule}
\end{equation}
with a single real parameter $\tau \geq 0$ controlling how strongly each vertex listens to its neighbors; $\tau = 0$ recovers the root-only rule.

The factor $1/\sqrt d$ in \eqref{eq:local_rule} is what makes the one-step correction sit at the leading order. After subtracting its label-symmetric mean, $\widetilde M_u(a) = M_u(a) - k^{-1}\sum_c M_u(c)$, the centered message has variance of order $1$. If the $d$ centered messages from $N(v)$ were independent their sum would have variance $O(d)$, so the prefactor $1/\sqrt d$ rescales the aggregate field to $O(1)$, the same scale as the original score $X_v(a)$. On the local tree the neighbor scores are not independent, but the distance-two correlation $\rho_2(d, m) = O(1/d)$ is small enough that the off-diagonal contribution to $\mathrm{Var}(\sum_u \widetilde M_u(a))$ is also $O(d)$. At the edge level the direct message $(\tau/\sqrt d) M_u(a)$ across $\{u, v\}$ has size $O(d^{-1/2})$, matching $\sigma = -c_m / \sqrt d + O(d^{-3/2})$ exactly. Both observations confirm that one round of neighbor messages can move the leading $1/\sqrt d$ coefficient rather than only a subleading correction.

The tree assumption required for the analysis fixes the girth requirement. The distance-shell construction needs the ball of radius $m - 1$ around every edge to be a tree, namely $g \geq 2m$. The Local Vector rule additionally needs the one-step edge neighborhood $\{o, o'\} \cup N(o) \cup N(o')$ to be a tree, which rules out the triangle and any $4$-cycle through the edge and gives the clean condition $g \geq 6$. 

\subsection{The Full Algorithm}
\label{subsec:algorithm}

Collecting the vector assignment and the rounding rule, the complete procedure is stated as \cref{alg:local_vector}. A single algorithm box treats both regimes uniformly: when $g \geq 6$ it applies the neighbor-message rule \eqref{eq:local_rule} with a tunable $\tau \geq 0$, and when $g < 6$ it falls back to root-only rounding ($\tau = 0$).

The expected edge cut probability $P^{(\tau)}_{d, m, k}$ achieved by \cref{alg:local_vector} depends only on $(d, m, k, \tau)$ under the girth condition $g \geq \max(2m, 6)$, because the relevant local tree has the same structure at every edge. At $\tau = 0$ the integral collapses to \eqref{eq:cut_probability} with $\sigma = \lambda_{\min}(A_m)$; for $\tau > 0$ it is a higher-dimensional Gaussian integral over the local tree. We evaluate and optimize this integral, and derive the resulting large-degree cut fraction, in \cref{subsec:vector_asymptotics}.

\begin{algorithm}[htb]
    \caption{Local Vector Algorithm for Max-$k$-Cut}
    \label{alg:local_vector}
    \KwIn{$d$-regular graph $G = (V, E)$ of girth $g$, number of disjoint sets $k$, depth parameter $m$ with $g \geq 2m$}
    \KwOut{$k$-cut $\chi : V \to [k]$}
    Compute $\sigma = \lambda_{\min}(A_m)$ and the minimum eigenvector $\boldsymbol\beta^\star$ of $A_m$ from \cref{eq:matrix_A}\;
    Recover coefficients $\alpha_\ell$ from $\boldsymbol\beta^\star$\;
    \For{each vertex $i \in V$}{
        Construct vector $\mathbf v_i$ with $v_{ij} = \alpha_{\text{dist}(i, j)}$ if $\text{dist}(i,j) < m$ and $v_{ij} = 0$ otherwise\;
    }
    Sample $k$ independent Gaussian vectors $\mathbf r_1, \ldots, \mathbf r_k \sim \mathcal N(0, I_n)$\;
    \For{each vertex $u \in V$ and label $a \in [k]$}{
        $X_u(a) \gets \mathbf v_u \cdot \mathbf r_a$\;
        $M_u(a) \gets \max_{b \neq a} X_u(b) - \max_b X_u(b)$\;
    }
    \eIf{$g \geq 6$}{
        Choose $\tau \geq 0$\;
        \For{each vertex $v \in V$}{
            $S_v(a) \gets X_v(a) + (\tau / \sqrt d) \sum_{u \in N(v)} M_u(a)$ for each $a \in [k]$\;
            $\chi(v) \gets \arg\max_{a \in [k]} S_v(a)$\;
        }
    }{
        \For{each vertex $v \in V$}{
            $\chi(v) \gets \arg\max_{a \in [k]} X_v(a)$\;
        }
    }
    \Return{$\chi$}
\end{algorithm}

\subsection{Analyzing the Cut Fraction}
\label{subsec:vector_asymptotics}

We now study how the expected cut fraction behaves as the degree $d$ grows with the girth parameter $m$ fixed. Throughout this subsection we work on $d$-regular graphs of girth $g \geq \max(2m, 6)$, so that the cut probability of \cref{alg:local_vector} is the single edge-tree integral $P^{(\tau)}_{d,m,k}$. Our goal is the leading correction to the random baseline, which for every choice of the rounding parameter $\tau \geq 0$ takes the form
\begin{equation}
    \mathbb{E}[\text{cut fraction}] = \frac{k-1}{k} + \frac{C_{m,k}(\tau)}{\sqrt{d}} + o\!\left(\frac{1}{\sqrt{d}}\right),
    \label{eq:asymptotic_form}
\end{equation}
so that the algorithm run at its optimal parameter achieves the coefficient
\begin{equation}
    C^{\star}_{m,k} = \sup_{\tau \geq 0} C_{m,k}(\tau).
    \label{eq:Cstar_def}
\end{equation}
When $g < 6$ the message rule does not apply and we set $\tau = 0$, so that $C^{\star}_{m,k} = C_{m,k}(0)$. We first compute the $\tau = 0$ coefficient in closed form, then explain the formula that computes the coefficient for $\tau > 0$. 

The analysis for $\tau = 0$ is driven entirely by the edge correlation $\sigma = \lambda_{\min}(A_m)$, whose large-degree expansion follows from a direct perturbation of the tridiagonal matrix in \eqref{eq:matrix_A}. Writing $A_m = M_m / \sqrt{d}$, the matrix $M_m$ has first off-diagonal entry $1$ and remaining off-diagonal entries $\sqrt{1 - 1/d} = 1 - \tfrac{1}{2d} + O(d^{-2})$. Let $T_m$ be the symmetric tridiagonal Toeplitz matrix with all off-diagonal entries equal to $1$, whose smallest eigenvalue is known in closed form to be $-c_m$ with
\begin{equation}
    c_m = 2\cos\!\left(\frac{\pi}{m+1}\right).
    \label{eq:cm_def}
\end{equation}
Since $\|M_m - T_m\|_{\mathrm{op}} = O(1/d)$, Weyl's inequality gives $\lambda_{\min}(M_m) = -c_m + O(1/d)$, and therefore
\begin{equation}
    \sigma = \lambda_{\min}(A_m) = -\frac{c_m}{\sqrt{d}} + O\!\left(\frac{1}{d^{3/2}}\right).
    \label{eq:sigma_asymp}
\end{equation}
The first few values are $c_2 = 1$, $c_3 = \sqrt{2}$, and $c_4 = \varphi = (1+\sqrt{5})/2$, and $c_m \uparrow 2$ as $m \to \infty$, reflecting that the achievable edge anti-correlation saturates at the spectral edge $-2/\sqrt{d}$ of the $d$-regular tree.

At $\tau = 0$ the cut probability is exactly $P_{\text{cut}}(\sigma, k)$ from \eqref{eq:cut_probability}, evaluated at $\sigma = \lambda_{\min}(A_m)$. Because $P_{\text{cut}}(\cdot, k)$ is smooth at $\sigma = 0$ with $P_{\text{cut}}(0, k) = (k-1)/k$, a first-order Taylor expansion gives
\begin{equation}
    P_{\text{cut}}(\sigma, k) = \frac{k-1}{k} - \alpha_k\, \sigma + O(\sigma^2),
    \qquad
    \alpha_k = -\left.\frac{\partial P_{\text{cut}}(\sigma, k)}{\partial \sigma}\right|_{\sigma = 0}.
    \label{eq:alpha_k_def}
\end{equation}
The constant $\alpha_k$ admits an explicit description in terms of the argmax regions
\begin{equation}
    R_a = \{\, y \in \mathbb{R}^k : y_a \geq y_b \ \text{for all } b \,\}, \qquad a \in [k].
    \label{eq:region_def}
\end{equation}
Writing $Z \sim \mathcal{N}(0, I_k)$ and $\mu^{(a)} = \mathbb{E}[\,Z\, \mathbf{1}\{Z \in R_a\}\,]$ for the truncated mean of $Z$ on the region $R_a$, one has
\begin{equation}
    \alpha_k = \sum_{a=1}^{k} \bigl\| \mu^{(a)} \bigr\|^2,
    \label{eq:alpha_k_integral}
\end{equation}
with $\alpha_2 = 1/\pi$ exactly; the values for larger $k$ are collected in \cref{tab:cm_alphak}. Substituting \eqref{eq:sigma_asymp} into \eqref{eq:alpha_k_def} yields the root-only coefficient
\begin{equation}
    C_{m,k}(0) = \alpha_k\, c_m.
    \label{eq:C_tau0}
\end{equation}
This is the generalization of the Thompson--Parekh--Marwaha guarantee to $k$ labels: since $\alpha_k > 0$ and $c_m > 0$, the expected cut fraction exceeds the random baseline $(k-1)/k$ by $\Theta(1/\sqrt{d})$ on every $d$-regular graph of girth $g \geq 2m$. The two ingredients of this coefficient are the girth factor $c_m$, which is set by the vector construction alone and grows towards its ceiling $2$ as the girth increases, and the rounding constant $\alpha_k$, which is set by the number of labels alone; their product $\alpha_k c_m$ is the coefficient of the root-only rule. \Cref{tab:cm_alphak} lists both.

\begin{table}[htb]
\centering
\begin{tabular}{c|ccccc}
\toprule
$m$ & 2 & 3 & 4 & 5 & 6 \\
\midrule
$c_m$ & 1.000 & 1.414 & 1.618 & 1.732 & 1.802 \\
\bottomrule
\end{tabular}
\quad
\begin{tabular}{c|ccccc}
\toprule
$k$ & 2 & 3 & 4 & 5 & 6 \\
\midrule
$\alpha_k$ & 0.318 & 0.358 & 0.353 & 0.338 & 0.321 \\
\bottomrule
\end{tabular}
\caption{The two factors of the root-only ($\tau = 0$) Local Vector coefficient $C_{m,k}(0) = \alpha_k c_m$. \emph{Left:} the girth factor $c_m = 2\cos(\pi/(m+1))$ from the eigenvalue expansion \eqref{eq:sigma_asymp} (so $c_2 = 1$, $c_3 = \sqrt2$, $c_4 = \varphi$), which increases towards $2$ as $m \to \infty$. \emph{Right:} the rounding constant $\alpha_k$ from the Gaussian integral \eqref{eq:alpha_k_integral}, with $\alpha_2 = 1/\pi$ exactly.}
\label{tab:cm_alphak}
\end{table}

For $\tau > 0$ the cut probability is an integral over the scores on the one-step edge neighborhood, and its leading coefficient is controlled by a central limit theorem for the external message sum $\sum_{u \in N(v)} M_u(a)$. The relevant statistics are those of a single vertex. For $z \in \mathbb{R}^k$ recall the gap message $m_a(z) = \max_{b \neq a} z_b - \max_b z_b$ from \eqref{eq:gap_message}, and let $\psi_a(z) = m_a(z) - \tfrac{1}{k}\sum_c m_c(z)$ be its label-centered version. Under $Z \sim \mathcal{N}(0, I_k)$ define the $k \times k$ matrices
\begin{equation}
    \Sigma_{ab} = \mathbb{E}[\psi_a(Z)\, \psi_b(Z)], \qquad
    Q_{pa} = \mathbb{E}[Z_p\, \psi_a(Z)], \qquad
    B = Q^{\top} Q.
    \label{eq:message_stats}
\end{equation}
The girth dependence enters only through the leading coefficients of the tree-distance vector overlaps, which we write as
\begin{equation}
    \rho_L(d, m) = \frac{r_{L,m}}{d^{L/2}} + O\!\left(\frac{1}{d^{(L+1)/2}}\right),
    \qquad r_{1,m} = -c_m,
    \label{eq:tree_constants}
\end{equation}
with $r_{2,m}$ and $r_{3,m}$ obtained in closed form from the minimum eigenvector of $A_m$. Assembling these into the matrices
\begin{align}
    \Lambda_\tau &= I - \tau c_m (Q + Q^{\top}) + \tau^2 (\Sigma + r_{2,m} B), \label{eq:Lambda_def} \\
    \Pi_\tau &= -c_m I + \tau r_{2,m} (Q + Q^{\top}) + \tau^2 r_{3,m} B, \label{eq:Pi_def}
\end{align}
we take $(X, Y)$ to be the jointly Gaussian pair in $\mathbb{R}^k \times \mathbb{R}^k$ with $\mathrm{Cov}(X) = I$, $\mathrm{Cov}(Y) = \Lambda_\tau$, and $\mathrm{Cov}(X, Y) = I - \tau c_m Q$, and for each label $a$ define the vectors
\begin{equation}
    \lambda^{(a)}(\tau) = \Lambda_\tau^{-1}\, \mathbb{E}\!\left[\,Y\, \mathbf{1}\{Y \in R_a\}\,\right],
    \qquad
    \nu^{(a)}(\tau) = \mathbb{E}\!\left[\,\psi(X)\, \mathbf{1}\{Y \in R_a\}\,\right].
    \label{eq:lambda_nu_def}
\end{equation}
\paragraph{Derivation of $C_{m,k}(\tau)$.}
Fix an edge $\{o,o'\}$ and write $\varepsilon=d^{-1/2}$. The score vectors $X_v=(X_v(1),\ldots,X_v(k))$ are jointly Gaussian, with
\begin{equation}
    \mathbb{E}\bigl[X_u X_w^{\top}\bigr] = (\mathbf v_u \cdot \mathbf v_w)\, I_k = \rho_{\mathrm{dist}(u,w)}(d,m)\, I_k,
\end{equation}
where \eqref{eq:tree_constants} gives $\rho_1=-c_m\varepsilon+o(\varepsilon)$, $\rho_2=r_{2,m}\varepsilon^2+o(\varepsilon^2)$, and $\rho_3=r_{3,m}\varepsilon^3+o(\varepsilon^3)$. Replacing $M_u$ by its label-centered version $\psi(X_u)$ does not change the $\arg\max$, because the removed term is the same for every label. The rounding score therefore splits as
\begin{equation}
    S_o=Y_o+\varepsilon\tau\psi(X_{o'}),\qquad
    Y_o=X_o+\tau F_o,\qquad
    F_o=\frac{1}{\sqrt d}\sum_{u\in N(o)\setminus\{o'\}}\psi(X_u),
    \label{eq:score_split}
\end{equation}
and symmetrically at $o'$. Here $F_o$ aggregates the $d-1$ messages entering $o$ from outside the focal edge, while the addend $\varepsilon\tau\psi(X_{o'})$ keeps the direct message across that edge separate.

Let $h_a(s)=\mathbf 1\{s\in R_a\}$. Label symmetry gives $\mathbb{E}h_a(S_v)=1/k$, and hence
\begin{equation}
    P^{(\tau)}_{d,m,k}
    =\frac{k-1}{k}-\sum_{a=1}^k
    \operatorname{Cov}\bigl(h_a(S_o),h_a(S_{o'})\bigr).
    \label{eq:cut_as_cov}
\end{equation}
Thus the gain over random is minus the same-label covariance across the edge.

For standard Gaussian vectors with $\mathbb{E}[XX'^{\top}]=\rho I_k$, Gaussian regression and Mehler's formula give
\begin{equation}
    \mathbb{E}[X\psi(X')^{\top}]=\rho Q,
    \qquad
    \operatorname{Cov}(\psi(X),\psi(X'))=\rho B+O(\rho^2).
    \label{eq:message_cov_expansion}
\end{equation}
Thus $\Sigma$ in \eqref{eq:message_stats} is the variance of one message, $Q$ is its linear correlation with the sender's score, and $B=Q^{\top}Q$ is the leading correlation between messages from weakly correlated senders.
In the tree neighborhood, two distinct neighbors of one endpoint are at distance $2$, whereas one outer neighbor of $o$ and one outer neighbor of $o'$ are at distance $3$. Applying \eqref{eq:message_cov_expansion} and counting these vertices gives
\begin{align}
    \mathbb{E}[X_oF_o^{\top}]&=-c_mQ+o(1), &
    \operatorname{Cov}(F_o)&=\Sigma+r_{2,m}B+o(1), \label{eq:field_moments}\\
    \operatorname{Cov}(Y_o)&=\Lambda_\tau+o(1), &
    \mathbb{E}[X_oY_o^{\top}]&=I-\tau c_mQ+o(1), \label{eq:effective_moments}\\
    \operatorname{Cov}(Y_o,Y_{o'})&=\varepsilon\Pi_\tau+o(\varepsilon). && \label{eq:effective_cross_cov}
\end{align}
For example, the $d-1$ individual variances in $F_o$ produce $\Sigma$, while its $\Theta(d^2)$ pairs of neighbors have correlation $\rho_2=\Theta(1/d)$ and produce $r_{2,m}B$. Across the edge, the bare scores produce $-c_mI$ in $\Pi_\tau$, the mixed score and field terms produce $\tau r_{2,m}(Q+Q^{\top})$, and the $\Theta(d^2)$ pairs of outer neighbors at distance $3$ produce $\tau^2r_{3,m}B$.

The central limit theorem enters precisely here. Each $F_o$ is a normalized sum of $d-1$ centered messages of bounded variance. Their pairwise correlations are $O(1/d)$, and a Hermite expansion shows that cumulants of order at least three vanish after the $d^{-1/2}$ normalization. Consequently $(X_o,Y_o)$ converges to the Gaussian pair $(X,Y)$ specified above \eqref{eq:lambda_nu_def}, with covariance \eqref{eq:effective_moments}; the two endpoints are coupled to first order by \eqref{eq:effective_cross_cov}. Thus the CLT applies to the collective field of the many neighbors of each endpoint, not to a single message.

It remains to translate these score correlations into label correlations. Price's theorem and Gaussian integration by parts give
\begin{equation}
    \mathbb{E}[\nabla h_a(Y)]
    =\Lambda_\tau^{-1}\mathbb{E}[Yh_a(Y)]
    =\lambda^{(a)}(\tau).
\end{equation}
Thus $\lambda^{(a)}$ is the sensitivity of the event that label $a$ wins. The covariance $\varepsilon\Pi_\tau$ between $Y_o$ and $Y_{o'}$ contributes $\varepsilon\lambda^{(a)\top}\Pi_\tau\lambda^{(a)}$. The direct message from $o'$ to $o$ contributes $\varepsilon\tau\lambda^{(a)\top}\nu^{(a)}$, because $\nu^{(a)}=\mathbb{E}[\psi(X)h_a(Y)]$ is the mean message sent when label $a$ wins; the reverse message contributes the same amount. Therefore
\begin{equation}
    \operatorname{Cov}\bigl(h_a(S_o),h_a(S_{o'})\bigr)
    =\varepsilon\Bigl[\lambda^{(a)\top}\Pi_\tau\lambda^{(a)}
    +2\tau\lambda^{(a)\top}\nu^{(a)}\Bigr]+o(\varepsilon).
\end{equation}
Substitution into \eqref{eq:cut_as_cov} gives
\begin{equation}
    C_{m,k}(\tau) = -\sum_{a=1}^{k} \Bigl[\, \lambda^{(a)}(\tau)^{\top}\, \Pi_\tau\, \lambda^{(a)}(\tau) + 2\tau\, \lambda^{(a)}(\tau)^{\top} \nu^{(a)}(\tau) \,\Bigr].
    \label{eq:working_formula}
\end{equation}
This coefficient is minus the leading same-label covariance across an edge. The matrix $\Pi_\tau$ records correlation transmitted through the many-neighbor fields, while $\nu^{(a)}$ records the direct instruction to avoid the neighbor's winning label. Every quantity in \eqref{eq:working_formula} is an expectation under a Gaussian of dimension at most $2k$, independent of $d$, so the coefficient can be evaluated by Monte Carlo integration. We have also verified \eqref{eq:working_formula} against direct Monte Carlo simulation of \cref{alg:local_vector} on the edge tree, and \cref{fig:loglog_validation} confirms the resulting $1/\sqrt d$ law and coefficients $C^{\star}_{m,k}$ numerically.

We can now state the guarantee for the full family of rounding rules.

\begin{theorem}
\label{thm:beats_random}
Let $k \geq 2$ and $m \geq 2$, and consider the Local Vector algorithm (\cref{alg:local_vector}) run with any fixed parameter $\tau \geq 0$ on a $d$-regular graph of girth $g \geq \max(2m, 6)$. Its expected cut fraction obeys
\begin{equation}
    \mathbb{E}[\emph{cut fraction}] = \frac{k-1}{k} + \frac{C_{m,k}(\tau)}{\sqrt{d}} + o\!\left(\frac{1}{\sqrt{d}}\right) \qquad (d \to \infty),
    \label{eq:thm_asymptotic}
\end{equation}
where $C_{m,k}(\tau)$ is given by \eqref{eq:working_formula}. It is continuous near $\tau=0$ and satisfies $C_{m,k}(0)=\alpha_kc_m>0$. Consequently, there is a threshold $\tau_0>0$ such that the expected cut fraction exceeds $(k-1)/k$ by $\Theta(1/\sqrt d)$ for every $\tau\in[0,\tau_0]$ and all sufficiently large $d$.
\end{theorem}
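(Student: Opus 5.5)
The plan has three parts: reduce the expected cut fraction to a single edge-tree quantity, make the heuristic derivation preceding the theorem rigorous, and then read off positivity near $\tau=0$ from continuity.

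\emph{Reduction to one edge.} Under the girth condition $g\geq\max(2m,6)$, the law of $(S_o,S_{o'})$ is determined by the one-step edge neighborhood together with the distance-shell overlaps $\rho_L(d,m)$. This neighborhood is the same tree for every edge. By linearity of expectation, the expected cut fraction therefore equals $P^{(\tau)}_{d,m,k}$, and by \eqref{eq:cut_as_cov} it suffices to show that
\[
\sum_a\operatorname{Cov}(h_a(S_o),h_a(S_{o'}))=\varepsilon K(\tau)+o(\varepsilon), \qquad \varepsilon=d^{-1/2},
\]
where $K(\tau)$ is the bracket in \eqref{eq:working_formula}.

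\emph{Main step: making the covariance expansion rigorous.} I would proceed in three stages.

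First, I would use the split \eqref{eq:score_split} and condition on nothing. The direct terms $\varepsilon\tau\psi(X_{o'})$ are Lipschitz perturbations of size $O(\varepsilon)$, since $\psi$ is $2$-Lipschitz. Because $Y_o$ is asymptotically Gaussian with nondegenerate covariance, a first-order expansion of $\mathbb{E}h_a(Y_o+\varepsilon w)$ applies. This needs a density bound for $Y_o$, which I would obtain for small $\tau$ from $Y_o=X_o+\tau F_o$: the variable $X_o$ is exactly Gaussian with covariance $I$, and only weakly dependent on $F_o$. The expansion then produces the $2\tau\lambda^{(a)\top}\nu^{(a)}$ term.

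Second, for the cross term $\operatorname{Cov}(h_a(Y_o),h_a(Y_{o'}))$, I would write the joint vector of all $O(d^2)$ scores in the one-step neighborhood. These scores form a Gaussian vector whose cross-block covariances are $O(\varepsilon)$ entrywise, with only $O(d)$ or $O(d^2)$ entries at the respective orders $\varepsilon,\varepsilon^2,\varepsilon^3$. I would interpolate between this joint law and its block-decoupled version $\Gamma_t=\Gamma_0+t(\Gamma_1-\Gamma_0)$ using Gaussian interpolation (Price's theorem). Differentiating in $t$ gives a sum over cross pairs of second derivatives, in the score variables, of smoothed indicator functionals. Each such term factorizes at $t=0$ into a product of one-sided expectations, namely $\mathbb{E}[\partial h_a(Y)]=\lambda^{(a)}$ paired with the message regressions $Q$ from \eqref{eq:message_cov_expansion}. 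Summing over pairs reproduces $\varepsilon\lambda^{(a)\top}\Pi_\tau\lambda^{(a)}$. The corrections for $t>0$ are controlled by the squared size of the cross covariance, which is $o(\varepsilon)$ after the $d^{-1/2}$ normalizations.

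Third, the convergence $(X_o,Y_o)\Rightarrow(X,Y)$ follows from a CLT for weakly dependent sums. I would use a Stein/Hermite-expansion argument for the functionals $\psi(X_u)$ of a Gaussian field with pairwise correlations $O(1/d)$, in the form of the fourth-moment theorem or a chaos-expansion bound. This gives the limits of $\lambda^{(a)}$ and $\nu^{(a)}$ and yields $C_{m,k}(\tau)=-K(\tau)$.

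I expect the main obstacle to be the second stage, namely justifying the derivative of a non-smooth indicator. I would first mollify $h_a$ at scale $\delta$ and bound the mollification error by the anticoncentration of $Y_o$ near the boundaries of the $R_a$. That anticoncentration is inherited from the exact Gaussian component $X_o$. I would then send $\delta\to0$ slowly with $d$.

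\emph{Continuity and positivity.} At $\tau=0$ we have $\Lambda_0=I$, $\Pi_0=-c_mI$, and $\lambda^{(a)}(0)=\mu^{(a)}$, while the $\nu$-term carries the factor $\tau$. Hence $C_{m,k}(0)=c_m\sum_a\|\mu^{(a)}\|^2=\alpha_kc_m$, consistent with \eqref{eq:alpha_k_integral} and \eqref{eq:C_tau0}. Continuity near $0$ follows because $\Lambda_\tau$ is a polynomial in $\tau$ equal to $I$ at $0$, hence invertible nearby. The maps $\Lambda\mapsto\mathbb{E}[Y\mathbf 1\{Y\in R_a\}]$ and $\nu^{(a)}$ are continuous in the covariance by dominated convergence, since the boundaries of $R_a$ are null sets and $\psi$ has linear growth. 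Choosing $\tau_0$ so that $C_{m,k}(\tau)\geq\alpha_kc_m/2$ on $[0,\tau_0]$ then gives the $\Theta(1/\sqrt d)$ gain for all large $d$. For uniformity in $\tau$, I would note that all the error bounds above are uniform on the compact interval $[0,\tau_0]$.
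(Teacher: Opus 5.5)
Your route is the paper's own. Both arguments use the covariance identity \eqref{eq:cut_as_cov} and the split \eqref{eq:score_split}. Both apply a CLT to the aggregated field and use Price's theorem to turn score covariances into $\lambda^{(a)\top}\Pi_\tau\lambda^{(a)}+2\tau\lambda^{(a)\top}\nu^{(a)}$. Both finish with continuity at $\tau=0$, where $\Lambda_0=I$, $\Pi_0=-c_mI$ and $\lambda^{(a)}(0)=\mu^{(a)}$. Your interpolation and mollification layer adds rigor the paper does not supply.

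There is, however, a genuine gap in your reduction step, and the paper's derivation shares it: $g\geq\max(2m,6)$ does not make the law of $(S_o,S_{o'})$ the tree law. That law is fixed by the overlaps $\mathbf v_x\cdot\mathbf v_y$ for $x,y\in\{o,o'\}\cup N(o)\cup N(o')$. Each overlap depends on the radius-$(m-1)$ balls around $x$ and $y$, i.e.\ on the radius-$m$ ball around the edge. Girth $\geq 2m$ pins down only $\rho_1$; for pairs at distance $2$ or $3$ the supports can also meet through cycles of length $2m$ or $2m+1$.

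Concretely, take $m=3$ and the incidence graph of a projective plane of order $q$, which has $d=q+1$ and girth $6$. For two lines $u,u'$ through a point $o$, every other line is at distance $2$ from both. Hence $\mathbf v_u\cdot\mathbf v_{u'}=2\alpha_0\alpha_2+\alpha_1^2+(q^2+q-1)\alpha_2^2\to\beta_2^2=1/4$, not $O(1/d)$. Mehler's formula gives $\operatorname{Cov}(\psi(X_u),\psi(X_{u'}))\succeq\rho_{uu'}B$ for each such pair. Hence $\operatorname{Cov}(F_o)\succeq\bigl(\tfrac14+o(1)\bigr)\tfrac{(d-1)(d-2)}{d}B$ grows linearly in $d$. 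For every fixed $\tau>0$ the field then swamps $X_o$, so \eqref{eq:field_moments} and the expansion with tree constants both fail. The $\tau>0$ part needs the radius-$m$ edge ball to be a tree, i.e.\ $g\geq 2m+2$, which coincides with the stated hypothesis only for $m=2$. The value $C_{m,k}(0)=\alpha_kc_m$ uses only $\rho_1$ and is unaffected.

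There are two smaller problems in your analytic program.

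\emph{Density of $Y_o$.} $X_o$ is not weakly dependent on $F_o$: by \eqref{eq:field_moments} their cross-covariance tends to $-c_mQ$, which is of order one. Obtain the density of $Y_o$, also conditionally on the $o'$ side, from the residual of $X_o$ after Gaussian regression on the other neighborhood scores. That residual is independent of them. You must check that its covariance stays nondegenerate as $d\to\infty$; it does for fixed $m$.

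\emph{Mollification error.} Anticoncentration alone bounds the mollification error only by $O(\delta)$, which is not $o(\varepsilon)$ if $\delta\to0$ slowly. Meanwhile your second-order interpolation remainder carries negative powers of $\delta$, so the two requirements pull in opposite directions. The error must inherit the weak dependence. Gebelein's inequality, with the $O(\varepsilon)$ maximal canonical correlation between the two Gaussian blocks, gives $O(\varepsilon\sqrt\delta)$. Alternatively, skip smoothing and expand the bounded functionals in the Hermite basis of canonical coordinates. There the first-order term is the covariance formula and the remainder is $O(\varepsilon^2)$. Either way, the direct message must be split off first, since $S_o$ and $S_{o'}$ share $X_o$ and $X_{o'}$.
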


\begin{proof}
The central limit argument above establishes \eqref{eq:thm_asymptotic} and \eqref{eq:working_formula}. The Gaussian expectations in \eqref{eq:working_formula} vary continuously with $\tau$, as does $\Lambda_\tau^{-1}$ near $\tau=0$. Hence $C_{m,k}(\tau)$ is continuous near zero. At $\tau=0$, we have $\Lambda_0=I$, $\Pi_0=-c_mI$, and $\lambda^{(a)}(0)=\mu^{(a)}$. Therefore \eqref{eq:working_formula} reduces to $C_{m,k}(0)=c_m\sum_a\|\mu^{(a)}\|^2=\alpha_kc_m>0$. Continuity gives the claimed $\tau_0$, and \eqref{eq:thm_asymptotic} gives the stated improvement over the random baseline.
\end{proof}

At $\tau=0$, the exact finite-degree cut fraction is also available. The edge correlation is $\lambda_{\min}(A_m)$ at degree $d$, so
\begin{equation}
    \mathbb{E}[\text{cut fraction}] = P_{\text{cut}}\bigl(\lambda_{\min}(A_m),\, k\bigr).
    \label{eq:exact_cut_tau0}
\end{equation}
Inserting \eqref{eq:sigma_asymp} recovers \eqref{eq:C_tau0}. For $\tau>0$, the exact finite-degree value is the corresponding Gaussian integral over the one-step edge neighborhood with the exact overlaps $\rho_L(d,m)$. We evaluate this integral directly in \cref{subsec:finite_comparison}.

\begin{figure}[htb]
    \centering
    \includegraphics[width=\textwidth]{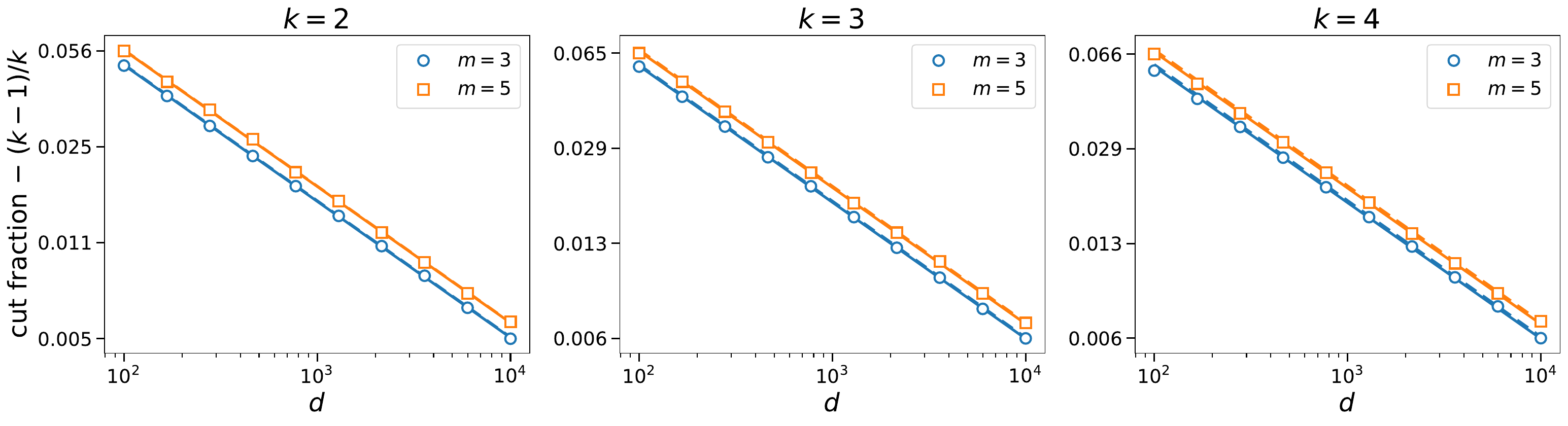}
    \caption{Numerical check of the large-degree asymptotics of \cref{thm:beats_random}. For each $k \in \{2,3,4\}$ and $m \in \{3,5\}$, we fix $\tau$ to the asymptotic optimum predicted by \eqref{eq:working_formula} and estimate the expected cut fraction on the local edge tree for degrees from $10^2$ to $10^4$. The measured gain over $(k-1)/k$ follows the $1/\sqrt d$ law, and the fitted coefficients agree with $C^{\star}_{m,k}$ to within about $0.01$.}
    \label{fig:loglog_validation}
\end{figure}

Optimizing \eqref{eq:working_formula} over $\tau$ gives the coefficients $C^{\star}_{m,k}$ reported in \cref{tab:asymptotic_coefficients}. For every $k \geq 2$ and every $m$ with $g \geq 6$ we find $C^{\star}_{m,k} > \alpha_k c_m$, so the neighbor-message rule strictly improves the leading $1/\sqrt{d}$ coefficient over the generalized Thompson--Parekh--Marwaha rounding, by up to roughly $0.10$ in absolute terms, with the largest gains at moderate $k$. The optimal parameter $\tau^{\star}$ is of order one and the coefficient is flat near its optimum over $\tau$. These coefficients are compared against the QAOA coefficient in \cref{subsec:infinite_comparison} (\cref{fig:infinite_comparison}).

However, the improvement is not uniform in the number of labels, and the case $k = 2$ behaves qualitatively differently from $k \geq 3$. For $k \geq 3$ the gap message is a genuinely nonlinear function of the neighbor scores, and the strict improvement $C^{\star}_{m,k} > \alpha_k c_m$ persists as the girth grows. For $k = 2$, by contrast, the message degenerates into a linear filter and the improvement, while strictly positive at every finite girth, washes out as the girth increases. 

Writing $D_v = X_v(1) - X_v(2)$ for the score difference, the gap message satisfies $M_v(1) - M_v(2) = -D_v$ identically, so the message rule reduces to the linear filter $\widetilde D_v = D_v - \tfrac{\tau}{\sqrt{d}} \sum_{u \in N(v)} D_u$. Because $(\widetilde D_o, \widetilde D_{o'})$ is jointly Gaussian, the cut probability is $\tfrac{1}{2} - \tfrac{1}{\pi}\arcsin \rho$ with $\rho$ the correlation across the edge, and $\rho$ is a weighted average of the one-step kernel over the spectrum of the $d$-regular tree, which is bounded below by the spectral edge $-2\sqrt{d-1}/d$. Consequently no choice of shell vectors or of $\tau$ can push the $k = 2$ coefficient above $2/\pi$, which the improved and root-only rules share as a common infinite-girth limit; this is the degeneracy anticipated above, absent for $k \geq 3$. The convergence of the two $k=2$ curves toward $2/\pi$ is visible in the left-most panel of \cref{fig:infinite_comparison}.

\begin{proposition}
\label{prop:k2_ceiling}
For $k = 2$, every girth $m$, and every $\tau \geq 0$, the Local Vector coefficient satisfies
\begin{equation}
    C_{m,2}(\tau) \leq \frac{2}{\pi},
\end{equation}
and the bound is approached only as $m \to \infty$ and $\tau \to 0$.
\end{proposition}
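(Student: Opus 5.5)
For $k=2$ the whole problem is Gaussian, so I would reduce $C_{m,2}(\tau)$ to a single edge correlation and bound that correlation spectrally on the infinite $d$-regular tree.

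\textbf{Step 1: linearize the rule.} Set $D_v = X_v(1)-X_v(2)$. Since $M_v(1)-M_v(2) = -D_v$ identically, the rule \eqref{eq:local_rule} reduces to the sign of
\[
\widetilde D_v = D_v - \frac{\tau}{\sqrt d}\sum_{u\in N(v)} D_u .
\]
The field $(D_v)$ is centered Gaussian with covariance $2\rho_{\mathrm{dist}(u,w)}(d,m)$. So $(\widetilde D_o,\widetilde D_{o'})$ is bivariate Gaussian, and the cut probability is exactly $\tfrac12-\tfrac1\pi\arcsin\rho$, where $\rho$ is its correlation.

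\textbf{Step 2: write $\rho$ as a spectral average.} On the tree, $\mathbf v_v=\sum_j \alpha_{\mathrm{dist}(v,j)} e_j = f(A)e_v$ for the polynomial $f(x)=\sum_\ell \alpha_\ell P_\ell(x)$. Here $P_\ell$ are the distance polynomials of the $d$-regular tree, so $P_\ell(A)e_v$ is the indicator of the distance-$\ell$ sphere around $v$. Hence
\[
\widetilde D_v \propto \bigl(g(A)e_v\bigr)\cdot \mathbf r, \qquad g(x)=f(x)\Bigl(1-\frac{\tau}{\sqrt d}\,x\Bigr),
\]
which gives
\[
\rho=\frac{\langle e_o, g(A)^2 A e_o\rangle /d}{\langle e_o,g(A)^2 e_o\rangle}=\frac{\int \frac{x}{d}\,g(x)^2\,d\mu_{KM}(x)}{\int g(x)^2\,d\mu_{KM}(x)},
\]
where $\mu_{KM}$ is the Kesten--McKay spectral measure at a vertex. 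This uses $A e_o$ summed against $e_{o'}$ and vertex-transitivity, so that the edge average equals the average over the $d$ neighbors. The tree structure is valid because $g\ge\max(2m,6)$ makes the relevant $(m+1)$-neighborhood a tree.

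\textbf{Step 3: bound and take the limit.} The support of $\mu_{KM}$ is $[-2\sqrt{d-1},2\sqrt{d-1}]$, so $\rho\ge -2\sqrt{d-1}/d\ge -2/\sqrt d$. Since $\arcsin$ is increasing, $\arcsin\rho\ge \arcsin(-2/\sqrt d)=-2/\sqrt d+O(d^{-3/2})$. This gives $P\le \tfrac12+\tfrac{2}{\pi\sqrt d}+O(d^{-3/2})$ uniformly, and therefore $C_{m,2}(\tau)\le 2/\pi$.

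\textbf{Step 4: when equality is approached.} Rescale $x=\sqrt d\,s$; $\mu_{KM}$ then converges to the semicircle law on $[-2,2]$, and $g$ becomes a polynomial $\tilde g(s)$ of degree at most $m$ in $s$ (degree $m-1$ from $f$, one more from the $\tau$ factor). The coefficient is
\[
C_{m,2}(\tau) = -\frac{2}{\pi}\,\frac{\int s\,\tilde g(s)^2\,d\mathrm{sc}(s)}{\int \tilde g^2\,d\mathrm{sc}(s)} .
\]
Equality $C=2/\pi$ would require $\tilde g^2\,d\mathrm{sc}$ to concentrate at $s=-2$. A nonzero polynomial of bounded degree cannot do this, since the semicircle density is absolutely continuous. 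So the bound is strict for each finite $m$, which addresses the ``only as $m\to\infty$'' part.

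The main obstacle is the \emph{$\tau\to0$} clause. I would try to show that the optimal Rayleigh quotient over the admissible family $\tilde g=\tilde f\cdot(1-\tau s)$, with $\deg\tilde f\le m-1$ and $\tilde f$ restricted to the eigenvector structure of $A_m$ (the shell coefficients are fixed by the minimum eigenvector, not optimized jointly with $\tau$), tends to $-2$ only if $\tau\to0$. The factor $(1-\tau s)$ equals $1+2\tau$ at $s=-2$. I expect it to reweight mass away from the minimizing Chebyshev-type profile unless $\tau$ vanishes, but proving this cleanly is the delicate step.
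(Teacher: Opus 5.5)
Your Steps 1--3 are the paper's own justification, which is just the paragraph preceding the proposition: the linear filter $\widetilde D_v$, the formula $\tfrac12-\tfrac1\pi\arcsin\rho$, and $\rho$ written as a tree-spectral average bounded by the spectral edge $-2\sqrt{d-1}/d$. That proves $C_{m,2}(\tau)\le 2/\pi$. The paper gives no argument for the remaining clause.

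Step 4 needs two repairs. First, there is a factor-of-two slip. Since $\alpha_\ell P_\ell(\sqrt d\,s)\to\beta_\ell U_\ell(s/2)$, you have $\sqrt d\,\rho\to R:=\int s\,\tilde g^2\,d\mathrm{sc}/\int\tilde g^2\,d\mathrm{sc}$. The coefficient is therefore $C_{m,2}(\tau)=-R/\pi$, not $-2R/\pi$; your version gives $2c_m/\pi$ at $\tau=0$ instead of $\alpha_2c_m=c_m/\pi$. Second, strictness at fixed $m$ can be made uniform in $\tau$. In the orthonormal basis $U_\ell(s/2)$ of $L^2(\mathrm{sc})$, multiplication by $s$ is the infinite tridiagonal matrix $J$ with unit off-diagonals. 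So $\deg\tilde g\le m$ gives $R\ge\lambda_{\min}(T_{m+1})=-c_{m+1}$, i.e.\ $C_{m,2}(\tau)\le c_{m+1}/\pi<2/\pi$ for all $\tau\ge0$. A minor point: $g\ge\max(2m,6)$ does not make the radius-$(m+1)$ neighborhood a tree, since a $2m$-cycle fits inside it. This is harmless, because $C_{m,2}(\tau)$ is by definition the tree quantity \eqref{eq:working_formula}.

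The genuine gap is the ``$\tau\to0$'' clause. The step you flag as delicate cannot be carried out, because the clause is false. Let $\tilde f_m=\sum_{\ell<m}\beta_\ell U_\ell(s/2)$ be the fixed root-only profile.

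\emph{Soft argument.} The probability measures $\nu_m\propto\tilde f_m^2\,d\mathrm{sc}$ have mean $\beta^{\top}T_m\beta=-c_m\to-2$, which is the left endpoint of their support, so $\nu_m\Rightarrow\delta_{-2}$. The weight $(1-\tau s)^2$ is continuous and equals $(1+2\tau)^2>0$ at $s=-2$. Hence the reweighted measures also converge to $\delta_{-2}$, and $C_{m,2}(\tau)\to2/\pi$ for every fixed $\tau$.

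\emph{Explicit formula.} Index coordinates from $0$. Then $J\beta=-c_m\beta+\epsilon_m e_m$ with $\epsilon_m^2=\frac{2}{m+1}\sin^2\frac{\pi}{m+1}$, so $\tilde g\leftrightarrow(1+c_m\tau)\beta-\tau\epsilon_m e_m$ and
\[
C_{m,2}(\tau)=\frac{1}{\pi}\,\frac{c_m+2\theta\epsilon_m^2}{1+\theta^2\epsilon_m^2},\qquad\theta=\frac{\tau}{1+c_m\tau}\in\Bigl[0,\frac{1}{c_m}\Bigr).
\]
This has three consequences:
\begin{enumerate}
\item $c_m/\pi\le C_{m,2}(\tau)\le c_m/\pi+\epsilon_m^2/(\pi c_m)$ for all $\tau\ge0$.
\item The optimum is $C^{\star}_{m,2}=\bigl(c_m+\sqrt{c_m^2+4\epsilon_m^2}\bigr)/(2\pi)$, about $0.501$ at $m=3$, consistent with \cref{tab:asymptotic_coefficients}.
\item The optimum is attained at $\tau^\star_m=\pi C^\star_{m,2}/\epsilon_m^2\approx(m+1)^3/\pi^2$, which grows rather than vanishes.
\end{enumerate}

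What is true, and what the paper's ``common infinite-girth limit'' remark actually supports, is the following. At every finite $m$, $C_{m,2}(\tau)<2/\pi$ uniformly in $\tau$. As $m\to\infty$, $C_{m,2}(\tau)\to2/\pi$ for every $\tau\ge0$. The gain of the message rule over root-only rounding is $O(m^{-3})$. Prove that statement instead, and drop the $\tau\to0$ clause.
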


\section{Quantum Approximate Optimization Algorithm}
\label{sec:qaoa}

Having established what the best local classical algorithm can guarantee, we turn to the quantum side and analyze the Quantum Approximate Optimization Algorithm (QAOA) on the same class of graphs. Our aim is to place QAOA and the Local Vector algorithm on a common footing: we express the QAOA cut fraction, at fixed depth $p$ and on $d$-regular graphs of girth $g \geq 2p+2$, in the same $(k-1)/k + \Theta(1/\sqrt d)$ normalization used throughout \cref{sec:vector_algorithm}. As with the vector algorithm, the locally tree-like structure is what makes the analysis tractable, and the same tree neighborhoods of \cref{fig:tree_structure} govern both algorithms. We first set up the algorithm and reduce the cut fraction to an exact tensor contraction over the tree, then describe the finite-degree evaluation, and finally take the $d \to \infty$ limit through a qudit--boson reformulation that yields the leading coefficient.

\subsection{QAOA for Max-$k$-Cut}
\label{subsec:qaoa_framework}

The Quantum Approximate Optimization Algorithm (QAOA) \cite{farhi2014quantumapproximateoptimizationalgorithm} is a variational quantum algorithm for combinatorial optimization. For Max-$k$-Cut, each vertex is encoded as a qudit with $k$ levels, where the computational basis state $\ket{a}$ ($a \in \mathbb{Z}_k$) represents assigning label $a$ to that vertex.

The depth-$p$ QAOA prepares the state
\begin{equation}
    \ket{\bm{\gamma}, \bm{\beta}} = \left( \prod_{t=1}^{p} U_M(\beta_t) U_C(\gamma_t) \right) \ket{+}^{\otimes n},
\end{equation}
where $\ket{+} = k^{-1/2} \sum_{a=0}^{k-1} \ket{a}$ is the uniform superposition over labels. The \emph{phaser} $U_C(\gamma) = e^{-i\gamma H_C}$ encodes the cost function via the Hamiltonian
\begin{equation}
    H_C = \sum_{\{u,v\} \in E} P_{uv}, \qquad P_{uv} = \sum_{a \in \mathbb{Z}_k} \ket{aa}\bra{aa}_{uv},
\end{equation}
which penalizes edges with endpoints of the same label. The \emph{mixer} $U_M(\beta) = \bigotimes_{u \in V} U_{M,u}(\beta)$ generates transitions between labels; in this work we focus on the \emph{Grover mixer}
\begin{equation}
    U_{M}^{\text{Grover}}(\beta) = \bigotimes_{u \in V} e^{-i\beta \ket{+}\bra{+}_u}.
\end{equation}
The QAOA cost expectation is
\begin{equation}
    \langle C_G \rangle = \bra{\bm{\gamma}, \bm{\beta}} C_G \ket{\bm{\gamma}, \bm{\beta}} = |E| \cdot \bra{\bm{\gamma}, \bm{\beta}} (I - P_{uv}) \ket{\bm{\gamma}, \bm{\beta}},
\end{equation}
where the edge expectation is independent of the choice of edge on regular graphs, and it relates to the cut fraction via
\begin{equation}
    \text{cut fraction} = 1 - \frac{\langle H_C \rangle}{|E|} = 1 - \langle P_{uv} \rangle,
    \label{eq:qaoa_cut_fraction}
\end{equation}
with $\langle P_{uv} \rangle$ the probability that an edge has endpoints of the same label.

The locality of QAOA makes the edge expectation $\langle P_{uv} \rangle$ computable exactly when the graph has sufficiently large girth relative to the circuit depth $p$. Since the mixer acts vertex-by-vertex and the phaser acts on edges, the correlations produced by $p$ QAOA layers are confined to a subgraph of radius $p$ around each vertex. For an edge $\{u,v\}$, the expectation depends only on vertices within distance $p$ of $u$ and $v$, and if the girth satisfies $g \geq 2p + 2$ these neighborhoods form trees with no cycles (see \cref{fig:tree_structure}), so the edge expectation can be evaluated by contracting a tree tensor network \cite{qaoa_maxcut_high_depth, lower_bounding_max_cut_qaoa}. For example, depth $p=1$ requires girth $g \geq 4$ (triangle-free), $p=2$ requires $g \geq 6$, and $p=3$ requires $g \geq 8$, matching the girth requirements of the vector algorithm under the convention $m = p+1$. In particular, for a $d$-regular graph of girth $g \geq 2p+2$ the QAOA cut fraction depends only on $(k, d, p)$ and the chosen angles $(\bm{\gamma}, \bm{\beta})$, not on the specific graph instance.

\begin{figure}[h!]
  \centering
  \definecolor{cplus}{HTML}{E6EEFB}
  \definecolor{cplusec}{HTML}{5F7FC0}
  \definecolor{cgamma}{HTML}{CDD8F2}
  \definecolor{cgammaec}{HTML}{3F5AA6}
  \definecolor{cbeta}{HTML}{D7EDCB}
  \definecolor{cbetaec}{HTML}{4F9040}
  \definecolor{cobs}{HTML}{F6DEDE}
  \definecolor{cobsec}{HTML}{B23B3B}
  \definecolor{cwire}{HTML}{333333}
  \definecolor{shade}{HTML}{8F7FCF}
  \definecolor{dlabc}{HTML}{4A3F8F}
  \definecolor{ctens}{HTML}{D9CFF2}
  \definecolor{cgray}{HTML}{555555}
  \tikzset{
    wire/.style={draw=cwire, line width=0.9pt},
    tbox/.style={rounded corners=1.5pt, line width=1.0pt, inner sep=0pt, font=\large, minimum width=0.760cm, minimum height=0.600cm},
    ilab/.style={text=cwire, font=\large},
    dlab/.style={text=dlabc, font=\large, anchor=south},
    tag/.style={anchor=south west, font=\LARGE},
    eqlab/.style={font=\large},
    vallab/.style={font=\large},
    tlab/.style={font=\LARGE},
  }
  \resizebox{0.9\textwidth}{!}{%
  \begin{tikzpicture}[x=1cm, y=0.85cm]
      \fill[cgray, opacity=0.10, rounded corners=5pt] (-0.55,-0.85) rectangle (4.16,14.15);
      \fill[cgray, opacity=0.10, rounded corners=5pt] (8.79,-0.85) rectangle (13.50,14.15);
      \node[dlab] at (3.7,14.00) {$d-1$};
      \node[dlab] at (9.25,14.05) {$d-1$};
      \fill[cgray, opacity=0.16, rounded corners=5pt] (-0.55,-0.85) rectangle (2.31,14.15);
      \fill[cgray, opacity=0.16, rounded corners=5pt] (10.64,-0.85) rectangle (13.50,14.15);
      \node[dlab] at (1.85,14.05) {$d-1$};
      \node[dlab] at (11.1,14.05) {$d-1$};
      \fill[cgray, opacity=0.24, rounded corners=5pt] (-0.55,-0.85) rectangle (0.46,14.15);
      \fill[cgray, opacity=0.24, rounded corners=5pt] (12.49,-0.85) rectangle (13.50,14.15);
      \node[dlab] at (0,14.05) {$d-1$};
      \node[dlab] at (12.95,14.05) {$d-1$};
      \draw[wire] (0,13.62) -- (0,-0.32);
      \draw[wire] (1.85,13.62) -- (1.85,-0.32);
      \draw[wire] (3.7,13.62) -- (3.7,-0.32);
      \draw[wire] (5.55,13.62) -- (5.55,-0.32);
      \draw[wire] (7.4,13.62) -- (7.4,-0.32);
      \draw[wire] (9.25,13.62) -- (9.25,-0.32);
      \draw[wire] (11.1,13.62) -- (11.1,-0.32);
      \draw[wire] (12.95,13.62) -- (12.95,-0.32);
      \node[tbox, fill=cplus, draw=cplusec] at (0,13.3) {$+$};
      \node[tbox, fill=cplus, draw=cplusec] at (0,0) {$+$};
      \node[tbox, fill=cplus, draw=cplusec] at (1.85,13.3) {$+$};
      \node[tbox, fill=cplus, draw=cplusec] at (1.85,0) {$+$};
      \node[tbox, fill=cplus, draw=cplusec] at (3.7,13.3) {$+$};
      \node[tbox, fill=cplus, draw=cplusec] at (3.7,0) {$+$};
      \node[tbox, fill=cplus, draw=cplusec] at (5.55,13.3) {$+$};
      \node[tbox, fill=cplus, draw=cplusec] at (5.55,0) {$+$};
      \node[tbox, fill=cplus, draw=cplusec] at (7.4,13.3) {$+$};
      \node[tbox, fill=cplus, draw=cplusec] at (7.4,0) {$+$};
      \node[tbox, fill=cplus, draw=cplusec] at (9.25,13.3) {$+$};
      \node[tbox, fill=cplus, draw=cplusec] at (9.25,0) {$+$};
      \node[tbox, fill=cplus, draw=cplusec] at (11.1,13.3) {$+$};
      \node[tbox, fill=cplus, draw=cplusec] at (11.1,0) {$+$};
      \node[tbox, fill=cplus, draw=cplusec] at (12.95,13.3) {$+$};
      \node[tbox, fill=cplus, draw=cplusec] at (12.95,0) {$+$};
      \draw[wire] (0,12.35) -- (1.85,12.35);
      \fill[wire] (0,12.35) circle (1.1pt);
      \fill[wire] (1.85,12.35) circle (1.1pt);
      \node[tbox, fill=cgamma, draw=cgammaec] at (0.925,12.35) {$\gamma_1$};
      \draw[wire] (1.85,12.35) -- (3.7,12.35);
      \fill[wire] (1.85,12.35) circle (1.1pt);
      \fill[wire] (3.7,12.35) circle (1.1pt);
      \node[tbox, fill=cgamma, draw=cgammaec] at (2.775,12.35) {$\gamma_1$};
      \draw[wire] (3.7,12.35) -- (5.55,12.35);
      \fill[wire] (3.7,12.35) circle (1.1pt);
      \fill[wire] (5.55,12.35) circle (1.1pt);
      \node[tbox, fill=cgamma, draw=cgammaec] at (4.625,12.35) {$\gamma_1$};
      \draw[wire] (5.55,12.35) -- (7.4,12.35);
      \fill[wire] (5.55,12.35) circle (1.1pt);
      \fill[wire] (7.4,12.35) circle (1.1pt);
      \node[tbox, fill=cgamma, draw=cgammaec] at (6.475,12.35) {$\gamma_1$};
      \draw[wire] (7.4,12.35) -- (9.25,12.35);
      \fill[wire] (7.4,12.35) circle (1.1pt);
      \fill[wire] (9.25,12.35) circle (1.1pt);
      \node[tbox, fill=cgamma, draw=cgammaec] at (8.325,12.35) {$\gamma_1$};
      \draw[wire] (9.25,12.35) -- (11.1,12.35);
      \fill[wire] (9.25,12.35) circle (1.1pt);
      \fill[wire] (11.1,12.35) circle (1.1pt);
      \node[tbox, fill=cgamma, draw=cgammaec] at (10.175,12.35) {$\gamma_1$};
      \draw[wire] (11.1,12.35) -- (12.95,12.35);
      \fill[wire] (11.1,12.35) circle (1.1pt);
      \fill[wire] (12.95,12.35) circle (1.1pt);
      \node[tbox, fill=cgamma, draw=cgammaec] at (12.025,12.35) {$\gamma_1$};
      \node[tbox, fill=cbeta, draw=cbetaec] at (1.85,11.4) {$\beta_1$};
      \node[tbox, fill=cbeta, draw=cbetaec] at (3.7,11.4) {$\beta_1$};
      \node[tbox, fill=cbeta, draw=cbetaec] at (5.55,11.4) {$\beta_1$};
      \node[tbox, fill=cbeta, draw=cbetaec] at (7.4,11.4) {$\beta_1$};
      \node[tbox, fill=cbeta, draw=cbetaec] at (9.25,11.4) {$\beta_1$};
      \node[tbox, fill=cbeta, draw=cbetaec] at (11.1,11.4) {$\beta_1$};
      \draw[wire] (1.85,10.45) -- (3.7,10.45);
      \fill[wire] (1.85,10.45) circle (1.1pt);
      \fill[wire] (3.7,10.45) circle (1.1pt);
      \node[tbox, fill=cgamma, draw=cgammaec] at (2.775,10.45) {$\gamma_2$};
      \draw[wire] (3.7,10.45) -- (5.55,10.45);
      \fill[wire] (3.7,10.45) circle (1.1pt);
      \fill[wire] (5.55,10.45) circle (1.1pt);
      \node[tbox, fill=cgamma, draw=cgammaec] at (4.625,10.45) {$\gamma_2$};
      \draw[wire] (5.55,10.45) -- (7.4,10.45);
      \fill[wire] (5.55,10.45) circle (1.1pt);
      \fill[wire] (7.4,10.45) circle (1.1pt);
      \node[tbox, fill=cgamma, draw=cgammaec] at (6.475,10.45) {$\gamma_2$};
      \draw[wire] (7.4,10.45) -- (9.25,10.45);
      \fill[wire] (7.4,10.45) circle (1.1pt);
      \fill[wire] (9.25,10.45) circle (1.1pt);
      \node[tbox, fill=cgamma, draw=cgammaec] at (8.325,10.45) {$\gamma_2$};
      \draw[wire] (9.25,10.45) -- (11.1,10.45);
      \fill[wire] (9.25,10.45) circle (1.1pt);
      \fill[wire] (11.1,10.45) circle (1.1pt);
      \node[tbox, fill=cgamma, draw=cgammaec] at (10.175,10.45) {$\gamma_2$};
      \node[tbox, fill=cbeta, draw=cbetaec] at (3.7,9.5) {$\beta_2$};
      \node[tbox, fill=cbeta, draw=cbetaec] at (5.55,9.5) {$\beta_2$};
      \node[tbox, fill=cbeta, draw=cbetaec] at (7.4,9.5) {$\beta_2$};
      \node[tbox, fill=cbeta, draw=cbetaec] at (9.25,9.5) {$\beta_2$};
      \draw[wire] (3.7,8.55) -- (5.55,8.55);
      \fill[wire] (3.7,8.55) circle (1.1pt);
      \fill[wire] (5.55,8.55) circle (1.1pt);
      \node[tbox, fill=cgamma, draw=cgammaec] at (4.625,8.55) {$\gamma_3$};
      \draw[wire] (5.55,8.55) -- (7.4,8.55);
      \fill[wire] (5.55,8.55) circle (1.1pt);
      \fill[wire] (7.4,8.55) circle (1.1pt);
      \node[tbox, fill=cgamma, draw=cgammaec] at (6.475,8.55) {$\gamma_3$};
      \draw[wire] (7.4,8.55) -- (9.25,8.55);
      \fill[wire] (7.4,8.55) circle (1.1pt);
      \fill[wire] (9.25,8.55) circle (1.1pt);
      \node[tbox, fill=cgamma, draw=cgammaec] at (8.325,8.55) {$\gamma_3$};
      \node[tbox, fill=cbeta, draw=cbetaec] at (5.55,7.6) {$\beta_3$};
      \node[tbox, fill=cbeta, draw=cbetaec] at (7.4,7.6) {$\beta_3$};
      \node[tbox, fill=cbeta, draw=cbetaec] at (1.85,1.9) {$\beta_1^{*}$};
      \node[tbox, fill=cbeta, draw=cbetaec] at (3.7,1.9) {$\beta_1^{*}$};
      \node[tbox, fill=cbeta, draw=cbetaec] at (5.55,1.9) {$\beta_1^{*}$};
      \node[tbox, fill=cbeta, draw=cbetaec] at (7.4,1.9) {$\beta_1^{*}$};
      \node[tbox, fill=cbeta, draw=cbetaec] at (9.25,1.9) {$\beta_1^{*}$};
      \node[tbox, fill=cbeta, draw=cbetaec] at (11.1,1.9) {$\beta_1^{*}$};
      \draw[wire] (0,0.95) -- (1.85,0.95);
      \fill[wire] (0,0.95) circle (1.1pt);
      \fill[wire] (1.85,0.95) circle (1.1pt);
      \node[tbox, fill=cgamma, draw=cgammaec] at (0.925,0.95) {$\gamma_1^{*}$};
      \draw[wire] (1.85,0.95) -- (3.7,0.95);
      \fill[wire] (1.85,0.95) circle (1.1pt);
      \fill[wire] (3.7,0.95) circle (1.1pt);
      \node[tbox, fill=cgamma, draw=cgammaec] at (2.775,0.95) {$\gamma_1^{*}$};
      \draw[wire] (3.7,0.95) -- (5.55,0.95);
      \fill[wire] (3.7,0.95) circle (1.1pt);
      \fill[wire] (5.55,0.95) circle (1.1pt);
      \node[tbox, fill=cgamma, draw=cgammaec] at (4.625,0.95) {$\gamma_1^{*}$};
      \draw[wire] (5.55,0.95) -- (7.4,0.95);
      \fill[wire] (5.55,0.95) circle (1.1pt);
      \fill[wire] (7.4,0.95) circle (1.1pt);
      \node[tbox, fill=cgamma, draw=cgammaec] at (6.475,0.95) {$\gamma_1^{*}$};
      \draw[wire] (7.4,0.95) -- (9.25,0.95);
      \fill[wire] (7.4,0.95) circle (1.1pt);
      \fill[wire] (9.25,0.95) circle (1.1pt);
      \node[tbox, fill=cgamma, draw=cgammaec] at (8.325,0.95) {$\gamma_1^{*}$};
      \draw[wire] (9.25,0.95) -- (11.1,0.95);
      \fill[wire] (9.25,0.95) circle (1.1pt);
      \fill[wire] (11.1,0.95) circle (1.1pt);
      \node[tbox, fill=cgamma, draw=cgammaec] at (10.175,0.95) {$\gamma_1^{*}$};
      \draw[wire] (11.1,0.95) -- (12.95,0.95);
      \fill[wire] (11.1,0.95) circle (1.1pt);
      \fill[wire] (12.95,0.95) circle (1.1pt);
      \node[tbox, fill=cgamma, draw=cgammaec] at (12.025,0.95) {$\gamma_1^{*}$};
      \node[tbox, fill=cbeta, draw=cbetaec] at (3.7,3.8) {$\beta_2^{*}$};
      \node[tbox, fill=cbeta, draw=cbetaec] at (5.55,3.8) {$\beta_2^{*}$};
      \node[tbox, fill=cbeta, draw=cbetaec] at (7.4,3.8) {$\beta_2^{*}$};
      \node[tbox, fill=cbeta, draw=cbetaec] at (9.25,3.8) {$\beta_2^{*}$};
      \draw[wire] (1.85,2.85) -- (3.7,2.85);
      \fill[wire] (1.85,2.85) circle (1.1pt);
      \fill[wire] (3.7,2.85) circle (1.1pt);
      \node[tbox, fill=cgamma, draw=cgammaec] at (2.775,2.85) {$\gamma_2^{*}$};
      \draw[wire] (3.7,2.85) -- (5.55,2.85);
      \fill[wire] (3.7,2.85) circle (1.1pt);
      \fill[wire] (5.55,2.85) circle (1.1pt);
      \node[tbox, fill=cgamma, draw=cgammaec] at (4.625,2.85) {$\gamma_2^{*}$};
      \draw[wire] (5.55,2.85) -- (7.4,2.85);
      \fill[wire] (5.55,2.85) circle (1.1pt);
      \fill[wire] (7.4,2.85) circle (1.1pt);
      \node[tbox, fill=cgamma, draw=cgammaec] at (6.475,2.85) {$\gamma_2^{*}$};
      \draw[wire] (7.4,2.85) -- (9.25,2.85);
      \fill[wire] (7.4,2.85) circle (1.1pt);
      \fill[wire] (9.25,2.85) circle (1.1pt);
      \node[tbox, fill=cgamma, draw=cgammaec] at (8.325,2.85) {$\gamma_2^{*}$};
      \draw[wire] (9.25,2.85) -- (11.1,2.85);
      \fill[wire] (9.25,2.85) circle (1.1pt);
      \fill[wire] (11.1,2.85) circle (1.1pt);
      \node[tbox, fill=cgamma, draw=cgammaec] at (10.175,2.85) {$\gamma_2^{*}$};
      \node[tbox, fill=cbeta, draw=cbetaec] at (5.55,5.7) {$\beta_3^{*}$};
      \node[tbox, fill=cbeta, draw=cbetaec] at (7.4,5.7) {$\beta_3^{*}$};
      \draw[wire] (3.7,4.75) -- (5.55,4.75);
      \fill[wire] (3.7,4.75) circle (1.1pt);
      \fill[wire] (5.55,4.75) circle (1.1pt);
      \node[tbox, fill=cgamma, draw=cgammaec] at (4.625,4.75) {$\gamma_3^{*}$};
      \draw[wire] (5.55,4.75) -- (7.4,4.75);
      \fill[wire] (5.55,4.75) circle (1.1pt);
      \fill[wire] (7.4,4.75) circle (1.1pt);
      \node[tbox, fill=cgamma, draw=cgammaec] at (6.475,4.75) {$\gamma_3^{*}$};
      \draw[wire] (7.4,4.75) -- (9.25,4.75);
      \fill[wire] (7.4,4.75) circle (1.1pt);
      \fill[wire] (9.25,4.75) circle (1.1pt);
      \node[tbox, fill=cgamma, draw=cgammaec] at (8.325,4.75) {$\gamma_3^{*}$};
      \draw[wire] (5.55,6.65) -- (7.4,6.65);
      \fill[wire] (5.55,6.65) circle (1.1pt);
      \fill[wire] (7.4,6.65) circle (1.1pt);
      \node[tbox, fill=cobs, draw=cobsec] at (6.475,6.65) {$P_{uv}$};
      \node[tag] at (-0.9,14.47) {\textbf{(a)}};
    \begin{scope}[xshift=14.350cm, yshift=1.105cm]
      \node[tag] at (0.1,13.17) {\textbf{(b)}};
      \node[tbox, fill=cplus, draw=cplusec] at (1.35,11.6) {$+$};
      \draw[wire] (1.35,11.18) -- (1.35,10.84);
      \node[ilab, anchor=west] at (1.53,10.9) {$a$};
      \node[eqlab] at (2.9,11.6) {$=$};
      \node[vallab, anchor=west] at (3.35,11.6) {$\dfrac{1}{\sqrt{k}}$};
      \node[tbox, fill=cbeta, draw=cbetaec] at (1.35,9.4) {$\beta_t$};
      \draw[wire] (1.35,9.82) -- (1.35,10.16);
      \draw[wire] (1.35,8.98) -- (1.35,8.64);
      \node[ilab, anchor=west] at (1.53,10.1) {$b$};
      \node[ilab, anchor=west] at (1.53,8.66) {$a$};
      \node[eqlab] at (2.9,9.4) {$=$};
      \node[vallab, anchor=west] at (3.35,9.4) {$\delta_{a,b}+\dfrac{e^{-i\beta_t}-1}{k}$};
      \draw[wire] (0.4,7.2) -- (0.91,7.2);
      \draw[wire] (1.79,7.2) -- (2.3,7.2);
      \node[tbox, fill=cgamma, draw=cgammaec] at (1.35,7.2) {$\gamma_t$};
      \node[ilab, anchor=east] at (0.32,7.2) {$a$};
      \node[ilab, anchor=west] at (2.38,7.2) {$b$};
      \node[eqlab] at (2.9,7.2) {$=$};
      \node[vallab, anchor=west] at (3.35,7.2) {$\begin{cases}e^{-i\gamma_t} & a=b\\[2pt]1 & a\neq b\end{cases}$};
      \draw[wire] (0.4,5) -- (0.91,5);
      \draw[wire] (1.79,5) -- (2.3,5);
      \node[tbox, fill=cobs, draw=cobsec] at (1.35,5) {$P_{uv}$};
      \node[ilab, anchor=east] at (0.32,5) {$a$};
      \node[ilab, anchor=west] at (2.38,5) {$b$};
      \node[eqlab] at (2.9,5) {$=$};
      \node[vallab, anchor=west] at (3.35,5) {$\delta_{a,b}$};
      \fill[cgray, opacity=0.2, rounded corners=3pt] (0.25,0.45) rectangle (1.78,3.25);
      \node[dlab] at (1.0,3.31) {$d-1$};
      \draw[fill=white, draw=black, line width=1pt] (0.5,0.65) rectangle (1.3,3.05);
      \node[tlab] at (0.9,1.85) {$T$};
      \draw[wire] (1.3,2.6) -- (2.15,2.6);
      \node[ilab, anchor=west] at (2.27,2.6) {$a_1$};
      \draw[wire] (1.3,2.15) -- (2.15,2.15);
      \node[ilab, anchor=west] at (2.27,2.15) {$a_2$};
      \draw[wire] (1.3,1.7) -- (2.15,1.7);
      \node[ilab, anchor=west] at (2.27,1.7) {$a_3$};
      \node at (2.43,1.3) {$\vdots$};
      \draw[wire] (1.3,0.9) -- (2.15,0.9);
      \node[ilab, anchor=west] at (2.27,0.9) {$a_n$};
      \node[eqlab] at (3.6,1.85) {$\equiv$};
      \node[vallab, anchor=west] at (4.0,1.85) {$\left(T_{a_1,a_2,\ldots,a_n}\right)^{d-1}$};
    \end{scope}
  \end{tikzpicture}%
  }%
  \caption{
  Exact tensor-network contraction of the depth-$p$ Max-$k$-Cut QAOA edge expectation $\langle P_{uv}\rangle$ on a $d$-regular graph of girth $g\geq 2p+2$, shown for $p=3$.
  Every wire carries a $k$-level qudit index $a\in\mathbb{Z}_k$.
  \textbf{(a)}
  The tensor network on the light cone of the focal edge $\{u,v\}$.
  From the $\ket{+}$ initial states at the top, each layer $t=1,\dots,p$ applies the cost phaser $\gamma_t$ on every edge followed by the Grover mixer $\beta_t$ on every qudit.
  The same-color projector $P_{uv}$ acts on the focal edge at the center, and the lower half is the conjugate mirror with $\gamma_t^{*},\beta_t^{*}$. Within a shaded region we contract all tensors together, then raise that tensor to the $(d-1)$th power as defined in panel~(b). Then do the same in the outer shaded regions, one by one. The whole contraction costs $\mathcal{O}(p\,k^{2(p+1)})$ time and $\mathcal{O}(k^{2p})$ memory, independent of $d$. 
  \textbf{(b)}
  The elementary tensors, given entry by entry: the uniform initial state $\ket{+}$, the Grover mixer $\beta_t$, the two-body
  cost phaser $\gamma_t=e^{-i\gamma_t P_{uv}}$, and the same-color projector $P_{uv}$.
  The bottom identity defines the entrywise (Hadamard) $(d-1)$th power that implements the tree symmetry: the branch tensor $T$ carrying the $d-1$ identical child subtrees equals $\bigl(T_{a_1,a_2,\dots,a_n}\bigr)^{d-1}$, i.e.\ every entry is raised to the power $d-1$.
  }
  \label{fig:finite_d_tn}
\end{figure}

\subsection{Exact Tensor Contraction Algorithm at Finite Degree}
\label{subsec:qaoa_tensor_algorithm}

To evaluate the edge expectation $\langle P_{uv} \rangle$ of
\eqref{eq:qaoa_cut_fraction} at fixed angles, we retain only the operator
$P_{uv}$ together with the QAOA gates contained in its light cone. Under the
girth condition $g \geq 2p+2$ this light cone is the tree of
\cref{fig:tree_structure}, and the expectation
$\langle \bm{\gamma}, \bm{\beta} | P_{uv} | \bm{\gamma}, \bm{\beta} \rangle$ is
computed by contracting a tensor network laid out on that tree, exactly as in
the qubit Max-Cut analyses of~\cite{lower_bounding_max_cut_qaoa}. Each phaser $U_C(\gamma)$ is diagonal in the computational
basis, so its two-body factor on an edge is a tensor over only two variables and
enters the network as a hyperedge; the mixers $U_M(\beta)$ and the initial
$\ket{+}$ states supply the remaining single-vertex tensors. The only change
from the qubit case is that we promote each vertex from a qubit to a $k$-level
qudit, so that every tensor index ranges over $\{0, 1, \ldots, k-1\}$ rather than
$\{0, 1\}$, and we update the gates accordingly: the observable is the
same-color projector $P_{uv}$ in place of $\tfrac{1}{2}(1 - Z_u Z_v)$, the phaser
is $U_C(\gamma) = e^{-i\gamma H_C}$ built from $P_{uv}$, and the single-qubit
mixer is replaced by the qudit Grover mixer $e^{-i\beta \ket{+}\bra{+}}$. The
same tree-contraction picture for qudit QAOA on regular graphs was used by~\cite{apte2026quantumapproximateoptimizationinteger}.

For a $d$-regular graph the number of qudits in the light cone of $P_{uv}$ grows
exponentially with $p$, as $2\,\bigl((d-1)^{p+1} - 1\bigr)/(d-2)$, so a direct
contraction is intractable already at moderate $p$. As observed in~\cite{lower_bounding_max_cut_qaoa}, 
however, every branch of the regular
tree is identical, so it suffices to contract a single branch inward toward the
root and raise its tensor entries entrywise to the $(d-1)$th power before moving
to the next level of the tree. This collapses all $d-1$ identical subtrees at
each vertex into one branch tensor and removes the exponential dependence on the
light-cone size. The branch tensor at depth $L$ carries $2L$ indices (one per
QAOA layer on each of the ket and bra sides) each ranging over the $k$ qudit
levels. The peak memory is therefore
$\mathcal{O}(k^{2p})$, the direct generalization of the
qubit cost of \cite{lower_bounding_max_cut_qaoa}.
\Cref{fig:finite_d_tn} illustrates the exact tensor network contraction of computing $\langle P_{uv} \rangle$
on a $d$-regular graph of girth $g\geq 2p+2$, shown for $p=3$. 
The running time is dominated by contracting the layer gates into the depth-$p$
branch tensor: each of the $p$ layers is a contraction of a $k^2 \times k^2$
gate (acting on the ket and bra index of that layer) across all $k^{2p}$
entries, at cost $\mathcal{O}(k^{2(p+1)})$, so building the branch costs
$\mathcal{O}(p\,k^{2(p+1)})$ in time, and the root contraction that joins the
two focal subtrees is of the same order. Both time and memory are independent of
the degree $d$.

\subsection{Infinite-Degree Analysis and the Qudit--Boson Map}
\label{subsec:qaoa_asymptotics}

To place QAOA in the same $(k-1)/k + \Theta(1/\sqrt d)$ form as the vector algorithm, we take the degree to infinity at fixed depth $p$ and fixed angles. It is convenient to work in a slightly more general setting than Max-$k$-Cut. Still assuming a $d$-regular graph $G = (V, E)$ of girth at least $2p + 2$, we consider a cost function of the form
\begin{align}
    C_d\left(\bm{x}\right) & = \frac{1}{\sqrt{d}}\sum_{\{u, v\} \in E}\varphi\left(x_u - x_v\right), && \bm{x} = \left(x_v\right)_{v \in V} \in \mathbb{Z}_k^V,\label{eq:general_cost_function}
\end{align}
where $\varphi: \mathbb{Z}_k \to \mathbb{R}$ is zero-mean and even,
\begin{align}
    \sum_{x \in \mathbb{Z}_k}\varphi\left(x\right) = 0, \qquad \varphi\left(-x\right) = \varphi\left(x\right) \quad \forall x \in \mathbb{Z}_k.
\end{align}
Any dit function can be centered to have zero mean, and evenness reflects the graph being undirected. For Max-$k$-Cut the relevant choice is $\varphi(c) = \mathbf{1}[c = 0] - 1/k$, and the $1/\sqrt d$ rescaling in \eqref{eq:general_cost_function} is exactly what makes the energy density have a nontrivial limit as $d \to \infty$ at fixed angles. Throughout we use the discrete Fourier transform of $\varphi$,
\begin{align}
    \hat{\varphi}(\xi) = \sum_{x \in \mathbb{Z}_k} \varphi(x)\, \overline{\omega}^{\,\xi x}, \qquad \varphi(c) = \frac{1}{k}\sum_{\xi \in \mathbb{Z}_k} \hat{\varphi}(\xi)\, \omega^{\xi c}, \qquad \omega := e^{2\pi i / k},
    \label{eq:phi_fourier}
\end{align}
noting that the zero-mean assumption is exactly $\hat{\varphi}(0) = 0$.

The starting point for the limit is a recursive form of the exact tree contraction, which we now state; it is one particular organization of the contraction of \cref{subsec:qaoa_tensor_algorithm}, convenient for the asymptotic analysis. Following Refs.~\cite{qaoa_maxcut_high_depth, apte2026quantumapproximateoptimizationinteger}, the edge expectation is assembled from a sequence of tensors $H_d^{(m)}(\bm{a})$, $0 \leq m \leq p$, indexed by a ditstring $\bm{a} \in \mathbb{Z}_k^{\mathcal{I}_p}$ whose entries are labeled by the QAOA layer indices
\begin{align}
    \mathcal{I}_p := \left\{1, 2, \ldots, p, p + 1, -p - 1, -p, \ldots, -2, -1\right\},
\end{align}
with positive indices from the bra and negative indices from the ket of $\bra{\bm{\gamma}, \bm{\beta}}P_{uv}\ket{\bm{\gamma}, \bm{\beta}}$. The tensors obey
\begin{align}
    H_d^{(0)}\left(\bm{a}\right) & := 1 && \forall \bm{a} \in \mathbb{Z}_k^{\mathcal{I}_p},\\
    H_d^{(m + 1)}\left(\bm{a}\right) & := \left(\sum_{\bm{b} \in \mathbb{Z}_k^{\mathcal{I}_p}}f\left(\bm{b}\right)H_d^{(m)}\left(\bm{b}\right)e^{i\Phi\left(\bm{a} - \bm{b}\right)/\sqrt{d}}\right)^d && 0 \leq m < p,\label{eq:h_iteration}
\end{align}
where
\begin{align}
    f\left(\bm{a}\right) & := \frac{1}{k}\prod_{1 \leq t \leq p}\bra{a_t}e^{i\beta_t\ket{+}\bra{+}}\ket{a_{t + 1}}\bra{a_{-t - 1}}e^{-i\beta_t\ket{+}\bra{+}}\ket{a_{-t}}
    \label{eq:f_tensor}
\end{align}
depends only on the mixer angles, and
\begin{align}
    \Phi\left(\bm{c}\right) & := \sum_{t \in \mathcal{I}_p}\Gamma_t\,\mathbf{1}\left[c_t = 0\right], \qquad \bm{\Gamma} = \left(\gamma_1, \ldots, \gamma_p, 0, 0, -\gamma_p, \ldots, -\gamma_1\right) \in \mathbb{R}^{\mathcal{I}_p},
    \label{eq:phi_phase}
\end{align}
the duplicated $\bm{\gamma}$ with opposite signs reflecting the bra and ket contributions. Given $H_d^{(p)}$, the per-edge energy is
\begin{align}
    \nu_d = \bra{\bm{\gamma}, \bm{\beta}}P_{uv}\ket{\bm{\gamma}, \bm{\beta}} - \frac{1}{k} & = \sum_{\bm{a}, \bm{b} \in \mathbb{Z}_k^{\mathcal{I}_p}}\varphi(a_{p+1} - b_{p+1})\,f\left(\bm{a}\right)H_d^{(p)}\left(\bm{a}\right)f\left(\bm{b}\right)H_d^{(p)}\left(\bm{b}\right)e^{i\Phi\left(\bm{a} - \bm{b}\right)/\sqrt{d}}.\label{eq:energy_from_h}
\end{align}

The analysis of the $d \to \infty$ limit rests on a translation-invariance property of these tensors under a global shift of all labels, which generalizes the $\mathbb{Z}_2$ symmetry of Max-Cut QAOA. It descends from the invariance of the Grover mixer and of the cost Hamiltonian \eqref{eq:general_cost_function} under the qudit shift $X = \sum_{x} \ket{x+1}\bra{x}$, i.e.\ $X^{\otimes n} U X^{\dagger \otimes n} = U$. At the level of the tensors this reads as follows.

\begin{lemma}
\label{lemma:f_h_translation_invariance}
For a translation-invariant mixer and the cost \eqref{eq:general_cost_function}, the tensors $f$ and $H_d^{(m)}$ are invariant under a uniform shift of the ditstring: for all $c \in \mathbb{Z}_k$, all $\bm{a} \in \mathbb{Z}_k^{\mathcal{I}_p}$, and all $0 \leq m \leq p$,
\begin{align}
    f\left(\bm{a} + c\bm{1}_{\mathcal{I}_p}\right) = f\left(\bm{a}\right), \qquad H_d^{(m)}\left(\bm{a} + c\bm{1}_{\mathcal{I}_p}\right) = H_d^{(m)}\left(\bm{a}\right).
\end{align}
\end{lemma}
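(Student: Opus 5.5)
The plan is to prove the invariance of $f$ directly from the matrix elements of the mixer, and then obtain the invariance of $H_d^{(m)}$ by induction on $m$ using the recursion \eqref{eq:h_iteration}.

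\emph{Invariance of $f$.} By \eqref{eq:f_tensor}, $f(\bm a)$ is a product of factors of the form $\bra{a}U\ket{b}$ with $U = e^{\pm i\beta_t\ket{+}\bra{+}}$. More generally, it suffices that each mixer factor $U$ commutes with the qudit shift $X=\sum_x\ket{x+1}\bra{x}$, which is what "translation-invariant mixer" means. For such a $U$,
\[
\bra{a+c}U\ket{b+c} = \bra{a}X^{\dagger c}UX^{c}\ket{b} = \bra{a}U\ket{b}.
\]
For the Grover mixer this holds concretely: $X\ket{+}=\ket{+}$, so $X$ commutes with $\ket{+}\bra{+}$ and hence with its exponential. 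Equivalently, from panel (b) of \cref{fig:finite_d_tn}, the entries $\delta_{a,b}+(e^{-i\beta}-1)/k$ depend only on whether $a=b$. Shifting every entry of $\bm a$ by the same $c$ therefore leaves each factor, and so $f$, unchanged. The prefactor $1/k$ is constant.

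\emph{Invariance of the phase.} The phase function $\Phi$ in \eqref{eq:phi_phase} depends on its argument only through the indicators $\mathbf 1[c_t=0]$. Since $(\bm a+c\bm 1)-\bm b = \bm a-(\bm b-c\bm 1)$, we obtain the identity
\[
\Phi\bigl((\bm a+c\bm 1)-\bm b\bigr) = \Phi\bigl(\bm a-(\bm b-c\bm 1)\bigr).
\]
For a general even cost $\varphi$, where the phase would be a function of differences $a_t-b_t$, the same identity holds because only differences enter.

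\emph{Induction for $H_d^{(m)}$.} The base case $H^{(0)}_d\equiv1$ is trivially invariant. For the inductive step, assume $H_d^{(m)}$ is shift-invariant and evaluate the inner sum of \eqref{eq:h_iteration} at $\bm a+c\bm 1$. Substitute $\bm b=\bm b'+c\bm 1$; this is a bijection of $\mathbb Z_k^{\mathcal I_p}$. After the substitution:
\begin{itemize}
\item $f(\bm b'+c\bm 1)=f(\bm b')$ by the first step;
\item $H^{(m)}_d(\bm b'+c\bm 1)=H^{(m)}_d(\bm b')$ by the inductive hypothesis;
\item the phase becomes $\Phi(\bm a-\bm b')$.
\end{itemize}
The inner sum therefore equals its value at $\bm a$, and raising it to the $d$th power preserves the equality. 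This gives invariance for all $m\le p$.

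None of these steps is a real obstacle. The only point needing care is justifying the invariance of the mixer matrix elements for a general "translation-invariant mixer", i.e. making explicit that this means $[U_{M,u}(\beta),X]=0$. I would state that definition and then give the Grover case as a concrete check.
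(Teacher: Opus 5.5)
Your proposal is correct and matches the paper's proof. The invariance of $f$ comes from $X^{\dagger c}e^{\pm i\beta_t\ket{+}\bra{+}}X^{c}=e^{\pm i\beta_t\ket{+}\bra{+}}$, which holds because $X\ket{+}=\ket{+}$; the invariance of $H_d^{(m)}$ then follows by induction on $m$, reindexing $\bm b\to\bm b+c\bm 1_{\mathcal I_p}$ in \eqref{eq:h_iteration} and using that $\Phi$ depends only on the difference of its arguments. Your explicit definition of a translation-invariant mixer as one commuting with $X$ is a harmless clarification of the hypothesis that the paper leaves implicit.
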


\begin{proof}
For $f$, using $\ket{b + c} = X^c\ket{b}$ together with $X\ket{+} = \ket{+}$, so that $X^{\dagger c} e^{i\beta_t \ket{+}\bra{+}} X^c = e^{i\beta_t \ket{+}\bra{+}}$,
\begin{align}
    f\left(\bm{a} + c\bm{1}_{\mathcal{I}_p}\right) & = \frac{1}{k}\prod_{1 \leq t \leq p}\bra{a_t}X^{\dagger c}e^{i\beta_t \ket{+}\bra{+}}X^c\ket{a_{t + 1}}\bra{a_{-t - 1}}X^{\dagger c}e^{-i\beta_t \ket{+}\bra{+}}X^c\ket{a_{-t}} = f\left(\bm{a}\right).
\end{align}
For $H_d^{(m)}$ we induct on $m$; the base case $H_d^{(0)} = 1$ is immediate. Assuming the claim at level $m$, reindexing $\bm{b} \to \bm{b} + c\bm{1}_{\mathcal{I}_p}$ in \eqref{eq:h_iteration} and using the invariance of $f$ and $H_d^{(m)}$ together with $\Phi\big((\bm{a} + c\bm{1}_{\mathcal{I}_p}) - (\bm{b} + c\bm{1}_{\mathcal{I}_p})\big) = \Phi(\bm{a} - \bm{b})$,
\begin{align}
    H_d^{(m + 1)}\left(\bm{a} + c\bm{1}_{\mathcal{I}_p}\right) & = \left(\sum_{\bm{b}}f\left(\bm{b} + c\bm{1}_{\mathcal{I}_p}\right)H_d^{(m)}\left(\bm{b} + c\bm{1}_{\mathcal{I}_p}\right)e^{i\Phi(\bm{a} - \bm{b})/\sqrt{d}}\right)^d = H_d^{(m + 1)}\left(\bm{a}\right).
\end{align}
\end{proof}

With translation invariance in hand, the finite-degree recursion \eqref{eq:h_iteration} admits an explicit $d \to \infty$ limit.

\begin{proposition}[Infinite-degree limit of the tensors]
\label{prop:h_infinite_degree_limit}
For all fixed angles $\bm{\gamma}, \bm{\beta} \in \mathbb{R}^p$ and $0 \leq m \leq p$, the tensor $H_d^{(m)}$ converges entrywise to a $d$-independent limit $H^{(m)}$, given by the recursion $H^{(0)} \equiv 1$ and
\begin{align}
    H^{(m + 1)}\left(\bm{a}\right) = \exp\left(-\frac{1}{2}\sum_{\substack{\xi, \eta \in \mathbb{Z}_k\\t, u \in \mathcal{I}_p}}G^{(m)}_{(\xi, t),\,(\eta, u)}\,\Gamma_t\Gamma_u\,\frac{\hat{\varphi}\left(\xi\right)\hat{\varphi}\left(\eta\right)}{k^2}\,\omega^{\xi a_t}\omega^{\eta a_u}\right), \qquad 0 \leq m < p,
    \label{eq:H_limit_recursion}
\end{align}
where the correlation matrix $\bm{G}^{(m)}$ is defined from $H^{(m)}$ by
\begin{align}
    G^{(m)}_{(\xi, t),\,(\eta, u)} := \sum_{\bm{b} \in \mathbb{Z}_k^{\mathcal{I}_p}}\overline{\omega}^{\xi b_t}\overline{\omega}^{\eta b_u}\,f\left(\bm{b}\right)H^{(m)}\left(\bm{b}\right).
    \label{eq:G_def}
\end{align}
\end{proposition}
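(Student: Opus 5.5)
We will argue by induction on $m$, proving at each level that $H_d^{(m)} \to H^{(m)}$ entrywise. Since $\mathbb{Z}_k^{\mathcal{I}_p}$ is finite, entrywise convergence gives uniform convergence and a uniform bound $\sup_{d,\bm{b}} |H_d^{(m)}(\bm{b})| < \infty$ for large $d$. The base case $m=0$ is trivial. For the inductive step, write the base of the $d$-th power in \eqref{eq:h_iteration} as $S_d^{(m)}(\bm{a})$. We will show
\begin{align}
    S_d^{(m)}(\bm{a}) = 1 - \frac{1}{2d}\,Q^{(m)}(\bm{a}) + o(1/d),
\end{align}
where $Q^{(m)}(\bm{a})$ is the double sum appearing in the exponent of \eqref{eq:H_limit_recursion}. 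The standard limit $(1 + x_d/d)^d \to e^{x}$, valid whenever $x_d \to x$ in $\mathbb{C}$, then gives the claim.

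The first step rewrites the phase in Fourier form. Since $\mathbf{1}[c=0] = 1/k + \varphi(c)$ for the Max-$k$-Cut choice, and $\sum_{t} \Gamma_t = 0$ by the antisymmetry of $\bm{\Gamma}$, we get
\begin{align}
    \Phi(\bm{a}-\bm{b}) = \sum_t \Gamma_t\,\varphi(a_t - b_t) = \frac{1}{k}\sum_{t}\sum_{\xi} \Gamma_t\,\hat\varphi(\xi)\,\omega^{\xi a_t}\,\overline{\omega}^{\xi b_t}.
\end{align}
For general zero-mean $\varphi$ we would take this expression as the definition of the phase. We then Taylor expand $e^{i\Phi/\sqrt d} = 1 + i\Phi/\sqrt d - \Phi^2/(2d) + O(d^{-3/2})$. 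The remainder is uniform because $\Phi$ takes finitely many values.

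We then treat the three orders separately.
\begin{itemize}
\item Zeroth order: we need $\sum_{\bm{b}} f(\bm{b})\,H_d^{(m)}(\bm{b}) = 1$. We will prove this exactly for every $d$ by reading the sum as the trace of the reduced density matrix of a sub-light-cone with the observable replaced by the identity. In that contraction the bra and ket unitaries cancel pairwise from the outermost layer inward, so the value is $1$. Equivalently, this can be checked inductively from \eqref{eq:h_iteration} with $\bm{a}$ set equal on paired bra/ket indices.
\item First order: the term is $\frac{i}{k\sqrt d}\sum_{t,\xi}\Gamma_t\,\hat\varphi(\xi)\,\omega^{\xi a_t}\sum_{\bm{b}} \overline{\omega}^{\xi b_t} f(\bm{b}) H_d^{(m)}(\bm{b})$. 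By \cref{lemma:f_h_translation_invariance}, the substitution $\bm{b} \to \bm{b} + c\bm{1}$ multiplies the inner sum by $\overline{\omega}^{\xi c}$. Averaging over $c \in \mathbb{Z}_k$ shows the inner sum vanishes for every $\xi \neq 0$. The term $\xi = 0$ is killed by $\hat\varphi(0) = 0$. So the whole $d^{-1/2}$ term is exactly zero.
\item Second order: squaring the Fourier form of $\Phi$ gives exactly $-\frac{1}{2d}\sum \Gamma_t\Gamma_u\,\frac{\hat\varphi(\xi)\hat\varphi(\eta)}{k^2}\,\omega^{\xi a_t}\omega^{\eta a_u}\,G^{(m)}_{d,(\xi,t),(\eta,u)}$. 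Here $G^{(m)}_d$ is \eqref{eq:G_def} with $H_d^{(m)}$ in place of $H^{(m)}$, and by the induction hypothesis $G_d^{(m)} \to G^{(m)}$.
\end{itemize}

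Two points are delicate. The main one is the exact normalization $\sum_{\bm{b}} f H_d^{(m)} = 1$: any $O(d^{-1/2})$ or $O(d^{-1})$ deviation there would be amplified by the $d$-th power. I would first try to prove it via the unitarity interpretation of the truncated tree contraction, and fall back on a direct inductive identity if needed. The second is uniformity of the $O(d^{-3/2})$ remainder. That follows from the finiteness of the index set together with the inductive uniform bound on $|H_d^{(m)}|$, which itself comes from the convergence at the previous level.
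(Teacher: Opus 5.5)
Your proposal is correct and follows essentially the same route as the paper. Both arguments use induction on $m$ and expand the base $S_d^{(m)}$ to second order. Both use the exact normalization $\sum_{\bm b} f H_d^{(m)}=1$ at order $d^0$, kill the $d^{-1/2}$ term via translation invariance and $\hat\varphi(0)=0$, identify the $d^{-1}$ term with $\bm G^{(m)}$, and conclude with $(1+x_d/d)^d\to e^{x}$. Your version fills in two steps the paper leaves implicit. You justify $\Phi(\bm a-\bm b)=\sum_t\Gamma_t\varphi(a_t-b_t)$ explicitly via $\sum_t\Gamma_t=0$. You also give a unitarity argument for the normalization, which the paper only cites; that argument cancels the layers starting from the last one, $p$, which sits next to the observable.
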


\begin{proof}
We induct on $m$. For $m = 0$, $H_d^{(0)}(\bm{a}) = 1$ for all $\bm{a}$, so the claim holds with $H^{(0)} \equiv 1$. Assume $H_d^{(m)}(\bm{a}) \to H^{(m)}(\bm{a})$ for all $\bm{a} \in \mathbb{Z}_k^{\mathcal{I}_p}$. Since $\mathcal{I}_p$ is finite the convergence is uniform and the entries are uniformly bounded in $d$.

Define $S_d^{(m)}(\bm{a}) := \sum_{\bm{b} \in \mathbb{Z}_k^{\mathcal{I}_p}} f(\bm{b}) H_d^{(m)}(\bm{b})\, e^{i\Phi(\bm{a} - \bm{b})/\sqrt{d}}$, so that $H_d^{(m+1)}(\bm{a}) = S_d^{(m)}(\bm{a})^d$. Expanding the exponential,
\begin{align}
    S_d^{(m)}(\bm{a}) & = \underbrace{\sum_{\bm{b}} f(\bm{b}) H_d^{(m)}(\bm{b})}_{=:\,S_d^{(m),\,0}(\bm{a})}
    + \underbrace{\frac{i}{\sqrt{d}}\sum_{\bm{b}} f(\bm{b}) H_d^{(m)}(\bm{b})\,\Phi(\bm{a} - \bm{b})}_{=:\,S_d^{(m),\,-1/2}(\bm{a})} \nonumber\\
    & \quad - \underbrace{\frac{1}{2d}\sum_{\bm{b}} f(\bm{b}) H_d^{(m)}(\bm{b})\,\Phi(\bm{a} - \bm{b})^2}_{=:\,S_d^{(m),\,-1}(\bm{a})} + O(d^{-3/2}),
\end{align}
where the remainder is $O(d^{-3/2})$ by uniform boundedness of $H_d^{(m)}$.

\paragraph{The order-$d^0$ term.}
By the tree-state normalization of QAOA on high-girth graphs \cite{qaoa_maxcut_high_depth, apte2026quantumapproximateoptimizationinteger}, $\sum_{\bm{b}} f(\bm{b}) H_d^{(m)}(\bm{b}) = 1$, so $S_d^{(m),\,0}(\bm{a}) = 1$.

\paragraph{The order-$d^{-1/2}$ term.}
Using $\Phi(\bm{a} - \bm{b}) = \sum_{t \in \mathcal{I}_p} \Gamma_t\, \mathbf{1}[a_t - b_t = 0]$,
\begin{align}
    S_d^{(m),\,-1/2}(\bm{a}) & = \frac{i}{\sqrt{d}}\sum_{t \in \mathcal{I}_p} i\Gamma_t \sum_{\bm{b} \in \mathbb{Z}_k^{\mathcal{I}_p}} f(\bm{b}) H_d^{(m)}(\bm{b})\,\varphi(a_t - b_t).
\end{align}
For each fixed $t$, introducing an average over the $k$ shifts $\bm{b} \to \bm{b} + c\bm{1}_{\mathcal{I}_p}$ (which does not change the sum) and applying \cref{lemma:f_h_translation_invariance},
\begin{align}
    \sum_{\bm{b}} f(\bm{b}) H_d^{(m)}(\bm{b})\,\varphi(a_t - b_t)
    & = \frac{1}{k}\sum_{c \in \mathbb{Z}_k} \sum_{\bm{b}} f(\bm{b}) H_d^{(m)}(\bm{b})\,\varphi(a_t - b_t - c) \nonumber\\
    & = \sum_{\bm{b}} f(\bm{b}) H_d^{(m)}(\bm{b}) \cdot \frac{1}{k}\sum_{c \in \mathbb{Z}_k}\varphi(a_t - b_t - c) = 0,
\end{align}
since $\sum_{c \in \mathbb{Z}_k}\varphi(c) = 0$. Thus $S_d^{(m),\,-1/2}(\bm{a}) = 0$.

\paragraph{The order-$d^{-1}$ term.}
Expanding $\Phi(\bm{a} - \bm{b})^2 = \sum_{t,u \in \mathcal{I}_p} \Gamma_t\Gamma_u\, \varphi(a_t - b_t)\varphi(a_u - b_u)$ and writing $\varphi$ via \eqref{eq:phi_fourier},
\begin{align}
    S_d^{(m),\,-1}(\bm{a}) & = -\frac{1}{2d}\sum_{t, u \in \mathcal{I}_p}\Gamma_t\Gamma_u \sum_{\xi, \eta \in \mathbb{Z}_k} \frac{\hat{\varphi}(\xi)\hat{\varphi}(\eta)}{k^2} \sum_{\bm{b}} f(\bm{b}) H_d^{(m)}(\bm{b})\,\omega^{\xi(a_t - b_t)}\omega^{\eta(a_u - b_u)}.
\end{align}
The inner sum factors as $\omega^{\xi a_t}\omega^{\eta a_u}$ times $\sum_{\bm{b}} f(\bm{b}) H_d^{(m)}(\bm{b})\,\overline{\omega}^{\xi b_t}\overline{\omega}^{\eta b_u}$. Passing $H_d^{(m)} \to H^{(m)}$ (induction hypothesis), the latter sum converges to $G^{(m)}_{(\xi, t),(\eta, u)}$, giving
\begin{align}
    S_d^{(m),\,-1}(\bm{a}) = -\frac{1}{2d}\sum_{\substack{\xi, \eta \in \mathbb{Z}_k \\ t, u \in \mathcal{I}_p}} G^{(m)}_{(\xi,t),(\eta,u)}\,\Gamma_t\Gamma_u\,\frac{\hat{\varphi}(\xi)\hat{\varphi}(\eta)}{k^2}\,\omega^{\xi a_t}\omega^{\eta a_u} + o_d(1/d).
\end{align}

\paragraph{Conclusion.}
Combining, $S_d^{(m)}(\bm{a}) = 1 - \frac{1}{2d}\sum_{\xi,\eta,t,u} G^{(m)}_{(\xi,t),(\eta,u)}\Gamma_t\Gamma_u \frac{\hat\varphi(\xi)\hat\varphi(\eta)}{k^2}\omega^{\xi a_t}\omega^{\eta a_u} + o_d(1/d)$. Raising to the $d$-th power and sending $d \to \infty$ via $(1 + c/d + o(1/d))^d \to e^c$ establishes \eqref{eq:H_limit_recursion}.
\end{proof}

The infinite-degree per-edge energy follows by the same expansion applied to \eqref{eq:energy_from_h}.

\begin{proposition}[Infinite-degree energy density]
\label{prop:energy_asymptotics}
In the $d \to \infty$ limit the per-edge energy \eqref{eq:energy_from_h} satisfies
\begin{align}
    \nu_d = \frac{\nu_{\infty}}{\sqrt{d}} + o\!\left(d^{-1/2}\right), \qquad \nu_{\infty} := \sum_{\substack{t \in \mathcal{I}_p\\\xi, \eta \in \mathbb{Z}_k}} i\Gamma_t\,\frac{\hat{\varphi}(\xi)\hat{\varphi}(\eta)}{k^2}\,G^{(p)}_{(\xi, p + 1),\,(\eta, t)}\,G^{(p)}_{(-\xi, p + 1),\,(-\eta, t)},
    \label{eq:nu_infinity}
\end{align}
with $\bm{G}^{(p)}$ the matrix of \cref{prop:h_infinite_degree_limit}.
\end{proposition}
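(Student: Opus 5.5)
We apply to the energy formula \eqref{eq:energy_from_h} the same expansion in $1/\sqrt d$ used in the proof of \cref{prop:h_infinite_degree_limit}. By that proposition, the tensors $H_d^{(p)}$ converge entrywise to $H^{(p)}$ and are uniformly bounded. So we expand $e^{i\Phi(\bm a-\bm b)/\sqrt d} = 1 + i\Phi(\bm a-\bm b)/\sqrt d + O(1/d)$ inside the double sum. This splits $\nu_d$ into an order-$d^0$ term, an order-$d^{-1/2}$ term and an $O(1/d)$ remainder. The claim then has two parts: the zeroth-order term vanishes identically, and the first-order term converges after multiplication by $\sqrt d$ to the stated $\nu_\infty$.

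\textbf{Zeroth order.} The term is $\sum_{\bm a,\bm b}\varphi(a_{p+1}-b_{p+1})\,f(\bm a)H_d^{(p)}(\bm a)\,f(\bm b)H_d^{(p)}(\bm b)$. Exactly as for the order-$d^{-1/2}$ term in the proof of \cref{prop:h_infinite_degree_limit}, we average over a uniform shift $\bm b\to\bm b+c\bm 1_{\mathcal I_p}$. By \cref{lemma:f_h_translation_invariance} this shift leaves $f H_d^{(p)}$ invariant, so the sum equals itself with $\varphi(a_{p+1}-b_{p+1})$ replaced by $\frac1k\sum_c\varphi(\cdot-c)=0$. The term therefore vanishes exactly, for every $d$.

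\textbf{First order.} Write $\Phi(\bm a-\bm b)=\sum_t\Gamma_t\mathbf 1[a_t=b_t]$. Since $\sum_t\Gamma_t=0$, we can replace $\mathbf 1[a_t=b_t]$ by the centered $\varphi(a_t-b_t)$. This is legitimate because the constant $-\frac1k$ contributes $-\frac1k\sum_t\Gamma_t$ times the zeroth-order sum, which vanishes anyway. The first-order term becomes
\[
\frac{i}{\sqrt d}\sum_t\Gamma_t\sum_{\bm a,\bm b}\varphi(a_{p+1}-b_{p+1})\,\varphi(a_t-b_t)\,fH_d^{(p)}(\bm a)\,fH_d^{(p)}(\bm b).
\]
Inserting the Fourier expansion \eqref{eq:phi_fourier} for both factors of $\varphi$ gives
\[
\varphi(a_{p+1}-b_{p+1})\,\varphi(a_t-b_t)=k^{-2}\sum_{\xi,\eta}\hat\varphi(\xi)\hat\varphi(\eta)\,\omega^{\xi a_{p+1}+\eta a_t}\,\overline\omega^{\xi b_{p+1}+\eta b_t}.
\]
The sums over $\bm a$ and $\bm b$ then factor. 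The $\bm b$-sum is by definition $G^{(p)}_{(\xi,p+1),(\eta,t)}$, with $H_d^{(p)}$ in place of $H^{(p)}$. The $\bm a$-sum carries $\omega$ rather than $\overline\omega$, so it equals $G^{(p)}_{(-\xi,p+1),(-\eta,t)}$. Passing $H_d^{(p)}\to H^{(p)}$ via \cref{prop:h_infinite_degree_limit}, and noting that the index sets are finite, we obtain $\sqrt d\,(\text{first-order term})\to\nu_\infty$ as stated in \eqref{eq:nu_infinity}.

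\textbf{Remainder and main obstacle.} The remainder is $O(1/d)=o(d^{-1/2})$ by the uniform boundedness of $f H_d^{(p)}$. The only delicate point is the zeroth-order cancellation. The convergence $H_d^{(p)}\to H^{(p)}$ comes with no rate, so an order-one term that merely tended to zero would not suffice. This is why the exact shift-invariance argument, valid at every finite $d$, is essential. One should also double-check the interchange $\mathbf 1\leftrightarrow\varphi$: it relies on $\sum_{t\in\mathcal I_p}\Gamma_t=0$, which holds by the antisymmetric form of $\bm\Gamma$ in \eqref{eq:phi_phase}.
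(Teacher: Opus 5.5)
Your proposal is correct and follows essentially the paper's route: expand the phase to first order, show that the $d^0$ term vanishes exactly at every finite $d$ using \cref{lemma:f_h_translation_invariance}, and Fourier-expand the $d^{-1/2}$ term so that the $\bm a$- and $\bm b$-sums factor into two entries of $\bm G^{(p)}$. The differences are minor: you kill the zeroth-order term by shift-averaging in position space rather than through the paper's Fourier argument $S_\xi=\omega^\xi S_\xi$, and you explicitly justify $\Phi(\bm c)=\sum_t\Gamma_t\varphi(c_t)$ from $\sum_t\Gamma_t=0$, a step the paper uses without comment.
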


\begin{proof}
The energy per edge at finite $d$ is exactly \eqref{eq:energy_from_h}. Expanding the phase $e^{i\Phi(\bm{a}-\bm{b})/\sqrt{d}} = 1 + \frac{i}{\sqrt{d}}\Phi(\bm{a}-\bm{b}) + O(1/d)$ gives $\nu_d = \nu_d^0 + \nu_d^{-1/2} + O(1/d)$ where
\begin{align}
    \nu_d^0 & = \sum_{\bm{a}, \bm{b} \in \mathbb{Z}_k^{\mathcal{I}_p}} f(\bm{a}) H_d^{(p)}(\bm{a})\, f(\bm{b}) H_d^{(p)}(\bm{b})\,\varphi(a_{p+1} - b_{p+1}), \label{eq:nu0} \\
    \nu_d^{-1/2} & = \frac{i}{\sqrt{d}} \sum_{\bm{a}, \bm{b}} f(\bm{a}) H_d^{(p)}(\bm{a})\, f(\bm{b}) H_d^{(p)}(\bm{b})\,\varphi(a_{p+1} - b_{p+1})\,\Phi(\bm{a} - \bm{b}). \label{eq:nu12}
\end{align}

\paragraph{The order-$d^0$ term.}
Expressing $\varphi(a_{p+1} - b_{p+1}) = \frac{1}{k}\sum_\xi \hat\varphi(\xi)\,\omega^{\xi(a_{p+1} - b_{p+1})}$ and separating the sums,
\begin{align}
    \nu_d^0 = \frac{1}{k}\sum_{\xi \in \mathbb{Z}_k} \hat\varphi(\xi)\left(\sum_{\bm{a}} f(\bm{a}) H_d^{(p)}(\bm{a})\,\omega^{\xi a_{p+1}}\right)\left(\sum_{\bm{b}} f(\bm{b}) H_d^{(p)}(\bm{b})\,\omega^{-\xi b_{p+1}}\right).
\end{align}
For $\xi \neq 0$, setting $S_\xi := \sum_{\bm{a}} f(\bm{a}) H_d^{(p)}(\bm{a})\,\omega^{\xi a_{p+1}}$ and reindexing $\bm{a} \to \bm{a} + \bm{1}_{\mathcal{I}_p}$ yields
\begin{align}
    S_\xi = \sum_{\bm{a}} f(\bm{a}) H_d^{(p)}(\bm{a})\,\omega^{\xi(a_{p+1}+1)} = \omega^\xi S_\xi,
\end{align}
where we used \cref{lemma:f_h_translation_invariance}. Since $\omega^\xi \neq 1$ for $\xi \neq 0$, $S_\xi = 0$. For $\xi = 0$, $\hat\varphi(0) = 0$ by the zero-mean assumption. Hence $\nu_d^0 = 0$.

\paragraph{The order-$d^{-1/2}$ term.}
Substituting $\Phi(\bm{a} - \bm{b}) = \sum_{t \in \mathcal{I}_p} \Gamma_t\,\varphi(a_t - b_t)$ and writing both $\varphi$ factors via \eqref{eq:phi_fourier},
\begin{align}
    \nu_d^{-1/2} = \frac{i}{\sqrt{d}} \sum_{t \in \mathcal{I}_p} \Gamma_t \sum_{\xi, \eta \in \mathbb{Z}_k} \frac{\hat\varphi(\xi)\hat\varphi(\eta)}{k^2}
    \left(\sum_{\bm{a}} f(\bm{a}) H_d^{(p)}(\bm{a})\,\omega^{\xi a_{p+1}}\omega^{\eta a_t}\right)
    \left(\sum_{\bm{b}} f(\bm{b}) H_d^{(p)}(\bm{b})\,\omega^{-\xi b_{p+1}}\omega^{-\eta b_t}\right).
\end{align}
By \cref{prop:h_infinite_degree_limit}, the inner sums converge:
\begin{align}
    \sum_{\bm{a}} f(\bm{a}) H_d^{(p)}(\bm{a})\,\omega^{\xi a_{p+1}}\omega^{\eta a_t} & = G^{(p)}_{(\xi,\,p+1),\,(\eta,\,t)} + o_d(1), \label{eq:G_sum_a}\\
    \sum_{\bm{b}} f(\bm{b}) H_d^{(p)}(\bm{b})\,\omega^{-\xi b_{p+1}}\omega^{-\eta b_t} & = G^{(p)}_{(-\xi,\,p+1),\,(-\eta,\,t)} + o_d(1). \label{eq:G_sum_b}
\end{align}
Hence $\nu_d^{-1/2} = \nu_\infty / \sqrt{d} + o_d(d^{-1/2})$ with $\nu_\infty$ as stated.
\end{proof}

Applied to Max-$k$-Cut with $\varphi(c) = \mathbf{1}[c=0] - 1/k$ and the Grover mixer, \cref{prop:energy_asymptotics} puts the QAOA cut fraction \eqref{eq:qaoa_cut_fraction} in the form
\begin{align}
    \text{cut fraction}_{\text{QAOA}} = \frac{k-1}{k} + \frac{C^{\text{QAOA}}_{k, p}(\bm{\gamma}, \bm{\beta})}{\sqrt d} + o\!\left(\frac{1}{\sqrt d}\right),
    \label{eq:qaoa_coefficient}
\end{align}
with $C^{\text{QAOA}}_{k,p}$ read off from $-\nu_{\infty}$, matching the $1/\sqrt d$ scaling of the vector coefficient $C_{m,k}$. Unlike the closed-form vector coefficient, $C^{\text{QAOA}}_{k,p}$ must be extracted from the limiting recursion \eqref{eq:H_limit_recursion}--\eqref{eq:nu_infinity} for each choice of angles, and the angles must themselves be optimized over.

Evaluating the limiting recursion directly costs $O(k^{2p+2})$ in time and memory, which becomes impractical at large $p$. As we describe next, the recursion can instead be recast as the dynamics of a single $k$-level qudit coupled to a finite collection of bosonic modes, generalizing the qubit spin--boson mapping used for Max-Cut \cite{2w94-rymn}. This reformulation represents the limiting state as a matrix product state whose cost grows only mildly with $p$, and it is the engine we use to evaluate $C^{\text{QAOA}}_{k,p}$ at the depths needed for the comparison in \cref{sec:comparison}.

\subsection{Qudit--Boson Representation and MPS Algorithm}
\label{subsec:qudit_boson}

Evaluating the limiting recursion \eqref{eq:H_limit_recursion} by direct iteration is impractical at large $p$. We now describe an equivalent reformulation that replaces this ditstring-indexed recursion with the dynamics of a single $k$-level qudit coupled to $N = (k-1)(p+1)$ harmonic-oscillator modes, represented as a matrix product state (MPS). This construction generalizes the qubit spin--boson mapping of Ref.~\cite{2w94-rymn} from $k=2$ to arbitrary $k$, and the qubit mapping is recovered as the single-channel special case $k = 2$.

Since $\varphi$ is a zero-mean even class function on $\mathbb{Z}_k$ with $\hat\varphi(0)=0$, the edge interaction couples the two endpoints only through the $k-1$ nontrivial clock operators
\begin{equation}
    Z_\xi = \sum_{a \in \mathbb{Z}_k} \omega^{\xi a}\, \ket{a}\bra{a}, \qquad \xi = 1, \ldots, k-1,
    \label{eq:clock_ops}
\end{equation}
which satisfy $Z_\xi^\dagger = Z_{k-\xi}$ and reduce to the Pauli $Z$ for $k=2$. The presence of $k-1$ Fourier channels is the central new feature of the qudit problem compared to the qubit case.

Before constructing the state we reduce the amount of data that has to be computed. The correlator $\bm{G}^{(p)}$ carries two contour indices, each ranging over the $2p+2$ slots of $\mathcal{I}_p$, but these indices are not independent: the bra and ket halves of the Schwinger--Keldysh contour are related by complex conjugation, and reflecting a slot $r$ across the turning point of the contour to $2p+3-r$ leaves the correlator either unchanged or conjugated. Concretely, for all $1 \leq r,s \leq p+1$ and all channels $\xi, \xi'$,
\begin{align}
    G^{(p)}_{(\xi,\,2p+3-r),\,(\xi',\,s)} &= G^{(p)}_{(\xi,\,r),\,(\xi',\,s)}, &
    G^{(p)}_{(\xi,\,r),\,(\xi',\,2p+3-s)} &= \bigl[G^{(p)}_{(\xi,\,r),\,(\xi',\,s)}\bigr]^{*}, &
    G^{(p)}_{(\xi,\,2p+3-r),\,(\xi',\,2p+3-s)} &= \bigl[G^{(p)}_{(\xi,\,r),\,(\xi',\,s)}\bigr]^{*}.
    \label{eq:contour_reflection}
\end{align}
Every entry of $\bm{G}^{(p)}$ is therefore determined by the block that keeps both indices on the bra side, $1 \leq r,s \leq p+1$. We collect this block into the \emph{Hermitian corner}
\begin{equation}
    G^{\mathrm{herm}}_{\xi,r,\xi',s} := G^{(p)}_{(\xi,\,r),\,(k-\xi',\,s)}, \qquad 1 \leq r,s \leq p+1,\quad 1 \leq \xi,\xi' \leq k-1,
    \label{eq:gherm_def}
\end{equation}
where the label reflection $\xi' \mapsto k-\xi'$ pairs each channel with its conjugate $Z_{\xi'}^\dagger = Z_{k-\xi'}$ so that the resulting $(k-1)(p+1) \times (k-1)(p+1)$ matrix is Hermitian. The $\mathbb{Z}_k$ symmetry of the cost makes $G^{\mathrm{herm}}$ positive semidefinite, so it admits a block upper-triangular Cholesky factorization
\begin{equation}
    G^{\mathrm{herm}} = L^\dagger L,
    \label{eq:Lfac}
\end{equation}
with $(k-1) \times (k-1)$ blocks indexed by the time slots $r, s \in \{1, \ldots, p+1\}$. Substituting the reflections \eqref{eq:contour_reflection} into the energy \eqref{eq:nu_infinity} and collecting the two halves of the contour gives the closed form
\begin{equation}
    \nu_\infty = \mathrm{Re}\sum_{\xi,\xi'=1}^{k-1}\frac{\hat\varphi(\xi)^{*}\,\hat\varphi(\xi')}{k^2}\sum_{r=1}^{2p+2} i\,\Gamma_r\,
    G^{(p)}_{(\xi,\,r),\,(\xi',\,p+1)}\;G^{(p)}_{(k-\xi,\,r),\,(k-\xi',\,p+1)},
    \label{eq:nu_qudit}
\end{equation}
in which only the $\xi, \xi' \neq 0$ channels appear because $\hat\varphi(0) = 0$; for $k=2$ this collapses to the single-channel autocorrelation formula of the binary case \cite{2w94-rymn}. The task is thus reduced to computing the Hermitian corner $G^{\mathrm{herm}}$, and the qudit--boson evolution below produces it one time slot at a time.

The qudit--boson state lives in
\begin{equation}
    \mathcal{H} = \mathbb{C}^k \otimes \mathcal{F}_1 \otimes \cdots \otimes \mathcal{F}_N, \qquad N = (k-1)(p+1),
    \label{eq:hilbert}
\end{equation}
with one Fock space $\mathcal{F}_m$ per mode $(\xi, \ell)$, $\xi \in [k-1]$, $\ell \in \{1, \ldots, p+1\}$. The $\ell = p+1$ slot is an auxiliary register kept in the vacuum; it allows the final block column of the correlator $\bm{G}^{(p)}$ to be read off at the end. The construction is initialized at
\begin{equation}
    \ket{\Psi_0} = \ket{+} \otimes \ket{\overline{0}_N},
    \label{eq:psi0}
\end{equation}
with $\ket{+}$ the qudit uniform superposition and $\ket{\overline{0}_N}$ the bosonic vacuum on all $N$ modes. The correlator is seeded with $G^{\mathrm{herm}}_{\cdot,1,\cdot,1} = I_{k-1}$, the first time slot of the Hermitian corner \eqref{eq:gherm_def}, which is positive semidefinite and Cholesky-factorized as in \eqref{eq:Lfac}. Each layer $\ell = 1, \ldots, p$ consists of a qudit-controlled multimode displacement
\begin{equation}
    K(\bm{\alpha}^{(\ell)}) = \sum_{a=0}^{k-1} \ket{a}\bra{a} \otimes \bigotimes_{m=1}^{N} D\!\left(\alpha^{(\ell)}_{a,m}\right),
    \qquad D(\alpha) := \exp\!\big(\alpha\,\hat{c}^\dagger - \alpha^*\,\hat{c}\big),
    \label{eq:Kdef}
\end{equation}
followed by the single-qudit Grover mixer $U_{\mathrm{mix}}(\beta_\ell) = e^{-i\beta_\ell\ket{+}\bra{+}}$. The displacement amplitudes are determined by the running Cholesky factor $L$ via the inverse DFT
\begin{equation}
    \tilde\alpha^{(\ell)}_{m,\xi} = -\frac{i\gamma_\ell}{k}\,\hat\varphi(\xi)\,L_{m,(\xi,\ell)}, \qquad
    \alpha^{(\ell)}_{a,m} = \sum_{\xi=1}^{k-1} \omega^{a\xi}\,\tilde\alpha^{(\ell)}_{m,\xi},
    \label{eq:alpha_fourier}
\end{equation}
generalizing the $\pm\alpha$ spin-controlled displacement of the binary case. After applying layer $\ell$, the new block column of $G^{\mathrm{herm}}$ is recovered from the qudit-resolved annihilation expectations
\begin{equation}
    o_{m,a} = \bra{\Psi}\!\left(\ket{a}\bra{a} \otimes \hat{c}_m\right)\!\ket{\Psi}, \qquad o_{m,\xi} = \sum_a \omega^{-\xi a}\,o_{m,a},
    \label{eq:annihilation_expectations}
\end{equation}
and the new column $g$ solves the linear system $Wg = o$ with $W_{m,(\xi',t)} = -\frac{i}{k}\hat\varphi(\xi')\gamma_t L_{m,(\xi',t)}$, after which $G^{\mathrm{herm}}$ and $L$ are updated by a single block-Cholesky step. After all $p$ layers the full $G^{(p)}$ is reconstructed from $G^{\mathrm{herm}}$ via the Schwinger--Keldysh reflections, and $\nu_\infty$ is evaluated from \eqref{eq:nu_infinity}.

To make this computationally practical, the bosonic state is stored as a $k$-tuple of MPS
\begin{equation}
    \ket{\psi_a} = \bigl(\bra{a} \otimes I^{\otimes N}\bigr)\ket{\Psi}, \qquad a = 0, \ldots, k-1,
    \label{eq:ktuple}
\end{equation}
with each $\mathcal{F}_m$ truncated to Fock dimension $f$ (modes $\ket{\overline{0}},\ldots,\ket{\overline{f-1}}$) and the MPS bond dimension capped at $\chi$ via SVD truncation after each Grover-mixer step. The truncated displacement matrices are built at $O(f^2)$ cost via the Laguerre recurrence.

\begin{algorithm}[htb]
    \caption{$\texttt{AppLayer}$: apply layer $\ell$ to the $k$-tuple MPS and read off the $G^{\mathrm{herm}}$ column}
    \label{alg:apply_layer}
    \KwIn{$k$-tuple state $(\ket{\psi_0},\ldots,\ket{\psi_{k-1}})$; Fourier amplitudes $\tilde\alpha^{(\ell)}$; mixer angle $\beta_\ell$; current $L$}
    \KwOut{new $G^{\mathrm{herm}}$ column $g$; evolved state}
    \For{$a = 0$ \KwTo $k-1$}{
        $\alpha_{a,\cdot} \gets \sum_{\xi=1}^{k-1} \omega^{a\xi}\,\tilde\alpha^{(\ell)}_{\cdot,\xi}$\;
        $\ket{\psi_a} \gets \bigotimes_m D(\alpha_{a,m})\,\ket{\psi_a}$\;
    }
    Apply Grover mixer $U_{\mathrm{mix}}(\beta_\ell)$ across the $k$ components and SVD-truncate to bond dimension $\chi$\;
    $o_{m,a} \gets \bra{\psi_a}\hat{c}_m\ket{\psi_a}$ for $m = 1,\ldots,(k-1)\ell$\;
    $o_{m,\xi} \gets \sum_a \omega^{-\xi a}\,o_{m,a}$\;
    $g \gets$ solve $Wg = o$ with $W_{m,(\xi',t)} = -\tfrac{i}{k}\hat\varphi(\xi')\gamma_t L_{m,(\xi',t)}$\;
    \Return $g$, evolved state\;
\end{algorithm}

\begin{algorithm}[htb]
    \caption{Compute $\widetilde\nu_p(\bm\gamma, \bm\beta; f, \chi)$}
    \label{alg:compute_nu}
    \KwIn{angles $\bm\gamma, \bm\beta$; Fock dimension $f$; bond cap $\chi$}
    \KwOut{$\widetilde\nu_p \approx \nu_\infty$}
    Initialize $\ket{\Psi} \gets \ket{+}\ket{\overline{0}_N}$, $N = (k-1)(p+1)$\;
    Initialize $G^{\mathrm{herm}}, L \gets$ zero tensors of shape $(k-1, p+1, k-1, p+1)$\;
    $G^{\mathrm{herm}}_{\cdot,1,\cdot,1} \gets I_{k-1}$;\quad $L_{\cdot,1,\cdot,1} \gets I_{k-1}$\;
    \For{$\ell = 1$ \KwTo $p$}{
        $\tilde\alpha^{(\ell)} \gets$ from $L$ column $\ell$ via \eqref{eq:alpha_fourier}\;
        $(g, \ket{\Psi}) \gets \texttt{AppLayer}(\ket{\Psi}, \tilde\alpha^{(\ell)}, \beta_\ell, L)$\;
        Update $G^{\mathrm{herm}}$ with column $g$ and $g^\dagger$, diagonal block $I_{k-1}$\;
        $\bm z \gets$ solve $L^\dagger_{(\leq\ell)}\,\bm z = g$\;
        $L_{\cdot,\ell+1,\cdot,\ell+1} \gets (I_{k-1} - \bm z^\dagger\bm z)^{1/2}$\;
    }
    Reconstruct $G^{(p)}$ from $G^{\mathrm{herm}}$ via the contour reflections\;
    $\widetilde\nu_p \gets$ evaluate \eqref{eq:nu_infinity} from $G^{(p)}$\;
    \Return $\widetilde\nu_p$\;
\end{algorithm}

The algorithm is exact up to two independent truncation parameters: the Fock dimension $f$ (convergence is rapid once $f \gtrsim \max_{a,m}|\alpha_{a,m}|^2$, since the displaced states have $O(1)$ mean occupation) and the bond dimension $\chi$ (grows by at most a factor of $k$ per Grover-mixer layer before truncation). In practice both remain modest at depths well beyond the reach of the $O(k^{2p+2})$ brute-force recursion. For Max-$k$-Cut with the Grover mixer, the algorithm is applied at each candidate pair of angles $(\bm\gamma, \bm\beta)$ and the optimal angles are found by gradient-free optimization over the resulting $\widetilde\nu_p$, yielding the optimized coefficient $C^{\text{QAOA}}_{k,p} = \max_{\bm\gamma,\bm\beta}(-\widetilde\nu_p)$.

\section{Comparison of QAOA to the Local Vector Algorithm}
\label{sec:comparison}

We are now in a position to revisit the question that motivated this work. On the same family of $d$-regular, locally tree-like graphs, which algorithm cuts more edges? Both algorithms cut a $(k-1)/k$ fraction of edges for free, and both improve on this baseline only through correlations built up over a neighborhood whose radius is limited by the girth. For the Local Vector algorithm this neighborhood is the ball of radius $m-1 = p$ (with $m = p+1$) that fits inside the girth, whereas for QAOA it is the causal cone of a depth-$p$ circuit. Since the girth requirement is $g \geq 2p+2$ in both cases, we index the two algorithms by the same parameter $p$ and compare them at equal girth. Throughout, we hold fixed the generalized Thompson--Parekh--Marwaha (TPM) root-only rounding as a reference point, since it is the classical baseline algorithm that QAOA and the improved-rounding Local Vector algorithm are supposed to improve upon.

At finite degree $d$ the relevant quantity is the expected cut fraction, which we evaluate at each girth in \cref{subsec:finite_comparison}. As $d$ grows the cut fractions of all three methods collapse onto the random baseline, and the informative quantity becomes the leading coefficient of the $\Theta(1/\sqrt d)$ correction above it, which we study in the large-degree limit of \cref{subsec:infinite_comparison}. In both regimes the qualitative picture is the same. At the smallest girth $(p=1)$ the Local Vector algorithm is ahead of or tied with QAOA, but once the girth is large enough QAOA attains a higher cut fraction. 

\begin{figure}[htbp]
    \centering
    \includegraphics[width=\textwidth]{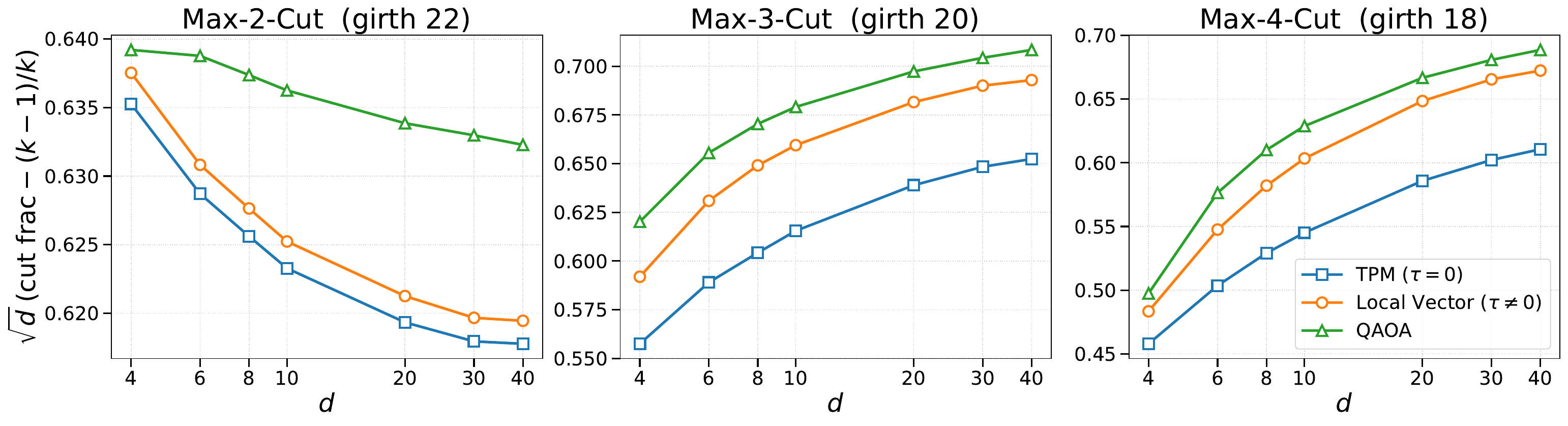}
    \caption{Expected cut fraction as a function of the degree $d$ on $d$-regular graphs of girth $\geq 2p+2$, shown separately for Max-2-Cut, Max-3-Cut, and Max-4-Cut, with each algorithm evaluated at the largest girth for which data is available. Each panel compares the improved-rounding Local Vector algorithm, the generalized Thompson--Parekh--Marwaha root-only rounding (TPM), and QAOA. The Local Vector and TPM values are the one-step tree averages of \cref{subsec:vector_asymptotics}, evaluated from the exact finite-degree expression \eqref{eq:cut_as_cov} (with the scores \eqref{eq:score_split} at the exact overlaps $\rho_L(d,m)$) and the root-only cut probability \eqref{eq:cut_probability} respectively, and the QAOA values are obtained by contracting the depth-$p$ tree tensor network of \cref{subsec:qaoa_tensor_algorithm} at optimized angles. Across all three orders the improved rounding keeps the Local Vector curve above TPM, and QAOA at the girths shown outperforms both. The tabulated values, together with the realized SDP cut fraction, are given in \cref{tab:comparison_k2,tab:comparison_k3,tab:comparison_k4}.}
    \label{fig:finite_comparison}
\end{figure}

\subsection{Finite-Degree Comparison}
\label{subsec:finite_comparison}

At finite degree we compare the three algorithms at the largest girth for which the QAOA tensor-network contraction remains within reach of our memory and runtime budget, so that every method is shown at its best. The Local Vector and TPM cut fractions are the one-step tree averages of \cref{subsec:vector_asymptotics}. For TPM this is the closed-form root-only value obtained by evaluating the bivariate Gaussian cut probability \eqref{eq:cut_probability} at the exact edge overlap $\sigma = \lambda_{\min}(A_m)$, with the integral evaluated by Monte Carlo, while for the Local Vector algorithm it is the $\tau$-optimized one-step neighborhood average \eqref{eq:cut_as_cov}, with the scores \eqref{eq:score_split} evaluated at the exact overlaps $\rho_L(d,m)$ rather than through the leading coefficient \eqref{eq:working_formula}. The QAOA cut fraction is obtained by contracting the depth-$p$ tree tensor network of \cref{subsec:qaoa_tensor_algorithm} at optimized angles $(\bm\gamma, \bm\beta)$.

\begin{figure}[htbp]
    \centering
    \includegraphics[width=\textwidth]{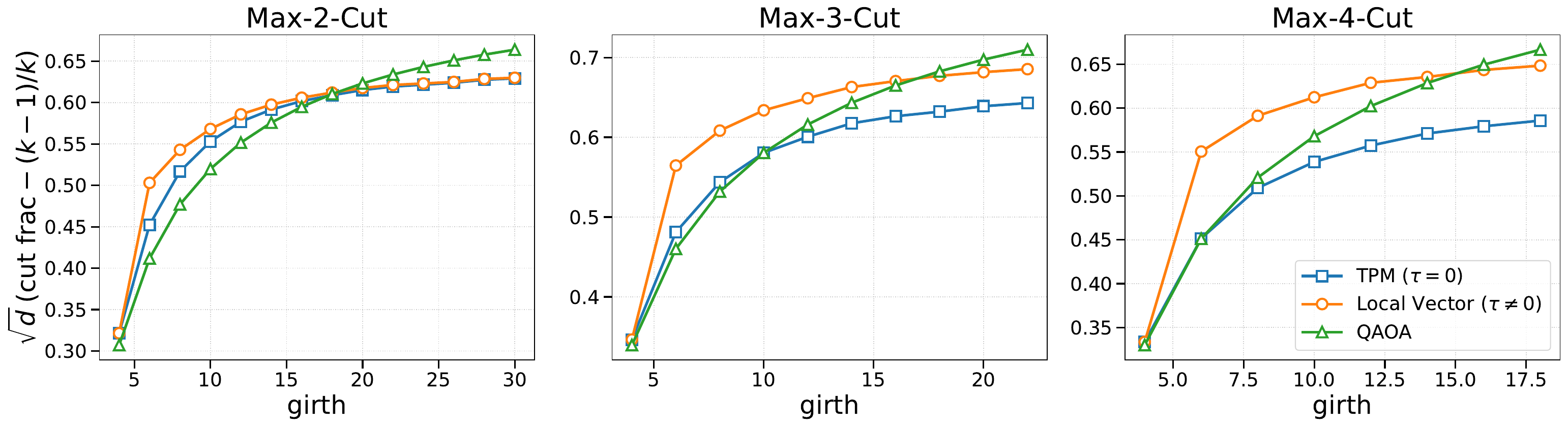}
    \caption{Expected cut fraction as a function of the girth (equivalently the depth $p$, with $m = p+1$) at the fixed degree $d = 20$, shown separately for Max-2-Cut, Max-3-Cut, and Max-4-Cut. Each panel compares the improved-rounding Local Vector algorithm, the generalized Thompson--Parekh--Marwaha root-only rounding (TPM), and QAOA; the values are computed as in \cref{fig:finite_comparison}, here at fixed $d = 20$ and varying girth. This is the finite-degree counterpart of the infinite-degree coefficient plot of \cref{fig:infinite_comparison}, with the same layout but the raw cut fraction on the vertical axis. All three methods improve with girth, and QAOA overtakes the Local Vector algorithm at moderate girth, while the improved rounding keeps the Local Vector curve above TPM.}
    \label{fig:finite_vs_girth}
\end{figure}

\Cref{fig:finite_comparison} shows the expected cut fraction as a function of the degree $d$ for each of Max-2-Cut, Max-3-Cut, and Max-4-Cut. The three panels tell a consistent story. The improved rounding lifts the Local Vector curve above the TPM curve for every $k \geq 2$, confirming that the neighbor-message rule leads to genuine improvement. QAOA, evaluated at the larger girths accessible to the tensor-network contraction, sits above the Local Vector curve across the range of degrees shown. The corresponding cut fraction numbers are collected in \cref{tab:comparison_k2,tab:comparison_k3,tab:comparison_k4} of \cref{app:tables}.

Holding the degree fixed instead isolates the dependence on the girth, which is the finite-degree analog of the infinite-degree crossover studied in \cref{subsec:infinite_comparison}. \Cref{fig:finite_vs_girth} plots the expected cut fraction against the girth (equivalently the depth $p$, with $m = p+1$) at the representative degree $d = 20$, with the same three methods and one panel per Max-$k$-Cut. The picture mirrors the infinite-degree coefficient plot of \cref{fig:infinite_comparison}, differing only in the vertical axis, which is the raw cut fraction here rather than its $1/\sqrt d$ coefficient. As the girth grows all three methods improve, but QAOA improves fastest, so its curve eventually crosses the Local Vector curve, while the improved rounding keeps the Local Vector curve above TPM for every $k \geq 2$. The values plotted are the $d = 20$ entries of \cref{tab:comparison_k2,tab:comparison_k3,tab:comparison_k4} across all available girths. 

\subsection{Infinite-Degree Comparison}
\label{subsec:infinite_comparison}

\begin{figure}[htbp]
    \centering
    \includegraphics[width=\textwidth]{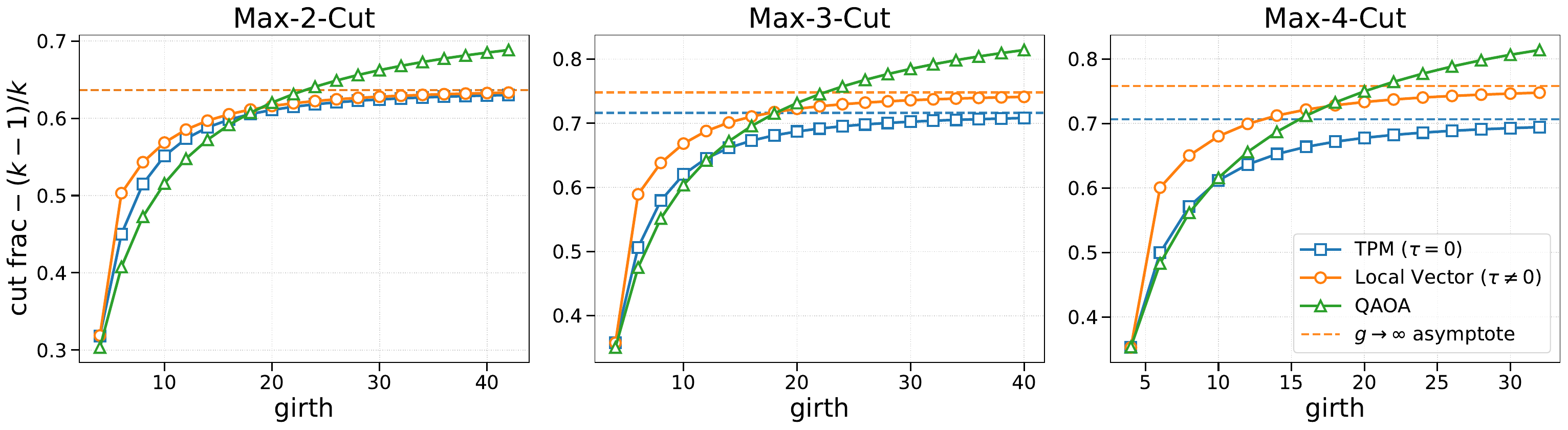}
    \caption{Leading coefficient of the $1/\sqrt d$ correction to the cut fraction above the random baseline $(k-1)/k$ in the infinite-degree limit, as a function of the girth (equivalently the depth $p$, with $m = p+1$), shown separately for Max-2-Cut, Max-3-Cut, and Max-4-Cut. Each panel compares the Local Vector coefficient $C^{\star}_{m,k} = \sup_{\tau \geq 0} C_{m,k}(\tau)$ of \cref{thm:beats_random}, its Thompson--Parekh--Marwaha root-only specialization $\alpha_k c_m$ at $\tau = 0$, and the optimized QAOA coefficient $C^{\text{QAOA}}_{k,p}$ of \eqref{eq:qaoa_coefficient}, computed from the qudit--boson recursion by \cref{alg:compute_nu}. The improved rounding opens a persistent gap between the Local Vector and TPM curves for every $k \geq 3$, and the QAOA curve overtakes the Local Vector curve by girth $g=20$ ($p=9$). The dashed horizontal lines mark the girth-to-infinity ($p \to \infty$) asymptotic coefficients of the Local Vector and TPM rules; for $k = 2$ the two coincide at the common ceiling $2/\pi$ (\cref{prop:k2_ceiling}), while for $k \geq 3$ they remain distinct. The tabulated values are given in \cref{tab:asymptotic_coefficients}.}
    \label{fig:infinite_comparison}
\end{figure}

In the large-degree limit,  each algorithm cuts a fraction
\begin{align}
    \frac{k-1}{k} + \frac{C_{k,p}}{\sqrt d} + o\!\left(\frac{1}{\sqrt d}\right)
\end{align}
of the edges, and the girth-dependent coefficient $C_{k,p}$ is the figure of merit. For the Local Vector algorithm this coefficient is $C^{\star}_{m,k} = \sup_{\tau \geq 0} C_{m,k}(\tau)$ with $m = p+1$, obtained by optimizing the asymptotic coefficient in \eqref{eq:working_formula} over the rounding strength, as established in \cref{thm:beats_random}, and the TPM reference is its $\tau = 0$ specialization $\alpha_k c_m$. The QAOA coefficient is $C^{\text{QAOA}}_{k,p}$ of \eqref{eq:qaoa_coefficient}, read off from the limiting value $-\nu_\infty$ of the qudit--boson recursion and extracted numerically at optimized angles by \cref{alg:compute_nu}.

\Cref{fig:infinite_comparison} plots these coefficients against the girth, again in one panel per Max-$k$-Cut. The improved rounding shows up here as a strict and persistent gap between the Local Vector and TPM curves for every $k \geq 3$ and every $p \geq 2$, and the gap does not close as the girth grows, so the improvement survives into the deep-circuit limit. The QAOA curve starts below or level with the Local Vector curve at $p = 1$ but climbs more steeply and eventually crosses it. Past the crossover the quantum algorithm has the larger $1/\sqrt d$ coefficient at equal girth, which is the precise sense in which we identify a regime of quantum advantage. The numerical values behind the plotted curves are tabulated in \cref{tab:asymptotic_coefficients} of \cref{app:tables}.

\section{Conclusion and Future Work}
\label{sec:conclusion}

In this paper we have developed both a state-of-the-art provable classical algorithm for Max-$k$-Cut, and the tools needed to analyze the performance of QAOA. On the classical side, the Local Vector algorithm assigns distance-shell vectors determined by a single small eigenvector computation and rounds them against $k$ random directions, and the one-step improved rounding strictly raises the leading $1/\sqrt d$ coefficient for every $k \geq 2$; because this improvement is additive and holds exactly where the Frieze--Jerrum guarantee is vacuous (\cref{prop:fj_below_baseline}), it gives the best provable cut-fraction guarantee we know of among efficient classical algorithms on regular graphs. On the quantum side, the qudit--boson reformulation opens up the infinite-degree limit of QAOA, and the degree-independent tree tensor-network contraction reaches the depths and degrees needed for a head-to-head comparison. Placing the two side by side across $(k, d, p)$, the Local Vector algorithm is ahead of or tied with QAOA at the shallowest depth $p=1$ for every $k$, while QAOA overtakes the Local Vector algorithm once the girth is large enough, leading to a quantum advantage for graph optimization at moderate girth.

Three natural questions remain open. The first concerns the ceiling against which all of these coefficients should be measured. For Max-Cut, Dembo, Montanari, and Sen \cite{Dembo_2017} showed that the optimal cut fraction on random $d$-regular graphs is $1/2 + P_*/\sqrt d + o(1/\sqrt d)$ with $P_* \approx 0.7632$ the Parisi value of the Sherrington--Kirkpatrick model, and Sen \cite{sen2017optimizationsparserandomhypergraphs} extended this Gaussian-optimization picture to Max-$k$-Cut, where the large-degree limit is governed by the ground-state energy of the corresponding mean-field Potts antiferromagnet, a Potts--Parisi variational problem. Computing this limit numerically for each $k$ would fix the optimal $1/\sqrt d$ coefficient $P^{\star}_{k}$ and turn our tables into an absolute comparison, quantifying how far both the Local Vector algorithm and QAOA sit below the optimum. 

The second open question is algorithmic. For the Sherrington--Kirkpatrick model, Montanari \cite{montanari2019optimizationsherringtonkirkpatrickhamiltonian} gave a message-passing algorithm that reaches the Parisi optimum whenever no overlap gap is present, and El Alaoui, Montanari, and Sellke \cite{alaoui2023localalgorithmsmaximumcut} constructed a local message-passing algorithm that achieves the near-optimal cut value on locally tree-like regular graphs for Max-Cut. It is natural to ask whether a message-passing scheme of this kind can be designed for Max-$k$-Cut and whether it provably attains the Potts--Parisi optimum $P^{\star}_k$, either matching or surpassing the guarantees established here. A positive answer would give the first classical algorithm provably optimal at leading order for Max-$k$-Cut. 

Finally, another open problem is how the Overlap Gap 
Property~\cite{overlap_gap_property,qaoa_needs_whole_graph_typical_case,limitations_local_quantum_algorithms} which obstructs the performance of QAOA, can be generalized beyond binary alphabets. Even an appropriate definition of the overlap gap property is missing in this setting. 

\section*{Acknowledgments}
R.S. thanks Ojas Parekh for inspiring discussions that helped frame the scope of this project.
The JPMorganChase team thanks Rob Otter for the executive support of the work and all of their colleagues at the Global Technology Applied Research center for helpful feedback and support throughout the project.

\paragraph{AI statement:} Large language models 
were used to develop the code for numerical simulation whose results are presented in \cref{sec:comparison}, to help with derivation of \cref{thm:beats_random}, and for editing the manuscript. 

\clearpage
\appendix
\section{Data for Cut Fraction Comparison}
\label{app:tables}

For completeness we record the numerical values underlying the comparison figures of \cref{sec:comparison}. \Cref{tab:asymptotic_coefficients} lists the infinite-degree $1/\sqrt d$ coefficients plotted in \cref{fig:infinite_comparison}.

\Cref{tab:comparison_k2,tab:comparison_k3,tab:comparison_k4} list the complete finite-degree cut fractions plotted in \cref{fig:finite_comparison}, one table per cut order $k$ and spanning every degree $d$ and depth $p$ computed. We also report the SDP as a further reference point, but with an important distinction. The Frieze--Jerrum and Goemans--Williamson guarantees are worst-case \emph{lower bounds} (\cref{subsec:sdp}). They therefore say nothing useful about the cut fraction actually attained on a typical instance. Instead, we compute the \emph{realized} SDP cut fraction by solving the SDP and rounding over $16$ independent random $d$-regular graphs on $n = 500$ vertices. Because interior-point SDP solving of this relaxation scales as $O(n^{3.5})$ in the number of vertices, this approach is limited to moderate graph sizes. As there is no dependence on $p$, we quote the value once per $k,d$ value.  

\begin{table}[htbp]
\centering
\caption{Coefficient of the $1/\sqrt d$ correction to the cut fraction above the random baseline $(k-1)/k$ in the infinite-degree limit, as a function of the QAOA/vector depth $p$ (girth $2p+2$, $m=p+1$). Columns are the Local Vector coefficient $C^{\star}_{m,k}=\sup_{\tau\geq0}C_{m,k}(\tau)$, the generalized Thompson--Parekh--Marwaha root-only coefficient $\alpha_k c_m$ ($\tau=0$), and the optimized QAOA coefficient $C^{\text{QAOA}}_{k,p}$. Each sub-table runs to the deepest depth reached by the QAOA optimization for that order $k$. At $p=1$ the improved rounding is unavailable, so the LV and TPM entries coincide. LV entries are subject to Monte Carlo error. These are the values plotted in \cref{fig:infinite_comparison}.}
\label{tab:asymptotic_coefficients}
\footnotesize
\setlength{\tabcolsep}{4pt}
\begin{minipage}[t]{0.33\textwidth}
\centering
\begin{tabular}[t]{c|ccc}
\toprule
\multicolumn{4}{c}{Max-2-Cut} \\
$p$ & LV & TPM & QAOA \\
\midrule
1 & 0.319 & 0.318 & 0.303 \\
2 & 0.503 & 0.450 & 0.408 \\
3 & 0.543 & 0.515 & 0.473 \\
4 & 0.569 & 0.551 & 0.516 \\
5 & 0.585 & 0.574 & 0.548 \\
6 & 0.597 & 0.588 & 0.572 \\
7 & 0.605 & 0.598 & 0.591 \\
8 & 0.611 & 0.605 & 0.607 \\
9 & 0.616 & 0.611 & 0.620 \\
10 & 0.619 & 0.615 & 0.631 \\
11 & 0.622 & 0.618 & 0.641 \\
12 & 0.625 & 0.621 & 0.649 \\
13 & 0.626 & 0.623 & 0.656 \\
14 & 0.628 & 0.624 & 0.662 \\
15 & 0.629 & 0.626 & 0.668 \\
16 & 0.630 & 0.627 & 0.673 \\
17 & 0.631 & 0.628 & 0.677 \\
18 & 0.632 & 0.629 & 0.681 \\
19 & 0.633 & 0.630 & 0.685 \\
20 & 0.633 & 0.630 & 0.688 \\
\bottomrule
\end{tabular}
\end{minipage}%
\begin{minipage}[t]{0.33\textwidth}
\centering
\begin{tabular}[t]{c|ccc}
\toprule
\multicolumn{4}{c}{Max-3-Cut} \\
$p$ & LV & TPM & QAOA \\
\midrule
1 & 0.358 & 0.358 & 0.350 \\
2 & 0.589 & 0.506 & 0.475 \\
3 & 0.638 & 0.579 & 0.552 \\
4 & 0.668 & 0.620 & 0.604 \\
5 & 0.688 & 0.645 & 0.642 \\
6 & 0.701 & 0.662 & 0.672 \\
7 & 0.711 & 0.673 & 0.696 \\
8 & 0.717 & 0.681 & 0.715 \\
9 & 0.723 & 0.687 & 0.732 \\
10 & 0.727 & 0.692 & 0.745 \\
11 & 0.730 & 0.695 & 0.757 \\
12 & 0.732 & 0.698 & 0.768 \\
13 & 0.734 & 0.701 & 0.777 \\
14 & 0.736 & 0.702 & 0.785 \\
15 & 0.737 & 0.704 & 0.792 \\
16 & 0.739 & 0.705 & 0.798 \\
17 & 0.740 & 0.706 & 0.804 \\
18 & 0.740 & 0.707 & 0.809 \\
19 & 0.741 & 0.708 & 0.814 \\
\bottomrule
\end{tabular}
\end{minipage}%
\begin{minipage}[t]{0.33\textwidth}
\centering
\begin{tabular}[t]{c|ccc}
\toprule
\multicolumn{4}{c}{Max-4-Cut} \\
$p$ & LV & TPM & QAOA \\
\midrule
1 & 0.353 & 0.353 & 0.353 \\
2 & 0.601 & 0.500 & 0.483 \\
3 & 0.650 & 0.571 & 0.562 \\
4 & 0.680 & 0.612 & 0.616 \\
5 & 0.699 & 0.636 & 0.656 \\
6 & 0.712 & 0.653 & 0.687 \\
7 & 0.722 & 0.664 & 0.712 \\
8 & 0.728 & 0.672 & 0.733 \\
9 & 0.733 & 0.678 & 0.750 \\
10 & 0.737 & 0.682 & 0.765 \\
11 & 0.740 & 0.686 & 0.777 \\
12 & 0.743 & 0.689 & 0.788 \\
13 & 0.745 & 0.691 & 0.798 \\
14 & 0.746 & 0.693 & 0.807 \\
15 & 0.748 & 0.694 & 0.814 \\
\bottomrule
\end{tabular}
\end{minipage}
\end{table}

\begin{sidewaystable}
\centering
\caption{Complete finite-degree cut fractions for Max-2-Cut on $d$-regular graphs of girth $\geq 2p+2$: Local Vector (LV, improved rounding), generalized Thompson--Parekh--Marwaha root-only rounding (TPM, $\tau=0$), and optimized QAOA, at every depth $p$ computed. The realized Frieze--Jerrum/GW SDP cut fraction (no $p$ dependence) is shown in each degree header. At $p=1$ the improved rounding is unavailable, so LV coincides with TPM; a dash (--) marks depths not computed for that method. LV and TPM entries are subject to Monte Carlo error.}
\label{tab:comparison_k2}
\scriptsize
\setlength{\tabcolsep}{4pt}
\begin{minipage}[t]{0.24\linewidth}
\centering
\begin{tabular}[t]{c|ccc}
\toprule
\multicolumn{4}{c}{$d=4$\ (\text{SDP}=0.850)} \\
\midrule
$p$ & LV & TPM & QAOA \\
\midrule
1 & 0.667 & 0.667 & 0.662 \\
2 & 0.755 & 0.730 & 0.716 \\
3 & 0.775 & 0.762 & 0.749 \\
4 & 0.789 & 0.781 & 0.769 \\
5 & 0.797 & 0.793 & 0.784 \\
6 & 0.805 & 0.801 & 0.795 \\
7 & 0.810 & 0.807 & 0.803 \\
8 & 0.814 & 0.812 & 0.810 \\
9 & 0.816 & 0.815 & 0.815 \\
10 & 0.819 & 0.818 & 0.820 \\
11 & 0.821 & 0.820 & 0.823 \\
12 & 0.822 & 0.821 & 0.826 \\
13 & 0.823 & 0.823 & 0.829 \\
14 & 0.824 & 0.824 & 0.832 \\
\midrule
\multicolumn{4}{c}{$d=6$\ (\text{SDP}=0.784)} \\
\midrule
$p$ & LV & TPM & QAOA \\
\midrule
1 & 0.634 & 0.634 & 0.629 \\
2 & 0.707 & 0.687 & 0.673 \\
3 & 0.723 & 0.713 & 0.699 \\
4 & 0.734 & 0.728 & 0.716 \\
5 & 0.742 & 0.738 & 0.729 \\
6 & 0.747 & 0.744 & 0.739 \\
7 & 0.751 & 0.749 & 0.746 \\
8 & 0.754 & 0.753 & 0.752 \\
9 & 0.756 & 0.755 & 0.757 \\
10 & 0.758 & 0.757 & 0.761 \\
11 & 0.759 & 0.758 & 0.764 \\
12 & 0.760 & 0.759 & 0.767 \\
13 & 0.761 & 0.761 & 0.769 \\
14 & 0.762 & 0.762 & 0.772 \\
\bottomrule
\end{tabular}
\end{minipage}%
\begin{minipage}[t]{0.24\linewidth}
\centering
\begin{tabular}[t]{c|ccc}
\toprule
\multicolumn{4}{c}{$d=8$\ (\text{SDP}=0.745)} \\
\midrule
$p$ & LV & TPM & QAOA \\
\midrule
1 & 0.615 & 0.615 & 0.611 \\
2 & 0.678 & 0.661 & 0.648 \\
3 & 0.693 & 0.684 & 0.671 \\
4 & 0.702 & 0.697 & 0.686 \\
5 & 0.709 & 0.705 & 0.697 \\
6 & 0.712 & 0.710 & 0.706 \\
7 & 0.716 & 0.714 & 0.712 \\
8 & 0.719 & 0.717 & 0.717 \\
9 & 0.720 & 0.719 & 0.722 \\
10 & 0.722 & 0.721 & 0.725 \\
11 & 0.723 & 0.723 & 0.728 \\
12 & 0.724 & 0.724 & 0.731 \\
13 & 0.725 & 0.724 & 0.733 \\
14 & 0.726 & 0.725 & 0.735 \\
\midrule
\multicolumn{4}{c}{$d=10$\ (\text{SDP}=0.718)} \\
\midrule
$p$ & LV & TPM & QAOA \\
\midrule
1 & 0.602 & 0.602 & 0.598 \\
2 & 0.660 & 0.643 & 0.632 \\
3 & 0.672 & 0.664 & 0.652 \\
4 & 0.681 & 0.676 & 0.666 \\
5 & 0.686 & 0.683 & 0.676 \\
6 & 0.690 & 0.688 & 0.683 \\
7 & 0.693 & 0.692 & 0.689 \\
8 & 0.695 & 0.694 & 0.694 \\
9 & 0.696 & 0.696 & 0.698 \\
10 & 0.698 & 0.697 & 0.701 \\
11 & 0.698 & 0.698 & 0.704 \\
12 & 0.699 & 0.699 & 0.706 \\
13 & 0.700 & 0.700 & 0.709 \\
14 & 0.701 & 0.700 & 0.710 \\
\bottomrule
\end{tabular}
\end{minipage}%
\begin{minipage}[t]{0.24\linewidth}
\centering
\begin{tabular}[t]{c|ccc}
\toprule
\multicolumn{4}{c}{$d=20$\ (\text{SDP}=0.653)} \\
\midrule
$p$ & LV & TPM & QAOA \\
\midrule
1 & 0.572 & 0.572 & 0.569 \\
2 & 0.612 & 0.601 & 0.592 \\
3 & 0.621 & 0.616 & 0.607 \\
4 & 0.627 & 0.624 & 0.616 \\
5 & 0.631 & 0.629 & 0.623 \\
6 & 0.634 & 0.632 & 0.629 \\
7 & 0.635 & 0.634 & 0.633 \\
8 & 0.637 & 0.636 & 0.636 \\
9 & 0.638 & 0.638 & 0.639 \\
10 & 0.638 & 0.638 & 0.642 \\
11 & 0.640 & 0.640 & 0.644 \\
12 & 0.640 & 0.640 & 0.646 \\
13 & 0.641 & 0.640 & 0.647 \\
14 & 0.641 & 0.641 & 0.648 \\
\midrule
\multicolumn{4}{c}{$d=30$\ (\text{SDP}=0.624)} \\
\midrule
$p$ & LV & TPM & QAOA \\
\midrule
1 & 0.558 & 0.558 & 0.556 \\
2 & 0.592 & 0.582 & 0.575 \\
3 & 0.599 & 0.594 & 0.587 \\
4 & 0.603 & 0.600 & 0.595 \\
5 & 0.606 & 0.605 & 0.600 \\
6 & 0.608 & 0.607 & 0.605 \\
7 & 0.611 & 0.610 & 0.608 \\
8 & 0.612 & 0.611 & 0.611 \\
9 & 0.613 & 0.613 & 0.614 \\
10 & 0.613 & 0.612 & 0.616 \\
11 & 0.614 & 0.614 & 0.617 \\
12 & 0.614 & 0.614 & 0.619 \\
13 & 0.614 & 0.614 & 0.620 \\
14 & 0.615 & 0.615 & 0.621 \\
\bottomrule
\end{tabular}
\end{minipage}%
\begin{minipage}[t]{0.24\linewidth}
\centering
\begin{tabular}[t]{c|ccc}
\toprule
\multicolumn{4}{c}{$d=40$\ (\text{SDP}=0.606)} \\
\midrule
$p$ & LV & TPM & QAOA \\
\midrule
1 & 0.550 & 0.550 & 0.548 \\
2 & 0.579 & 0.571 & 0.565 \\
3 & 0.586 & 0.582 & 0.575 \\
4 & 0.590 & 0.587 & 0.582 \\
5 & 0.592 & 0.591 & 0.587 \\
6 & 0.594 & 0.594 & 0.591 \\
7 & 0.595 & 0.594 & 0.594 \\
8 & 0.597 & 0.596 & 0.596 \\
9 & 0.598 & 0.598 & 0.598 \\
10 & 0.598 & 0.598 & 0.600 \\
11 & 0.598 & 0.598 & 0.601 \\
12 & 0.599 & 0.598 & 0.603 \\
13 & 0.599 & 0.599 & 0.603 \\
14 & 0.599 & 0.599 & 0.595 \\
\bottomrule
\end{tabular}
\end{minipage}%
\end{sidewaystable}

\begin{sidewaystable}
\centering
\caption{Complete finite-degree cut fractions for Max-3-Cut on $d$-regular graphs of girth $\geq 2p+2$: Local Vector (LV, improved rounding), generalized Thompson--Parekh--Marwaha root-only rounding (TPM, $\tau=0$), and optimized QAOA, at every depth $p$ computed. The realized Frieze--Jerrum/GW SDP cut fraction (no $p$ dependence) is shown in each degree header. At $p=1$ the improved rounding is unavailable, so LV coincides with TPM; a dash (--) marks depths not computed for that method. LV and TPM entries are subject to Monte Carlo error.}
\label{tab:comparison_k3}
\scriptsize
\setlength{\tabcolsep}{4pt}
\begin{minipage}[t]{0.24\linewidth}
\centering
\begin{tabular}[t]{c|ccc}
\toprule
\multicolumn{4}{c}{$d=4$\ (\text{SDP}=0.836)} \\
\midrule
$p$ & LV & TPM & QAOA \\
\midrule
1 & 0.836 & 0.836 & 0.833 \\
2 & 0.930 & 0.888 & 0.891 \\
3 & 0.944 & 0.912 & 0.921 \\
4 & 0.949 & 0.925 & 0.940 \\
5 & 0.954 & 0.932 & 0.952 \\
6 & 0.957 & 0.937 & 0.961 \\
7 & 0.960 & 0.941 & 0.968 \\
8 & 0.962 & 0.944 & 0.973 \\
9 & 0.963 & 0.945 & 0.977 \\
10 & 0.964 & 0.947 & 0.980 \\
\midrule
\multicolumn{4}{c}{$d=6$\ (\text{SDP}=0.836)} \\
\midrule
$p$ & LV & TPM & QAOA \\
\midrule
1 & 0.806 & 0.806 & 0.803 \\
2 & 0.888 & 0.853 & 0.851 \\
3 & 0.902 & 0.875 & 0.878 \\
4 & 0.909 & 0.887 & 0.895 \\
5 & 0.915 & 0.895 & 0.907 \\
6 & 0.918 & 0.899 & 0.917 \\
7 & 0.921 & 0.903 & 0.924 \\
8 & 0.923 & 0.905 & 0.930 \\
9 & 0.924 & 0.907 & 0.934 \\
10 & 0.925 & 0.908 & 0.938 \\
\bottomrule
\end{tabular}
\end{minipage}%
\begin{minipage}[t]{0.24\linewidth}
\centering
\begin{tabular}[t]{c|ccc}
\toprule
\multicolumn{4}{c}{$d=8$\ (\text{SDP}=0.836)} \\
\midrule
$p$ & LV & TPM & QAOA \\
\midrule
1 & 0.788 & 0.788 & 0.785 \\
2 & 0.861 & 0.831 & 0.827 \\
3 & 0.875 & 0.851 & 0.851 \\
4 & 0.882 & 0.862 & 0.867 \\
5 & 0.887 & 0.869 & 0.878 \\
6 & 0.890 & 0.873 & 0.887 \\
7 & 0.893 & 0.876 & 0.894 \\
8 & 0.895 & 0.879 & 0.899 \\
9 & 0.896 & 0.880 & 0.904 \\
10 & 0.897 & 0.882 & 0.908 \\
\midrule
\multicolumn{4}{c}{$d=10$\ (\text{SDP}=0.836)} \\
\midrule
$p$ & LV & TPM & QAOA \\
\midrule
1 & 0.776 & 0.776 & 0.773 \\
2 & 0.842 & 0.815 & 0.811 \\
3 & 0.854 & 0.833 & 0.832 \\
4 & 0.861 & 0.843 & 0.847 \\
5 & 0.866 & 0.850 & 0.858 \\
6 & 0.870 & 0.854 & 0.866 \\
7 & 0.872 & 0.857 & 0.872 \\
8 & 0.874 & 0.859 & 0.877 \\
9 & 0.875 & 0.861 & 0.881 \\
10 & 0.876 & 0.862 & 0.885 \\
\bottomrule
\end{tabular}
\end{minipage}%
\begin{minipage}[t]{0.24\linewidth}
\centering
\begin{tabular}[t]{c|ccc}
\toprule
\multicolumn{4}{c}{$d=20$\ (\text{SDP}=0.800)} \\
\midrule
$p$ & LV & TPM & QAOA \\
\midrule
1 & 0.744 & 0.744 & 0.743 \\
2 & 0.793 & 0.774 & 0.770 \\
3 & 0.803 & 0.788 & 0.786 \\
4 & 0.808 & 0.796 & 0.796 \\
5 & 0.812 & 0.801 & 0.804 \\
6 & 0.815 & 0.805 & 0.811 \\
7 & 0.817 & 0.807 & 0.815 \\
8 & 0.818 & 0.808 & 0.819 \\
9 & 0.819 & 0.810 & 0.823 \\
10 & 0.820 & 0.810 & 0.825 \\
\midrule
\multicolumn{4}{c}{$d=30$\ (\text{SDP}=0.776)} \\
\midrule
$p$ & LV & TPM & QAOA \\
\midrule
1 & 0.730 & 0.730 & 0.729 \\
2 & 0.771 & 0.755 & 0.751 \\
3 & 0.779 & 0.767 & 0.764 \\
4 & 0.784 & 0.774 & 0.773 \\
5 & 0.787 & 0.778 & 0.780 \\
6 & 0.789 & 0.781 & 0.785 \\
7 & 0.791 & 0.782 & 0.789 \\
8 & 0.792 & 0.784 & 0.792 \\
9 & 0.793 & 0.785 & 0.795 \\
10 & 0.793 & 0.786 & 0.798 \\
\bottomrule
\end{tabular}
\end{minipage}%
\begin{minipage}[t]{0.24\linewidth}
\centering
\begin{tabular}[t]{c|ccc}
\toprule
\multicolumn{4}{c}{$d=40$\ (\text{SDP}=0.762)} \\
\midrule
$p$ & LV & TPM & QAOA \\
\midrule
1 & 0.722 & 0.722 & 0.721 \\
2 & 0.757 & 0.744 & 0.740 \\
3 & 0.764 & 0.754 & 0.752 \\
4 & 0.769 & 0.760 & 0.760 \\
5 & 0.771 & 0.764 & 0.765 \\
6 & 0.773 & 0.766 & 0.770 \\
7 & 0.775 & 0.768 & 0.773 \\
8 & 0.776 & 0.769 & 0.776 \\
9 & 0.776 & 0.770 & 0.779 \\
10 & 0.777 & 0.771 & 0.791 \\
\bottomrule
\end{tabular}
\end{minipage}%
\end{sidewaystable}

\begin{sidewaystable}
\centering
\caption{Complete finite-degree cut fractions for Max-4-Cut on $d$-regular graphs of girth $\geq 2p+2$: Local Vector (LV, improved rounding), generalized Thompson--Parekh--Marwaha root-only rounding (TPM, $\tau=0$), and optimized QAOA, at every depth $p$ computed. The realized Frieze--Jerrum/GW SDP cut fraction (no $p$ dependence) is shown in each degree header. At $p=1$ the improved rounding is unavailable, so LV coincides with TPM; a dash (--) marks depths not computed for that method. LV and TPM entries are subject to Monte Carlo error.}
\label{tab:comparison_k4}
\scriptsize
\setlength{\tabcolsep}{4pt}
\begin{minipage}[t]{0.24\linewidth}
\centering
\begin{tabular}[t]{c|ccc}
\toprule
\multicolumn{4}{c}{$d=4$\ (\text{SDP}=0.858)} \\
\midrule
$p$ & LV & TPM & QAOA \\
\midrule
1 & 0.904 & 0.904 & 0.901 \\
2 & 0.980 & 0.944 & 0.953 \\
3 & 0.986 & 0.960 & 0.977 \\
4 & 0.989 & 0.968 & 0.988 \\
5 & 0.991 & 0.973 & 0.993 \\
6 & 0.990 & 0.976 & 0.996 \\
7 & 0.991 & 0.978 & 0.998 \\
8 & 0.992 & 0.979 & 0.999 \\
9 & 0.992 & 0.980 & -- \\
10 & 0.993 & 0.981 & -- \\
\midrule
\multicolumn{4}{c}{$d=6$\ (\text{SDP}=0.858)} \\
\midrule
$p$ & LV & TPM & QAOA \\
\midrule
1 & 0.879 & 0.879 & 0.877 \\
2 & 0.953 & 0.918 & 0.923 \\
3 & 0.963 & 0.934 & 0.946 \\
4 & 0.968 & 0.943 & 0.961 \\
5 & 0.971 & 0.948 & 0.970 \\
6 & 0.971 & 0.951 & 0.977 \\
7 & 0.972 & 0.954 & 0.982 \\
8 & 0.974 & 0.956 & 0.985 \\
9 & 0.974 & 0.957 & -- \\
10 & 0.975 & 0.958 & -- \\
\bottomrule
\end{tabular}
\end{minipage}%
\begin{minipage}[t]{0.24\linewidth}
\centering
\begin{tabular}[t]{c|ccc}
\toprule
\multicolumn{4}{c}{$d=8$\ (\text{SDP}=0.858)} \\
\midrule
$p$ & LV & TPM & QAOA \\
\midrule
1 & 0.864 & 0.864 & 0.862 \\
2 & 0.933 & 0.900 & 0.903 \\
3 & 0.943 & 0.916 & 0.925 \\
4 & 0.949 & 0.924 & 0.939 \\
5 & 0.952 & 0.930 & 0.949 \\
6 & 0.953 & 0.933 & 0.956 \\
7 & 0.955 & 0.935 & 0.961 \\
8 & 0.956 & 0.937 & 0.966 \\
9 & 0.956 & 0.938 & -- \\
10 & 0.957 & 0.939 & -- \\
\midrule
\multicolumn{4}{c}{$d=10$\ (\text{SDP}=0.858)} \\
\midrule
$p$ & LV & TPM & QAOA \\
\midrule
1 & 0.853 & 0.853 & 0.851 \\
2 & 0.917 & 0.886 & 0.888 \\
3 & 0.927 & 0.902 & 0.909 \\
4 & 0.933 & 0.910 & 0.922 \\
5 & 0.937 & 0.915 & 0.932 \\
6 & 0.938 & 0.919 & 0.939 \\
7 & 0.939 & 0.921 & 0.944 \\
8 & 0.941 & 0.922 & 0.949 \\
9 & 0.942 & 0.924 & -- \\
10 & 0.942 & 0.925 & -- \\
\bottomrule
\end{tabular}
\end{minipage}%
\begin{minipage}[t]{0.24\linewidth}
\centering
\begin{tabular}[t]{c|ccc}
\toprule
\multicolumn{4}{c}{$d=20$\ (\text{SDP}=0.858)} \\
\midrule
$p$ & LV & TPM & QAOA \\
\midrule
1 & 0.825 & 0.825 & 0.824 \\
2 & 0.873 & 0.851 & 0.851 \\
3 & 0.882 & 0.864 & 0.866 \\
4 & 0.887 & 0.870 & 0.877 \\
5 & 0.891 & 0.875 & 0.885 \\
6 & 0.892 & 0.878 & 0.891 \\
7 & 0.894 & 0.880 & 0.895 \\
8 & 0.895 & 0.881 & 0.899 \\
9 & 0.896 & 0.882 & -- \\
10 & 0.896 & 0.883 & -- \\
\midrule
\multicolumn{4}{c}{$d=30$\ (\text{SDP}=0.847)} \\
\midrule
$p$ & LV & TPM & QAOA \\
\midrule
1 & 0.811 & 0.811 & 0.811 \\
2 & 0.852 & 0.834 & 0.833 \\
3 & 0.860 & 0.845 & 0.847 \\
4 & 0.864 & 0.851 & 0.855 \\
5 & 0.867 & 0.854 & 0.862 \\
6 & 0.869 & 0.857 & 0.867 \\
7 & 0.870 & 0.858 & 0.871 \\
8 & 0.871 & 0.860 & 0.874 \\
9 & 0.872 & 0.861 & -- \\
10 & 0.872 & 0.861 & -- \\
\bottomrule
\end{tabular}
\end{minipage}%
\begin{minipage}[t]{0.24\linewidth}
\centering
\begin{tabular}[t]{c|ccc}
\toprule
\multicolumn{4}{c}{$d=40$\ (\text{SDP}=0.835)} \\
\midrule
$p$ & LV & TPM & QAOA \\
\midrule
1 & 0.804 & 0.804 & 0.803 \\
2 & 0.840 & 0.824 & 0.823 \\
3 & 0.846 & 0.833 & 0.834 \\
4 & 0.851 & 0.839 & 0.842 \\
5 & 0.853 & 0.842 & 0.848 \\
6 & 0.854 & 0.844 & 0.852 \\
7 & 0.856 & 0.846 & 0.856 \\
8 & 0.856 & 0.847 & 0.859 \\
9 & 0.857 & 0.848 & -- \\
10 & 0.858 & 0.848 & -- \\
\bottomrule
\end{tabular}
\end{minipage}%
\end{sidewaystable}

\clearpage
\printbibliography

\part*{Disclaimer}

This paper was prepared for informational purposes with contributions from the Global Technology Applied Research center of JPMorgan Chase \& Co. This paper is not a product of the Research Department of JPMorgan Chase \& Co. or its affiliates. Neither JPMorgan Chase \& Co. nor any of its affiliates makes any explicit or implied representation or warranty and none of them accept any liability in connection with this paper, including, without limitation, with respect to the completeness, accuracy, or reliability of the information contained herein and the potential legal, compliance, tax, or accounting effects thereof. This document is not intended as investment research or investment advice, or as a recommendation, offer, or solicitation for the purchase or sale of any security, financial instrument, financial product or service, or to be used in any way for evaluating the merits of participating in any transaction.

\end{document}